\begin{document}

\title{Optimizing Sparse SYK}
\author[1,2,3]{Matthew Ding}
\email{mcding@stanford.edu}
\orcid{0000-0001-7674-0548}

\author[4,3,2]{Robbie King}
\orcid{0000-0002-8152-6340}

\author[5,6]{Bobak T.\ Kiani}
\orcid{0000-0003-1477-0308}

\author[2]{Eric R.\ Anschuetz}
\orcid{0000-0002-9825-3692}

\affil[1]{Stanford University, Stanford, CA 94305, USA}
\affil[2]{California Institute of Technology, Pasadena, CA 91125, USA}
\affil[3]{University of California, Berkeley, CA 94720, USA}
\affil[4]{Google Quantum AI, Venice, CA 90291, USA}
\affil[5]{Bowdoin College, Brunswick, ME 04011, USA}
\affil[6]{Harvard University, Cambridge, MA 02138, USA}

\begin{abstract}
    Finding the ground state of strongly-interacting fermionic systems is often the prerequisite for fully understanding both quantum chemistry and condensed matter systems. The Sachdev--Ye--Kitaev (SYK) model is a representative example of such a system; it is particularly interesting not only due to the existence of efficient quantum algorithms preparing approximations to the ground state such as Hastings--O'Donnell (STOC 2022), but also known no-go results for many classical ansatzes in preparing low-energy states. However, this quantum-classical separation is known to \emph{not} persist when the SYK model is sufficiently sparsified, i.e., when terms in the model are discarded with probability $1-p$, where $p=\Theta(1/n^3)$ and $n$ is the system size. This raises the question of how robust the quantum and classical complexities of the SYK model are to sparsification.
    
    In this work we initiate the study of the sparse SYK model where $p \in [\Theta(1/n^3),1]$ and show there indeed exists a certain robustness of sparsification. We prove that with high probability, Gaussian states achieve only a $\Theta(1/\sqrt{n})$-factor approximation to the true ground state energy of sparse SYK for all $p\geq\Omega(\log n/n^2)$, and that Gaussian states cannot achieve constant-factor approximations unless $p \leq O(\log^2 n/n^3)$. Additionally, we prove that the quantum algorithm of Hastings--O'Donnell still achieves a constant-factor approximation to the ground state energy when $p\geq\Omega(\log n/n)$. Combined, these show a provable separation between classical algorithms outputting Gaussian states and efficient quantum algorithms for the goal of finding approximate sparse SYK ground states whenever $p \geq \Omega(\log n/n)$, extending the analogous $p=1$ result of Hastings--O'Donnell.
\end{abstract}
\maketitle

\newpage
\tableofcontents
\newpage

\section{Introduction}\label{sec:intro}

Simulating complex quantum mechanical systems is a potential key use-case for quantum computers \cite{Feynman_1982}, and for this reason understanding which classes of systems are ``quantumly easy'' yet ``classically hard'' to simulate is important in determining practical applications for future quantum devices. A particularly exciting set of physically relevant systems are quantum chemistry Hamiltonians, whose physical properties are often difficult to compute using classical algorithms~\cite{cao2019quantum}. These Hamiltonians are composed of low-weight \emph{fermionic} interactions \cite{BRAVYI2002210} in contrast to the setting of \emph{spin} Hamiltonians more commonly studied in Hamiltonian complexity theory~\cite{aharonov2002quantum,kempe2006complexity, Gharibian_2015}.

One central task in computing the properties of quantum chemistry systems is finding the ground state~\cite{cao2019quantum}. Though this is difficult in general---even for quantum computers---there exist classical mean-field methods such as Hartree--Fock which output \emph{Gaussian state} approximations to the ground states of such systems. Understanding (I) when Gaussian states do not well-approximate the true ground state energy and (II) when quantum algorithms do is thus important for understanding where one might find a quantum advantage in a practically relevant task.

It was found in the specific case of \emph{strongly-interacting} fermionic Hamiltonians that both of these properties hold for typical instances \cite{hastings2023optimizing, anschuetz2024strongly}, i.e., (I) Gaussian states provably fail and (II) quantum algorithms succeed on typical instances. A popular model for the strongly-interacting regime is the Sachdev--Ye--Kitaev (SYK) model \cite{SachdevYeSYK, KitaevSYK}, representing all-to-all degree-4 fermionic interactions
\begin{equation}
    H_{SYK} \triangleq \binom{2n}{4}^{-1/2}\sum_{\substack{I\subset[2n]\\|I|=4}} J_I \gamma_I,
\end{equation}
where each $J_I$ is an i.i.d.\ standard Gaussian and $\gamma_I$ is the Hermitian monomial formed by the Majorana operators indexed by the set $I$. SYK is an important fermionic Hamiltonian model due to its applications to various physical phenomena, including electron correlations in material science \cite{Jiang_2018} and the holographic AdS/CFT correspondence in quantum gravity \cite{Rosenhaus_2019}. 

Physics-based heuristics originally showed that the ground state energy of the SYK model is $-\Theta(\sqrt{n})$ (e.g., \cite{Garcia-Garcia2018}); however, this was only rigorously proven by the two combined works of Feng--Tian--Wei \cite{feng2019spectrum} and Hastings--O'Donnell \cite{hastings2023optimizing}. Feng, Tian, and Wei showed that the SYK Hamiltonian has an expected maximum eigenvalue\footnote{Equivalent to the ground state energy up to a sign as $H_{SYK}$ and $-H_{SYK}$ are distributed identically.} upper-bounded by $\sqrt{\ln 2}\sqrt{n}$. Hastings and O'Donnell showed the corresponding lower bound on the maximum eigenvalue of $\Omega(\sqrt{n})$ (with high probability) by constructing an efficient quantum algorithm which certifies this lower bound for specific SYK instances. 

On the other hand, Haldar--Tavakol--Scaffidi \cite{HTS21} showed that Gaussian states can achieve only $O(1)$ energy on typical SYK Hamiltonians. Combined with the lower bound by Hastings and O'Donnell \cite{hastings2023optimizing}, this disproves the possibility of Gaussian states achieving any constant-fraction approximation of the maximal energy. \color{black} This Gaussian state energy upper bound was extended in recent work by Anschuetz, Chen, Kiani, and King \cite{anschuetz2024strongly}, who showed that any high-energy state of the SYK model must have high circuit complexity. This demonstrates that any classical ansatz (whose energy can be computed on a classical computer) with an efficient representation---not just Gaussian states---cannot be close to the maximum energy of the SYK model. These results suggest that simulating the SYK model is a strong candidate for showcasing a quantum advantage.

Despite this potential for quantum advantage, it remains a major engineering challenge to simulate a Hamiltonian with $\Theta(n^4)$ degree-4 interactions on quantum hardware. To find systems which can possibly be realized by quantum computers in the near future, it is advantageous to analyze strongly-interacting Hamiltonians with \emph{sparse} interactions where each particle only interacts with a few others in total. The \emph{sparse SYK model} was proposed by \cite{sparse_holography} as a modified version of SYK where each interaction term is removed independently with probability $1-p$:
\begin{equation}\label{eqn:SSYK}
     H_{SSYK}(p) \triangleq \binom{2n}{4}^{-1/2}p^{-1/2}\sum_{\substack{I\subset[2n]\\|I|=4}} J_I X_I \gamma_I.
\end{equation}
Here, $X_I$ are i.i.d.\ Bernoulli random variables with $\Pr(X_I=1)=p$.

Determining whether one can simulate the interesting physics of strongly-interacting Hamiltonians on near-term quantum devices using sparsified models is still unclear. Recent work attempted to simulate wormhole dynamics on a quantum computer through a specific sparse SYK model generated through machine learning \cite{jafferis2022traversable}; however, it is an open question whether this specific sparsification actually exhibited holographic behavior~\cite{kobrin2023commenttraversablewormholedynamics,orman2025quantum}. Additionally, Herasymenko, Stroeks, Helsen, and Terhal \cite{Herasymenko_2023} showed that a Gaussian-approximation gap does \emph{not} exist when the SYK model is too sparse. They provide an efficient classical algorithm which outputs a Gaussian state with a constant-factor approximation to the sparse SYK ground state when $p=\Theta\left(1/n^3\right)$. This and other previous works on sparse SYK \cite{sparse_holography, Herasymenko_2023,Hothem2023, orman2025quantum} have only considered in detail the case where $p=\Theta(1/n^3)$, i.e., when each Majorana has in expectation a constant number of interactions. These results leave open the question of what happens between the two extremes of $p=1$ and $p=\Theta(1/n^3)$:
\begin{center}
    \textbf{How does sparsity in strongly-interacting fermionic Hamiltonians affect the approximation gap between classical and quantum algorithms?}
\end{center}
We make progress towards further answering this question in two ways:
\begin{enumerate}
    \item (\textbf{Classical hardness}, \cref{sec:gaussian-upper}) We show that Gaussian states achieve only a constant $O(1)$ energy on the sparse SYK model for all $p > \widetilde{\Omega}(1/n^2)$. We additionally show that only an asymptotically-vanishing (sub-constant) approximation ratio is achievable by (I) any Gaussian state when $p > \widetilde{\Omega}(1/n^3)$, (II) general states with circuit-complexity similar to Gaussian states when $p > \widetilde{\Omega}(1/n)$, and (III) various low-magic ansatzes when $p > \widetilde{\Omega}(1/n^2)$.
    \item (\textbf{Quantum easiness}, \cref{sec:sparse-ho-alg}) We prove a specific quantum algorithm achieves a constant approximation ratio for the sparse SYK model when $p > \widetilde{\Omega}(1/n)$. 
\end{enumerate}

This is supplemented by our additional ``universality'' result in \cref{sec:universality}, which states that the maximum eigenvalue for sparse SYK is always $\Theta(\sqrt{n})$ with high probability, regardless of the value of $p\in\left[\Theta(1/n^3),1\right]$. The maximum energy achievable by any quantum state for all sparse SYK or SYK Hamiltonians is therefore, up to constant factors, invariant to sparsification with high probability. We summarize our results in \cref{fig:energy-diagram}.

Our results prove that the sparse SYK model retains the maximum $1/\sqrt{n}$ Gaussian approximation gap for all $p \gtrsim 1/n^2$, suggesting similar physical properties between SYK and sparse SYK in this sparsity range. One such example is the fact that low-energy states of the SYK model have large amounts of magic (see \cite{cudby2025magicstates, PhysRevLett.123.080503} for details on fermionic magic states); this property was known for the fully dense SYK model~\cite{hastings2023fieldtheory}, and we show in \Cref{cor:magic} that this property is robust to sparsifications through $p \gtrsim 1/n^2$. Our work also shows that ``reasonable'' properties of physical Hamiltonians like sparsity do not necessarily preclude classical hardness. This suggests the existence of other sparse and physically-relevant Hamiltonians where quantum computers can have provable advantages over classical ones. Finally, our results prove that the algorithms producing constant-approximation ratio Gaussian states when $p=\Theta(1/n^3)$ \cite{Herasymenko_2023, Hothem2023} fail to for any slower decay of $p$ with $n$ (up to logarithmic factors).

\begin{figure}[t]
    \centering
    \includegraphics[width=\textwidth]{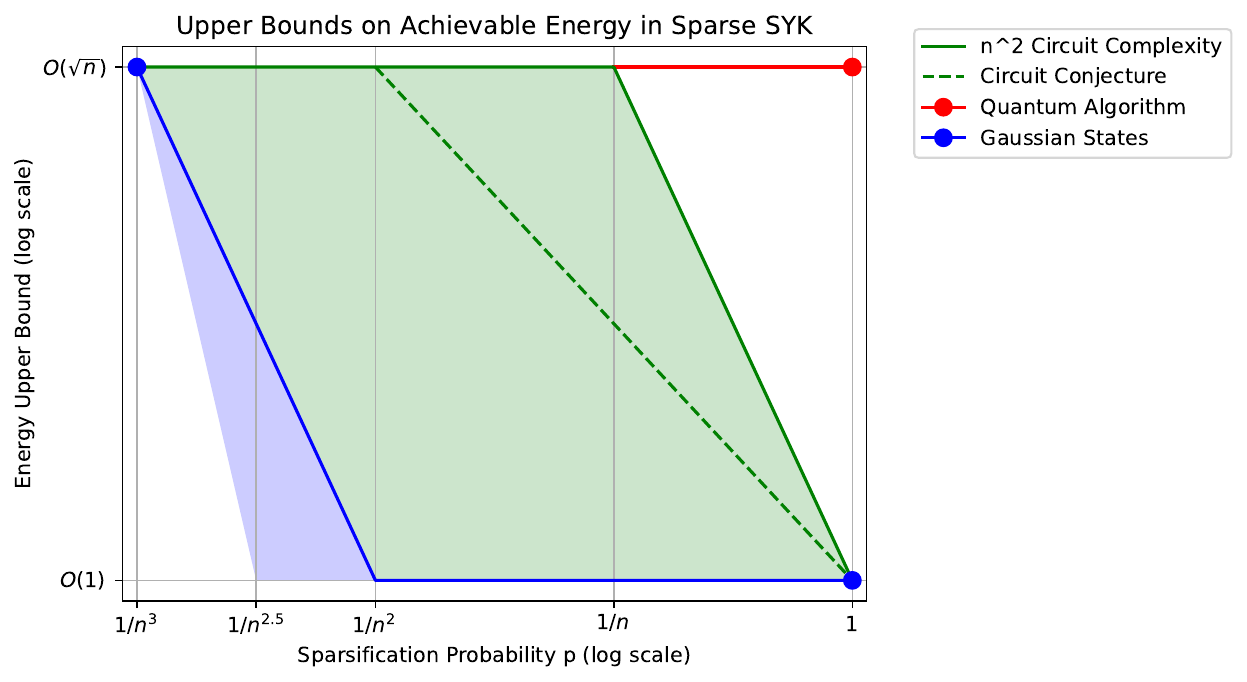}
    \caption{We plot the maximum achievable energies on sparse SYK Hamiltonians for various classes of quantum states (up to logarithmic factors in $p$). Green represents arbitrary disorder-independent sets of $\exp(n^2\log n)$ states, including states of circuit complexity $\operatorname{\widetilde{\Theta}}\left(n^2\right)$; blue represents Gaussian states; and red represents general quantum states generated from known quantum algorithms. Solid lines represent upper bounds proved in this work, dots represent bounds shown in previous work \cite{HTS21,hastings2023optimizing, Herasymenko_2023}, dashed lines denote conjectured upper bounds in this work, and shaded regions represent where the true energy achieved by the state class may lie, i.e., where there is no tight matching lower bound.}
    \label{fig:energy-diagram}
\end{figure}

\subsection{Gaussian state upper bound}
We show strong upper bounds on the optimal energy Gaussian states can achieve on sparse SYK Hamiltonians which interpolates between the $\Theta(1)$ maximum achievable energy at $p=1$ \cite{HTS21, hastings2023optimizing} and the $\Theta(\sqrt{n})$ maximum achievable energy at $p=\Theta(1/n^3)$ \cite{Herasymenko_2023, Hothem2023}.

\begin{restatable}[Gaussian state energy upper bound]{theorem}{gaussianthm}\label{thm:gaussian-upper-bound}
    In the sparse SYK model with sparsity parameter $p$, with high probability over the draw of the Hamiltonian, the set of all Gaussian states achieves a maximum energy upper-bounded by $O(1)$ when $p \geq \Omega(\log n/n^2)$ and a maximum energy upper-bounded by $O(\frac{\log n}{n\sqrt{p}})$ when $p < o(\log n/n^2)$.
\end{restatable}

This upper-bound is tight for the sparsity range $\Omega(\log n/n^2) \leq p \leq 1$ due to the existence of Gaussian states with energy $\Omega(1)$ for all $p$ (\cref{sec:gaussian_lower}). We prove \cref{thm:gaussian-upper-bound} in \cref{subsec:gaussian-energy-upper} by adapting a tensor network construction from \cite{hastings2023fieldtheory} originally used to prove upper bounds on ansatzes based on Gaussian states for the standard SYK model. A similar method was originally introduced in \cite{HTS21} to prove the $O(1)$ energy upper bound for Gaussian states on SYK. By expanding the energy calculation for an arbitrary Gaussian state utilizing Wick's theorem, we are able to represent the value as a specific tensor network contraction. Using bounds on the operator norm of Gaussian matrices, the works of \cite{HTS21, hastings2023fieldtheory} are able to provide energy upper bounds for arbitrary Gaussian states and specific fermionic ansatzes with low magic. In our setting, the tensors are now formed with Gaussian-times-Bernoulli random variables; we apply a stronger operator norm bound for the sparse SYK setting that allows us to achieve similar results for a large range of sparsities. As a direct corollary also due to \cite{hastings2023fieldtheory}, our results directly give energy upper bounds on various additional ansatzes which are based on Gaussian states for sparse SYK (\cref{appendix:entangled-gaussian}), including fermionic states with low magic.

In addition, we also show any fixed (disorder-independent) collection of states $S$ containing a state achieving energy at least $t$ (with high probability over the disorder) on the sparse SYK model must have at least $|S| \geq \exp(\Omega(pn^2t^2))$ distinct states (\cref{subsec:circuit-complexity-upper}). This applies a separate upper-bound on the energy achievable by any ansatz constructed with small circuits. Gaussian states have circuit complexity $\sim n^2$, so this gives a (comparatively weaker to \cref{thm:gaussian-upper-bound}) energy upper-bound of $\widetilde{O}(p^{1/2})$ which only remains nontrivial when $p\gtrsim 1/n$. We also conjecture the stronger bound $|S| \geq \exp(\Omega(\sqrt{p}n^2t^2))$ for achieving energy at least $t$, supported with numerical simulations.

The circuit complexity results are achieved with the following framework: (I) We show that there exists a finite set of states which forms an $\epsilon$-net of all ansatz states in terms of energy with respect to typical sparse SYK Hamiltonians. (II) We use a Lovász theta function on graphs to bound the maximum variance of any fixed quantum state with respect to the distribution of sparse SYK Hamiltonians when $p \geq \Omega(\log n/n)$. A similar argument was shown in \cite{anschuetz2024strongly} for the original SYK model. (III) With a tight enough bound on this variance, we can use standard concentration bounds to show that the supremum of the energy over all ansatz states in the net strongly concentrates around its mean, and therefore cannot be too large. Since the supremum of the net well-approximates the supremum of all ansatz states, with high probability no ansatz state exists with high energy for typical sparse SYK instances.

\subsection{Quantum algorithm}
We additionally show that a certain quantum algorithm outputs an explicit quantum state which achieves a constant approximation of the ground state energy in sparse SYK Hamiltonians when $p \geq \Omega(\log n/n)$.

\begin{restatable}[Sparse SYK quantum algorithm]{theorem}{quantumthm}\label{thm:sparse-ho}
    In the sparse SYK model with any sparsity parameter $p \geq \Omega(\log n/n)$, there exists an efficient quantum algorithm outputting a state which achieves $\Omega(\sqrt{n})$ energy with probability exponentially close to $1$ over the draw of the Hamiltonian.
\end{restatable}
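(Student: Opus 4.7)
The plan is to adapt the Hastings--O'Donnell (HO) algorithm for dense SYK to the sparse setting and show that its analysis still yields an $\Omega(\sqrt{n})$ achievable energy when $p \geq \Omega(\log n / n)$. The normalization $p^{-1/2}$ in $H_{SSYK}(p)$ is chosen so that the expected second moment of any structured sub-Hamiltonian used by HO is unchanged from the $p=1$ case, so the remaining question is whether operator-norm concentration holds under sparsification all the way down to $p = \Omega(\log n / n)$.

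I would proceed in three steps. First, observe that HO's state-preparation procedure is a purely algebraic operation on the list of surviving terms (e.g., a random partition of the $2n$ Majorana modes followed by efficient diagonalization of a structured sub-Hamiltonian supported on those modes), so it carries over verbatim to $H_{SSYK}(p)$; only the analysis of the achieved energy needs to change. Second, I would write the sparse sub-Hamiltonian as a sum of independent random operators $p^{-1/2}\binom{2n}{4}^{-1/2} J_I X_I \gamma_I$ indexed by the ``good'' sets $I$ that HO considers, and compare its operator norm to that of the dense sub-Hamiltonian. Third, I would apply a matrix Bernstein (or matrix Khintchine) inequality to the zero-mean deviation obtained by subtracting the dense sub-Hamiltonian from the sparse one, which is a sum of independent bounded operators once the Gaussian coefficients are truncated at a standard confidence level. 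The dimension factor in matrix Bernstein contributes a $\log n$, and tracking the variance and maximum-term size through the normalization $p^{-1/2}\binom{2n}{4}^{-1/2}$ yields that the deviation is $o(\sqrt{n})$ precisely when $p \geq C \log n / n$; combined with HO's $\Omega(\sqrt{n})$ lower bound for the dense sub-Hamiltonian, this gives the claim with probability exponentially close to $1$.

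The main obstacle will be obtaining sharp enough variance and maximum-norm bounds in matrix Bernstein to reach the $p = \Omega(\log n / n)$ threshold rather than a weaker one. The variance proxy involves a sum over pairs $\gamma_I \gamma_{I'}$ weighted by variances of $(X_I/p - 1) J_I$, and careful combinatorial bookkeeping---exploiting that $\gamma_I^2$ is $\pm 1$ times the identity and that cross terms for $I \neq I'$ have bounded operator norm---is required to show that the variance proxy scales as $1/n$ on the relevant subspace. A secondary subtlety is the unboundedness of the Gaussians $J_I$, which one handles by truncation at $O(\sqrt{\log n})$ together with a union bound. If any of these estimates leaves a $\mathrm{polylog}(n)$ slack, the final threshold for $p$ would become $\mathrm{polylog}(n)/n$ rather than $\log n / n$, so one has to be careful not to introduce extraneous logarithmic factors along the way.
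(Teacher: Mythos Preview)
Your proposed route---compare the sparse sub-Hamiltonian to the dense one in operator norm via matrix Bernstein, then transfer the dense HO energy bound---does not work, and the paper proceeds very differently.

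The central issue is that the operator-norm deviation you want to be $o(\sqrt{n})$ is in fact $\Theta(\sqrt{n})$. Writing the deviation as $\Delta H=\binom{2n}{4}^{-1/2}\sum_I c_I\gamma_I$ with $c_I=(p^{-1/2}X_I-1)J_I$, one has $\mathbb{E}[c_I^2]=2(1-\sqrt{p})=\Theta(1)$ for any $p$ bounded away from $1$. Since $\gamma_I^2=\pm\mathbb{I}$, the matrix-Bernstein variance proxy is $\|\sum_I\mathbb{E}[Z_I^2]\|_{\mathrm{op}}=2(1-\sqrt{p})$ independently of $p$, and with dimension $2^n$ the inequality only yields $\|\Delta H\|_{\mathrm{op}}=O(\sqrt{n})$, exactly the same scale as the signal. (Heuristically, for small $p$ most $X_I=0$ and $\Delta H$ is close in distribution to $-H_{SYK}$ itself.) So the step ``tracking the variance and maximum-term size \ldots\ yields that the deviation is $o(\sqrt{n})$'' is simply false; no amount of bookkeeping on the $Rt/3$ term fixes this because the variance term already saturates the bound.

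There is a second structural problem even if one ignored the first. The HO state $\rho_\theta=e^{-\theta\zeta}\rho_0 e^{\theta\zeta}$ depends on the Hamiltonian through $\zeta=\sum_j\tau_j\sigma_j$, where the $\tau_j$ are built from the (sparse) coefficients. So the state produced for $H_{SSYK}$ is not the dense HO state, and comparing the two Hamiltonians in operator norm does not transfer the dense energy lower bound. Conversely, using the dense $\zeta$ is not an algorithm on the sparse instance, since the unseen $J_I$ with $X_I=0$ are not available.

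What the paper actually does is rerun the HO analysis directly with sparse coefficients. The single-commutator term $\Tr(\rho_0[\zeta,H^{(2)}_{SSYK}])$ becomes a sum of chi-squared variables with $\Theta(pn^3)$ degrees of freedom and is $\Omega(\sqrt{n})$ by Laurent--Massart. The real work is the double-commutator $\|[\zeta,[\zeta,H^{(2)}_{SSYK}]]\|_{\mathrm{op}}$, which is a \emph{degree-three} polynomial in the $X_IJ_I$ and hence not amenable to matrix Bernstein at all. The paper bounds its $2n$-th moment by decoupling and a case split (over which of the three index triples coincide), and it is precisely in the ``all distinct'' case that a factor $\exp(C/p)$ appears and forces $p\ge\Omega(\log n/n)$. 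Your proposal does not engage with this degree-three structure, which is where the threshold actually comes from.
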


This quantum algorithm is the Hastings--O'Donnell variational algorithm \cite{hastings2023optimizing} for optimizing the SYK model, which works as follows:

\paragraph{Hastings--O'Donnell algorithm overview.}

Hastings--O'Donnell defines a ``two-color'' SYK model, which is the Hamiltonian
\begin{equation}
    H^{(2)}_{SYK}(n_1, n_2) \triangleq \frac{i}{\sqrt{n_2}}\sum_{m=1}^{n_2} \tau_m \gamma_m
\end{equation}
where $\tau_m$ is an instance of a $3$-body SYK Hamiltonian on $n_1$ Majorana modes. One can think of this Hamiltonian as partitioning the standard SYK Hamiltonian operators into two sets, $A$ of size $3n/4$ and $B$ of size $n/4$, and only considering interaction terms which have three terms from $A$ and one term from $B$. This represents a constant fraction of terms in the SYK Hamiltonian, and high-energy states of this model remain high-energy states of the standard SYK model.

The algorithm acts on the two-color model by first considering an additional set of $n_2$ ancilla Majorana modes with associated operators $\sigma_1, ..., \sigma_{n_2}$. The algorithm first constructs the Gaussian state $\rho_0$ optimizing the Hamiltonian $h' \triangleq \frac{i}{\sqrt{n_2}} \sum_{m=1}^{n_2} \sigma_m \chi_m$, which is the maximally entangled state on the $\sigma$ and $\chi$ modes. As this state is Gaussian, it is efficient to prepare by a classical computer. Our goal is to modify this Gaussian state into one that optimizes our original Hamiltonian using a quantum computer.

The next step of the algorithm is to apply a unitary transformation on our Gaussian state to get $\rho_\theta \triangleq e^{-\theta \zeta}\rho_0 e^{\theta \zeta}$, where $\zeta \triangleq \sum_{m=1}^{n_2} \tau_m \sigma_m$. Intuitively, we can think of this unitary transformation as a ``rotation'' of the $\sigma$ operators towards the $\tau$ operators in the Heisenberg picture, where the magnitude of the rotation is parameterized by the evolution time $\theta$. To first-order in $\theta$ this rotation achieves a high energy, and the main technical challenge is to upper-bound the effect of higher-order terms. Hastings--O'Donnell \cite{hastings2023optimizing} show that for some constant $\theta\gets\theta^\ast$ this is indeed the case for standard SYK, and $\rho_{\theta^\ast}$ achieves $\Omega(\sqrt{n})$ energy on both the two-color model and the original SYK model. 

The rotation is also crucially ``non-Gaussian'', meaning the resulting state is not a Gaussian state, and it achieves a better energy scaling than any possible Gaussian state. Hastings and O'Donnell also show that their algorithm is efficiently implementable on a quantum computer with circuit complexity polynomial in $n$.

In \cref{sec:sparse-ho-alg} we show that this algorithm retains its performance guarantees when applied to sparse SYK Hamiltonians when $p \geq \Omega(\log n/n)$. The proof of this relies on defining a sparse two-color SYK model similar to the model in \cite{hastings2023optimizing}. From here we apply similar analysis as the original Hastings--O'Donnell algorithm, with special care to show that the Bernoulli random variables do not affect the tails of our distribution significantly when $p$ is sufficiently large.

Combining \cref{thm:sparse-ho} with \cref{thm:gaussian-upper-bound}, we show a continuous regime of sparse SYK models where we have a provable separation between states with efficient classical representations and general quantum states for the optimal achievable ratio of approximation algorithms. This separation holds for all sparse SYK Hamiltonians with $p \geq C\log n/n$, extending the analogous result shown by \cite{hastings2023optimizing} for the special case of $p=1$.

\subsection{Related work}
We would like to mention the concurrent, independent work of Basso--Chen--Dalzell \cite{basso2024dissipation} that was released while we were preparing our manuscript. They consider a Hamiltonian ensemble similar to the sparse SYK model where the $J_I$ Gaussian random variables in \cref{eqn:SSYK} are replaced with Rademacher random variables, and instead of each term being sparsified independently, the ensemble has a fixed number of non-zero terms $m$ and is drawn uniformly from the set of all possible subsets of size $m$. 

They show similar circuit complexity lower bounds for states achieving a constant approximation (also based on the works of \cite{chifang2023sparserandom, anschuetz2024strongly}) and universality results showing a $\Theta(\sqrt{n})$ maximum eigenvalue for their Hamiltonian model. However, they do not consider specific bounds for Gaussian states like we do in our work. The main contribution of \cite{basso2024dissipation} is a quantum Gibbs sampling-based algorithm which produces a quantum state achieving a constant approximation to the maximum eigenvalue, for all $m>Cn\log n$. While we can only show the Hastings--O'Donnell algorithm succeeds for $m\gtrsim n^3\log n$, the algorithm and analysis is comparatively much simpler. Additionally, previous work has given analytic evidence that optimization algorithms based on Gibbs state preparation are non-optimal in preparing near-optimal states of certain disordered systems, in that there exist other algorithms capable of optimizing to even better energies~\cite{8948630,9996629}. For this reason, it is important to analyze a variety of algorithms capable of achieving a constant approximation ratio.
\section{Preliminaries}\label{sec:setup}
We use $C$ to denote positive constant(s) independent of system size $n$, unless otherwise stated. $C$ may take different values in different equations, however all such instances can be upper-bounded by some absolute constant $C^* = O(1)$. We use $\widetilde{O}$, $\widetilde{\Omega}$, $\gtrsim$, $\lesssim$, etc., to hide logarithmic factors in $n$.

We say a sequence of events $\{A_n\}_n$ occur ``with probability exponentially close to 1'' if there exists a constant $c>0$ such that $\Pr(A_n) \geq 1-\exp(-\Omega(n^c))$ and occur ``with high probability'' if there exists a constant $\alpha\geq 1$ such that $\Pr(A_n) \geq 1-1/O(n^{\alpha})$. At times, to ease notation, the sequence of events $\{A_n\}_n$ will be denoted by a single random variable $A$ and the value of $n$ is inferred from context.

\paragraph{Operators and matrices} We use the following notation for operators and matrices: $A^\top$, $A^\dagger$, and $\Tr[A]$ denote the transpose, conjugate transpose, and trace of an operator/matrix $A$ respectively. Given a $2n\times 2n$ antisymmetric matrix $A$, its Pfaffian is defined as
\begin{equation}
    \mathrm{Pf}(A) \triangleq \frac{1}{2^nn!}\sum_{\sigma \in S_{2n}} \mathrm{sgn}(\sigma)\prod_{j=1}^n A_{\sigma(2j-1), \sigma(2j)},
\end{equation}
where $S_{2n}$ denotes the symmetric group of order $2n$ and $\mathrm{sgn}(\sigma)$ denotes the parity of permutation $\sigma$. Instead of considering all permutations, one can also calculate the Pfaffian with all possible unordered pairings of $[2n]$ of the form $\pi=\{(\pi_L^1, \pi_R^1), (\pi_L^2, \pi_R^2),\dots, (\pi_L^{2k},\pi_R^{2k})\}$, where $\pi_L^1<\pi_L^2<\dots<\pi_L^{n}$ and $\pi_L^j<\pi_R^j \quad \forall j\in[n]$. Let $\mathcal{P}_{2n}$ denote the set of all such pairings. The Pfaffian can thus be written as
\begin{equation}
    \mathrm{Pf}(A) = \sum_{\pi \in \mathcal{P}_{2n}} \mathrm{sgn}(\pi) \prod_{j=1}^n  A_{\pi_L^j,\pi_R^j}
\end{equation}
where $\mathrm{sgn}(\pi)$ is the parity of the permutation $\begin{pmatrix}
1 & 2& 3 & 4 & \dots & 2n-1 & 2n \\
\pi_L^1 & \pi_R^1 & \pi_L^2 & \pi_R^2& \dots & \pi_L^n & \pi_R^n
\end{pmatrix}$.

\paragraph{Norms and metrics} We use the following notation for norms and metrics:
\begin{itemize}
    \item $\norm{A}{\mathrm{op}} \triangleq \sqrt{\lambda_{\mathrm{max}}(A^*A)}$ denotes the spectral/operator norm. Here, $\lambda_{\mathrm{max}}(M)$ denotes the largest eigenvalue of an operator or matrix $M$.
    \item $\norm{A}{F} \triangleq \sqrt{\sum_{i}\sum_j |A_{ij}|^2}=\sqrt{\Tr[AA^{\dagger}]}$ denotes the Frobenius norm.
    \item $\norm{X}{p} \triangleq \E[|X|^p]^{1/p}$ denotes the $L_p$-norm on random variables.
    \item  $|\Vec{v}|_p \triangleq \left(\sum_i|\Vec{v}_i|^p\right)^{1/p}$ denotes the $\ell_p$-norm on vectors.
    \item $T(\rho, \sigma) \triangleq \frac{1}{2} \Tr\left[\sqrt{(\rho - \sigma)^\dagger(\rho - \sigma)}\right]$ denotes the trace distance between density matrices $\rho$ and $\sigma$ of quantum states.
\end{itemize}

\subsection{Concentration bounds}\label{sec:concentration}
We will use the two following standard probability tail bounds:
\begin{itemize}
    \item Binomial Distribution Chernoff Bound \cite[Theorem 18.6]{harchol2023introduction}: If $X \sim \Binomial{n,p}$ is a Binomial random variable, for all $\delta>0$,
    \begin{equation}\label{eqn:full-binomial-chernoff}
        \Pr[X \geq (1+\delta)np] \leq \left(\frac{e^{\delta}}{(1+\delta)^{1+\delta}}\right)^{np}.
    \end{equation}

    From this we can derive the simplified bound
    \begin{equation}\label{eqn:simple-binomial-chernoff}
        \Pr[X \geq (1+\delta)np] \leq e^{-\frac{\delta^2 np}{2+\delta}}.
    \end{equation}

    \item Gaussian Tail Bound: If $X \sim \mathcal{N}(\mu,\sigma^2)$ is a Gaussian random variable, for all $t>0$,
    \begin{equation}\label{eqn:gaussian-chernoff}
        \Pr[|X -\mu|\geq t] \leq \exp\left(-\frac{t^2}{2\sigma^2}\right).
    \end{equation}
\end{itemize}

As a specific application, consider $X \sim \Binomial{O(n^a),p}$. If $n^ap > \omega(1)$, then $X \leq O(n^ap)$ with probability exponentially close to 1. If $n^ap = \Theta(1)$, then for any constant $C'>0$, there exists a constant $C$ such that
\begin{equation}
    \Pr[X \geq C\log n] \leq \left(\frac{e^{C'\log n}}{(1+C'\log n)^{(1+C'\log n)}}\right)^{\Theta(1)} \leq 1/(\log n)^{\Omega(\log n)} \leq 1/\mathrm{poly}(n),
\end{equation}
so $X \leq O(\log n)$ with high probability.

\subsection{Fermionic operators}
The creation and annihilation operators are conjugate operators that add and remove, respectively, a single particle from a given mode. Fermionic operators follow the canonical anticommutation relations. Specifically, representing the fermionic creation and annihilation operators for a mode $i$ as $a^\dag_i$ and $a_i$, respectively, we have:
\begin{equation}
    \{a_i, a_j\}=0, \qquad \{a^\dag_i, a^\dag_j\}=0, \qquad \{a_i, a_j^\dagger\}=\delta_{ij},
\end{equation}
where $\{\cdot,\cdot\}$ denotes the anticommutator.

Majorana fermions are a special class of fermions which are their own antiparticle. Mathematically, they have equivalent creation and annihilation operators. That is, given a Majorana annihilation operator $\gamma_i$ we have $\gamma_i = \gamma^\dag_i$. Majorana operators are also traceless, square to the identity, and form a Clifford algebra, i.e., $\{\gamma_i, \gamma_j\} = 2\delta_{ij}\mathbb{I}$. We can represent a general set of $n$ fermionic modes with $2n$ Majorana operators as follows:
\begin{equation}
    a_i = \frac{1}{2}(\gamma_{2i-1} + i\gamma_{2i}), \qquad a^\dag_i = \frac{1}{2}(\gamma_{2i-1} - i\gamma_{2i}) \qquad \forall i = 1,2,\ldots,n,
\end{equation}
and for this reason we will concern ourselves only with Majorana fermions in what follows.

\paragraph{Canonical representation.}
There exists a canonical matrix representation of the Majorana operators $\gamma_i$. The Weyl--Brauer matrices form a representation of the Clifford algebra, and generalize the Pauli matrices to higher dimensions \cite{weyl-brauer-1935}. 
\begin{definition}[Weyl--Brauer matrices]
    The degree-$2n$ Weyl--Brauer matrices are matrices in $\mathbb{C}^{2^n \times 2^n}$  defined as:
    \begin{align*}
        \gamma_{1} &= X_1 \otimes \mathbb{I}_2 \otimes \mathbb{I}_3 \otimes \ldots \otimes \mathbb{I}_n,\\
        \gamma_{2} &= Y_1 \otimes \mathbb{I}_2 \otimes \mathbb{I}_3 \otimes \ldots \otimes \mathbb{I}_n,\\
        \gamma_{3} &= Z_1 \otimes X_2 \otimes \mathbb{I}_3 \otimes \ldots \otimes \mathbb{I}_n,\\
        \gamma_{4} &= Z_1 \otimes Y_2 \otimes \mathbb{I}_3 \otimes \ldots \otimes \mathbb{I}_n,\\
        &\vdots
        \\
        \gamma_{2n-1} &= Z_1 \otimes Z_2\otimes Z_3 \otimes \ldots \otimes X_n,\\
        \gamma_{2n} &= Z_1 \otimes Z_2 \otimes Z_3 \otimes \ldots \otimes Y_n,
    \end{align*}
    where $X_i,Y_i,Z_i$ are $2\times 2$ Pauli matrices on the $i$th qubit and $\mathbb{I}_i$ is the $2\times 2$ identity matrix on the $i$th qubit.
\end{definition}

At times, given a subset $I \subseteq [2n]$ of size $|I|=q$ with elements $I=\{i_1, \dots, i_q\}$ ordered such that $i_1 < i_2 < \cdots <i_q$, we will use the subscript $\gamma_I$ to denote Hermitian monomial of Majorana operators:
\begin{equation}
    \gamma_I \triangleq i^{q/2}\gamma_{i_1} \cdots \gamma_{i_q}.
\end{equation}

\subsection{Fermionic Hamiltonians}\label{sec:ferm_ham}

Our problem of interest is in finding the maximal energy of a system of $n$ fermionic modes (e.g., electron orbitals) with low-degree interactions.
Typically, in physics, we aim to find the \emph{minimal} energy, but due to the symmetry in the problem, we will focus on the equivalent problem of finding the maximum energy.
\begin{problem}[Degree-$q$ fermionic Hamiltonian optimization]\label{degreeqfermion}
    Given an even integer $q$ and the fermionic Hamiltonian 
    \begin{equation}
    H = \sum_{\substack{I \subseteq [2n]\\|I|=q}} c_{I}\gamma_I, \qquad c_{I} \in \mathbb{R},
    \end{equation}
    find the maximum eigenvalue $\lambda_{\max}(H) = \max_{\psi} \braket{\psi|H|\psi}$.
\end{problem}

We will particularly be interested in the case where $H$ is an instance of the \emph{sparse SYK-q (SSYK-q) model}:
\begin{equation} \label{eq:SSYK_q_equation}
    H_{\text{SSYK-q}}\left(p\right)\triangleq\binom{2n}{q}^{-1/2}p^{-1/2}\sum_{\substack{I \subseteq [2n] \\ |I|=q}} J_{I}X_{I}\gamma_I,
\end{equation}
where $q\geq 4$ is even, the $J_{I}$ are i.i.d.\ standard Gaussian random variables, and the $X_{I}$ are i.i.d.\ Bernoulli random variables with $\mathbb{P}\left(X_{I}=1\right)=p$. When $p=1$ we call this model the \emph{(standard/dense) SYK-q model}. For the special case $q=4$---the minimal setting beyond free fermions---we will refer to the dense and sparse models as the SYK and SSYK models, respectively.

Note that the ``interesting'' range of sparsification lies between $p = \Theta(1/n^{q-1})$ and $p=1$, as any value $p =o(1/n^{q-1})$ will generally have fermionic modes with no interactions with any other modes. Thus for notational convenience, we assume some minimum value $p_{min} = \epsilon/n^{q-1}$ for any arbitrarily small constant $\epsilon > 0$. We implicitly assume that $p \in [p_{min}, 1]$, and whenever we refer to ``all values of $p$'' we mean all values within this range.

\subsection{Gaussian states}
We can also consider restricting $\ket{\psi}$ in the maximization of $\braket{\psi|H|\psi}$ to be over specific classes of states. One important such class is the set of \emph{Gaussian states}, which are the ground and thermal states of quadratic fermionic Hamiltonians. Additionally, they are the outputs of mean-field methods (e.g., Slater determinants which are outputs of Hartree-Fock are a subset of Gaussian states) for estimating the ground state energies of fermionic systems and are efficiently-representable classically (for a survey on fermionic Gaussian states, see \cite{Surace_2022}).

\begin{definition}[Fermionic Gaussian states]
    For any orthogonal matrix $R \in SO(2n)$, we define the unitary operator $U_R\in U(2^n)$ acting on $n$ fermionic modes via its action on the Weyl--Brauer representation as:
    \begin{equation}\label{eqn:gaussian-def}
        (U_R)^\dag \gamma_i (U_R) = \sum_{j=1}^{2n} R_{ij}\gamma_j \qquad \forall i = 1,2,\ldots,2n,
    \end{equation}
    which is uniquely defined up to global phase \cite{Bravyi_2019}. A state $\ket{\psi}$ is a (pure) fermionic Gaussian state if it has the form $\ket{\psi} = U_R\ket{0}^{\otimes n}$.
\end{definition}
Any fermionic Gaussian state can also be written in the standard density matrix form:
\begin{equation}\label{eqn:density-matrix}
    \rho = \frac{1}{2^n} \prod_{j=1}^n \left(\mathbb{I} + i\lambda_j\Tilde{\gamma}_{2j-1}\Tilde{\gamma}_{2j}\right),
\end{equation}
where $\lambda_j \in [-1,1]$ and $\Tilde{\gamma}_i = (U_R)^\dag \gamma_i (U_R)$.  If $\lambda_j \in \{-1,1\}$ the state is a pure fermionic Gaussian state, otherwise it is a mixed fermionic Gaussian state. We generally use $\ket{\psi}$ to denote pure quantum state vectors and $\rho$ to denote density matrices of mixed quantum states.
The \emph{covariance matrix} $\Gamma$ of a Gaussian state $\rho$ collects the two-body expectations of $\rho$ in a matrix as defined below.
\begin{definition}[Gaussian state covariance matrix]
    Given a fermionic Gaussian state $\rho$, we define its \emph{covariance matrix} $\Gamma \in \mathbb{R}^{2n \times 2n}$ as $\Gamma_{ij} \triangleq \frac{i}{2}\Tr(\rho[\gamma_i, \gamma_j])$. $\Gamma$ can be block-diagonalized as 
    \begin{equation}
        \Gamma = R\, \bigoplus_{j=1}^n \begin{pmatrix}
                                        0 & \lambda_j\\
                                        -\lambda_j & 0
                                        \end{pmatrix} R^\top         
    \end{equation}
    where $R \in SO(2n)$ is the orthogonal matrix from \cref{eqn:gaussian-def} and $\lambda_j \in [-1,1]$ is from \cref{eqn:density-matrix}. We have $\Gamma^\top\Gamma = \mathbb{I}$ for all pure fermionic Gaussian states.
\end{definition}

Gaussian states are classically useful since their energy on products of Majorana operators can be efficiently computed classically using Wick's Theorem \cite{Wick_1950}:
\begin{proposition}[Wick's theorem for fermionic Gaussian states]\label{prop:wicks-theorem}
    Given a Gaussian state $\rho_\Gamma$ with covariance matrix $\Gamma$, for any $I\subseteq [2n]$ where $|I|$ is even,
    \begin{equation}
        \Tr(\gamma_I\rho_\Gamma) = \mathrm{Pf}(\Gamma_I)
    \end{equation}
    where $\Gamma_I$ is the principal submatrix of $\Gamma$ obtained by only keeping the rows and columns indexed by the indices in $I$.
\end{proposition}

As classical mean-field methods output Gaussian states, an interesting open question is how well Gaussian states can approximate the maximum energy over arbitrary states for degree-4 Hamiltonians. 

\begin{definition}[Gaussian approximation factor]
    Denote 
    \begin{equation}
        \lambda_{\mathrm{Gauss}}(H) \triangleq \max_{\text{Gaussian states } \psi} \braket{\psi|H|\psi},
    \end{equation}
    i.e., the maximum energy where $\ket{\psi}$ is restricted to be a Gaussian state. The \emph{Gaussian approximation ratio} $\alpha_{\mathrm{Gauss}}\in [0,1]$ of Gaussian states on a fermionic Hamiltonian $H$ is defined as $\alpha_{\mathrm{Gauss}} \triangleq \lambda_{\mathrm{Gauss}}(H)/\lambda_{\max}(H)$.
\end{definition}

When considering the sparse SYK Hamiltonian, we will often be interested in a net of Gaussian states in the following sense. The relation between the gate count of a parameterized quantum circuit class and a net over the class appears in the proof of \cite[Lemma~D.1]{chifang2023sparserandom}, which we reproduce here for completeness. However this general argument is folklore, e.g., the Solovay--Kitaev theorem.
\begin{proposition}[Gaussian state net]\label{prop:gaussian-net}
    For any constant $0 < \epsilon < 1/2$, there exists a fixed set $S$ of $\exp(O(n^2\log (n/\epsilon)))$ pure Gaussian states which is an $\epsilon$-net over all $n$-mode fermionic pure Gaussian states in the following sense. For any pure Gaussian state $\ket{\psi}$,
    \begin{equation}
        \inf_{\ket{\phi}\in S} T(\ket{\phi}\bra{\phi}, \ket{\psi}\bra{\psi}) \leq \epsilon.
    \end{equation}
\end{proposition}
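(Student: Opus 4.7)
The plan is to build the net constructively by discretizing a universal parameterization of pure fermionic Gaussian states. Every pure Gaussian state has the form $|\psi\rangle = U_R |0\rangle^{\otimes n}$ for some $R \in SO(2n)$, so it suffices to build an $\epsilon$-net over the image of $SO(2n)$ under $R \mapsto U_R |0\rangle^{\otimes n}$. The group $SO(2n)$ has dimension $n(2n-1) = O(n^2)$ and admits a standard decomposition into a product of $O(n^2)$ elementary ``Givens'' rotations in coordinate $2$-planes; correspondingly, $U_R$ factors as a product $U_R = G_1 G_2 \cdots G_L$ of $L = O(n^2)$ two-Majorana unitaries of the form $G_k = \exp(\theta_k \gamma_{i_k} \gamma_{j_k})$ for a fixed sequence of index pairs $(i_k,j_k)$ and free angles $\theta_k \in [0, 2\pi)$.

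With this parameterization fixed, I would discretize each angle $\theta_k$ on a uniform grid of spacing $\delta$ in $[0, 2\pi)$, and take $S$ to be the set of all states obtained by choosing one grid point for each of the $L$ angles and acting on $|0\rangle^{\otimes n}$. This gives $|S| \le (2\pi/\delta)^{O(n^2)}$. Setting $\delta = \epsilon / C n^2$ for a suitable constant $C$ yields $|S| \le \exp(O(n^2 \log(n/\epsilon)))$, matching the claimed bound.

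For the approximation guarantee, I would use the standard fact that $\|e^{A} - e^{B}\|_{\mathrm{op}} \le \|A - B\|_{\mathrm{op}}$ when $A,B$ are anti-Hermitian commuting scalar multiples of the same operator (as here, where $\gamma_{i_k}\gamma_{j_k}$ is fixed), so rounding a single angle $\theta_k$ to its nearest grid point $\tilde\theta_k$ changes $G_k$ in operator norm by at most $\delta$. Chaining this through the product $U_R$ by the telescoping identity $U_R - \tilde U_R = \sum_{k} G_1 \cdots (G_k - \tilde G_k) \cdots \tilde G_L$ and using unitary invariance of the operator norm bounds $\|U_R - \tilde U_R\|_{\mathrm{op}} \le L\delta = O(n^2 \delta) \le \epsilon$. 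Since trace distance between pure states is controlled by operator-norm distance between their preparing unitaries, $T(\tilde U_R |0\rangle\langle 0|^{\otimes n} \tilde U_R^\dagger,\, U_R |0\rangle\langle 0|^{\otimes n} U_R^\dagger) \le \|U_R - \tilde U_R\|_{\mathrm{op}} \le \epsilon$, giving the $\epsilon$-net property.

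The main obstacle, and really the only nontrivial ingredient, is the $O(n^2)$-gate decomposition of arbitrary $U_R$ into elementary two-Majorana rotations along a fixed circuit template with only the angles as free parameters; this is needed so that the same combinatorial template underlies every $|\psi\rangle \in S$ and the counting is clean. This decomposition is standard (it is the fermionic analogue of the QR/Givens factorization of $SO(2n)$), and once it is in hand the rest is error bookkeeping: Lipschitz dependence of each $G_k$ on its angle, a telescoping argument for the product, and the pure-state inequality $T \le \|U - \tilde U\|_{\mathrm{op}}$. Setting $\delta = \Theta(\epsilon/n^2)$ then balances the net size against the accuracy and yields the claim.
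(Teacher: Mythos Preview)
Your proposal is correct and follows essentially the same approach as the paper: both use an $O(n^2)$-gate parameterization of pure Gaussian states (the paper cites the matchgate decomposition of \cite{Jiang_2018}, you phrase it as Givens rotations in $SO(2n)$, which is the same thing), discretize each angle at resolution $\Theta(\epsilon/n^2)$, and bound the approximation error via a telescoping argument. The only cosmetic difference is that the paper telescopes directly in trace distance over intermediate states $\ket{\psi_k}$, whereas you telescope in operator norm of the unitaries and then pass to trace distance at the end; both give the same $O(n^2)\cdot\delta$ error and the same net size $\exp(O(n^2\log(n/\epsilon)))$.
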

\begin{proof}
    The set of all $n$-mode fermionic Gaussian states is generated by a circuit architecture of at most $O(n^2)$ continuously parameterized two-qubit gates \cite{Jiang_2018}. For any Gaussian state $\ket{\phi}$, we can explicitly associate a set of $O(n^2)$ parameters $\{\theta_i\}$ which express the state as 
    \begin{equation}
        \ket{\phi} = \prod_{j=1}^{O(n^2)} U_j(\theta_j) \ket{0},
    \end{equation}
    from some fixed starting state $\ket{0}$.
    We form our net by discretely parameterizing each gate parameter $\theta_i$ with intervals of size $\epsilon' = \Theta(\epsilon/n^2)$ so that there are $O(n^2/\epsilon)$ discrete values each parameter can take. Given a state $\ket{\phi}$ with gate parameters $\{\theta_i\}$, denote as $\bar{\theta}_i$ the closest parameter value in our net to $\theta_i$. We now show that $\ket{\bar{\phi}}$ which is the state associated to parameters $\{\bar{\theta}_i\}$ is $O(\epsilon)$ away in trace distance from $\ket{\phi}$. Denote intermediate states for $0\leq k \leq | \{\theta_i\} |$ as
    \begin{equation}
        \ket{\psi_k} = \prod_{j=1}^k U_j(\theta_j) \prod_{j'=k+1}^{|\{\theta_i\}|} U_{j'}(\bar{\theta}_{j'}) \ket{0}.
    \end{equation}
    Then, it holds that
    \begin{equation}
        T(\ket{\phi}\bra{\phi}, \ket{\bar{\phi}}\bra{\bar{\phi}} ) 
        \leq 
        \sum_{j=0}^{| \{\theta_i\} |-1} T(\ket{\psi_j}\bra{\psi_j}, \ket{\psi_{j+1}}\bra{\psi_{j+1}} ) \leq O(\epsilon' n^2) = O(\epsilon).
    \end{equation}
    In total, this construction will have $\exp(O(n^2 \log(n/\epsilon)))$ many distinct states in the net. 
\end{proof}

Given any Hamiltonian $H$, we have that $T(\ket{\phi}\bra{\phi}, \ket{\psi}\bra{\psi}) \leq \epsilon/m$ implies
\begin{equation}
    \big\lvert\braket{\phi|H|\phi} - \braket{\psi|H|\psi}\big\rvert \leq \frac{\epsilon}{m} \cdot \norm{H}{\mathrm{op}}.
\end{equation}
Thus if $\norm{H}{\mathrm{op}} \leq m$, our $(\epsilon/m)$-net construction approximates all Gaussian states in energy with respect to $H$ up to an additive $\epsilon$ error.

\subsection{Commutation index and Lovász theta function}
Given a random Hamiltonian of the form $H=\sum_I J_I A_I$, where $A_I$ are Hermitian operators which square to the identity operator and $J_I$ are random coefficients with mean zero and unit variance, a key quantity of study is the variance of the energy of a given state $\ket{\psi}$:
\begin{equation}
    \operatorname{Var}(\bra{\psi}H\ket{\psi}) = \sum_I \bra{\psi}A_I\ket{\psi}^2.
\end{equation}
We will later use upper bounds on the energy variance to obtain tail bounds on the probability that the energy of a state can be at a specified value. We describe here how to bound the variance quantity above using the Lovász theta function of the so-called \emph{commutation graph} of the set $\mathcal{S}=\{A_I\}_I$. This is the graph constructed by taking $\mathcal{S}$ as the vertex set, and inserting an edge between $A_I,A_J\in\mathcal{S}$ if and only if they anticommute.
\begin{definition}[Commutation graph] \label{def:commutation_graph}
Given a finite set $\mathcal{S}$ of Pauli or Majorana operators, the commutation graph $G(\mathcal{S})$ of $\mathcal{S}$ is an undirected graph of $|\mathcal{S}|$ nodes constructed as follows:
\begin{itemize}
    \item The vertices of $G(\mathcal{S})$ correspond to operators $A_i \in \mathcal{S}$.
    \item There is an edge between two vertices if and only if their operators \emph{anticommute} (i.e., $i \sim j$ if $A_iA_j + A_jA_i = 0$).
\end{itemize}
\end{definition}
We follow closely the formulation of \cite{anschuetz2024strongly}, though similar ideas have been advanced in works on studying uncertainty relations \cite{de2023uncertainty,xu2024bounding}, random Hamiltonians \cite{anschuetz2024bounds}, and shadow tomography \cite{king2024triply}.
The commutation index is defined as follows:
\begin{definition}[Commutation index]
The commutation index $\Delta$ of a finite set of Hermitian operators $\mathcal{S}$ is
\begin{equation}
\Delta(\mathcal{S}) \triangleq \frac{1}{|\mathcal{S}|} \sup_{|\psi\rangle} \; \sum_{A_i \in \mathcal{S}} \langle\psi|A_i|\psi\rangle^2.
\end{equation}
\end{definition}
It is easy to see that the commutation index bounds the energy variance of any given state. In turn, the commutation index is known to be bounded by the \emph{Lovász theta function} of the commutation graph of $\mathcal{S}$. This fact was proven in \cite[Equation 5]{de2023uncertainty}, \cite{hastings2023optimizing}, and \cite[Proposition 3]{xu2024bounding}. 
We define the Lovász theta function as follows, noting that other equivalent formulations exist as well.
\begin{definition}[Lovász theta function] \label{def:lovasz}
Let $G=(V,E)$ be an undirected graph on $|V|$ vertices with edges in $E$. The \emph{Lovász theta function} $\vartheta(G)$ is given by a semidefinite program of dimension $|V|$:  
\begin{equation}
\begin{split}
    \max \ \{ \ &\Tr\left(\mathbb{J} X\right) \ , \ X \in \mathbb{R}^{|V| \times |V|} \\
    &\text{s.t.} \ X \succeq 0 \ , \ \Tr\left(X\right) = 1 \ , \ X_{jl} = 0 \ \forall (j,l) \in E \ \}, \label{eq:theta_primal}    
\end{split}
\end{equation}
where $\mathbb{J}$ is the all-ones matrix.
\end{definition}

We note that for any induced subgraph $H\subset G$, we have $\vartheta(H)\leq \vartheta(G)$, i.e., removing vertices from a graph cannot increase its Lovász theta function. This gives a (not necessarily tight) upper bound on the Lovász theta function when sparsifying an arbitrary graph by deleting its vertices.

Finally, we restate a key lemma which shows that $\Delta(\mathcal{S}) \leq \vartheta(G(\mathcal{S}))/|\mathcal{S}|$.

\begin{lemma}[Bound on commutation index \cite{de2023uncertainty,hastings2023optimizing,xu2024bounding}]\label{lem:bound_via_lovasz}
    Given a finite set $\mathcal{S}$ of Pauli or Majorana operators with commutation graph $G(\mathcal{S})$, it holds that
    \begin{equation}
        \Delta(\mathcal{S}) \leq \frac{\vartheta(G(\mathcal{S}))}{|\mathcal{S}|}.
    \end{equation}
\end{lemma}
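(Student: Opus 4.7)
The plan is to bound $\sum_i \bra{\psi}A_i\ket{\psi}^2$ by $\vartheta(G(\mathcal{S}))$ for every state $\ket{\psi}$, via an explicit feasible primal matrix for the SDP in \cref{def:lovasz}. Fix $\ket{\psi}$ and write $c_i \triangleq \bra{\psi}A_i\ket{\psi} \in \mathbb{R}$ (real because each $A_i$ is Hermitian); the case $\sum_i c_i^2 = 0$ is trivial, so assume otherwise.

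First I would introduce the real symmetric matrix $Y_{ij} \triangleq \operatorname{Re}\bra{\psi}A_iA_j\ket{\psi}$ and establish two properties. (a) $Y$ is PSD: it is the real part of the complex Gram matrix of $\{A_i\ket{\psi}\}_i$, and for any complex matrix $V = P + iQ$ one has $\operatorname{Re}(V^* V) = P^\top P + Q^\top Q \succeq 0$, so $Y$ is itself a real Gram matrix. (b) $Y_{ii} = 1$, and $Y_{ij} = 0$ whenever $(i,j)$ is an edge of $G(\mathcal{S})$: if $A_i$ anticommutes with $A_j$, then $(A_iA_j)^\dagger = A_jA_i = -A_iA_j$, so $A_iA_j$ is anti-Hermitian and $\bra{\psi}A_iA_j\ket{\psi}$ is purely imaginary.

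Next I would form
\[ X \;\triangleq\; \frac{1}{\sum_k c_k^2}\,\bigl(cc^\top\bigr) \odot Y, \]
where $\odot$ denotes the Hadamard product. By the Schur product theorem $X \succeq 0$; by construction $\Tr(X) = 1$ and $X_{ij} = 0$ on edges of $G(\mathcal{S})$, so $X$ is feasible for the SDP in \cref{def:lovasz}. Its objective evaluates as
\[ \Tr(\mathbb{J}X) \;=\; \frac{c^\top Y c}{\sum_k c_k^2} \;=\; \frac{\bra{\psi}B^2\ket{\psi}}{\sum_k c_k^2}, \qquad B \triangleq \sum_i c_i A_i, \]
where the second equality uses that $B$ is Hermitian, so $\bra{\psi}B^2\ket{\psi}$ is real and hence equals its own real part $c^\top Y c$. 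The elementary variance inequality $\bra{\psi}B^2\ket{\psi} \geq \bra{\psi}B\ket{\psi}^2 = \bigl(\sum_i c_i^2\bigr)^2$ then yields $\Tr(\mathbb{J}X) \geq \sum_i c_i^2$, so $\vartheta(G(\mathcal{S})) \geq \sum_i c_i^2$ by feasibility of $X$. Dividing by $|\mathcal{S}|$ and taking the supremum over $\ket{\psi}$ gives the lemma.

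The main obstacle is the careful handling of the real/complex split: $\bra{\psi}A_iA_j\ket{\psi}$ is complex in general, but the SDP is stated over real symmetric matrices, so one must pass to real parts and verify that the resulting $Y$ is both PSD and vanishes on the anticommutation edges. Once those two properties are in place, the Hadamard product construction plus the variance inequality assemble the proof in a few lines.
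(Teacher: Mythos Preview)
Your proof is correct. The paper itself does not supply a proof of this lemma---it only cites \cite{de2023uncertainty,hastings2023optimizing,xu2024bounding}---so there is no in-paper argument to compare against; your construction (feasible primal matrix $X = (cc^\top)\odot Y/\sum_k c_k^2$ for the Lov\'asz SDP, combined with the variance inequality $\bra{\psi}B^2\ket{\psi}\ge\bra{\psi}B\ket{\psi}^2$) is essentially the standard argument found in those references, and your handling of the real/complex split via $\operatorname{Re}(V^*V)=P^\top P+Q^\top Q\succeq 0$ is clean and correct.
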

This lemma allows us to bound the commutation index of the SYK Hamiltonian (and consequently the SSYK Hamiltonian). 
\begin{lemma}[Bound on commutation index of SYK Hamiltonian \cite{anschuetz2024strongly}]\label{lem:syk_bound_via_lovasz}
    Let $\mathcal{S}$ be the set of all degree $q$ Majorana operators for $q>0$ even:
    \begin{equation}
        \mathcal{S}= \left\{ i^{q/2} \gamma_{i_1}\gamma_{i_2}\ldots\gamma_{i_q} \;|\;  \{i_1,i_2,\ldots,i_q\} \subseteq [2n] , \; i_1 < i_2 < \cdots <i_q  \right\}.
    \end{equation}
    Then $\Delta(\mathcal{S}) = \Theta(n^{-q/2})$.
\end{lemma}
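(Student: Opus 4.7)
The plan is to prove matching upper and lower bounds of order $n^{-q/2}$ on $\Delta(\mathcal{S})$, using the Lov\'asz theta bound from \cref{lem:bound_via_lovasz} for the upper direction and an explicit quantum state for the lower direction.

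For the upper bound, I would first observe that for even $q$ a direct swap count gives that two degree-$q$ Majorana monomials $\gamma_I$ and $\gamma_J$ anticommute if and only if $|I \cap J|$ is odd, so the commutation graph $G(\mathcal{S})$ has as vertex set the $q$-subsets of $[2n]$ with an edge between any two subsets of odd intersection. Since $|\mathcal{S}| = \binom{2n}{q} = \Theta(n^q)$, \cref{lem:bound_via_lovasz} reduces the problem to establishing $\vartheta(G(\mathcal{S})) = O(n^{q/2})$. The graph $G(\mathcal{S})$ is invariant under the natural $S_{2n}$ action on $[2n]$, so its adjacency matrix lies in the Bose--Mesner algebra of the Johnson scheme $J(2n, q)$. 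I would exploit this symmetry by restricting the primal Lov\'asz SDP to $S_{2n}$-invariant feasible points $X = \sum_{m=0}^{q} c_m B_m$, where $B_m$ has $(I,J)$-entry $\mathbbm{1}[\,|I \cap J| = m\,]$. The edge constraints collapse to $c_m = 0$ for odd $m$, while positive-semidefiniteness becomes a set of $q+1$ scalar inequalities on linear combinations of the Eberlein polynomial eigenvalues of the $B_m$'s on the Johnson eigenspaces. The SDP thereby becomes a finite-dimensional LP whose size is constant in $n$, and an explicit asymptotic analysis of the Eberlein eigenvalues should yield $\vartheta(G(\mathcal{S})) = O(n^{q/2})$.

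For the matching lower bound, take $|\psi\rangle = |0\rangle^{\otimes n}$ in the Weyl--Brauer representation. A short direct calculation gives $i \gamma_{2j-1} \gamma_{2j} = -Z_j$, so for every ``paired'' index set $I = \bigcup_{k=1}^{q/2} \{2j_k - 1, 2j_k\}$ with distinct $j_1 < \cdots < j_{q/2}$, the operator $A_I$ reduces to a product of $Z_{j_k}$'s up to a global sign depending only on $q \bmod 4$, and hence $\langle \psi | A_I | \psi \rangle = \pm 1$. All other $I$ contain an unpaired Majorana factor, producing an off-diagonal Pauli string on the computational basis and contributing $\langle \psi | A_I | \psi \rangle = 0$. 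Therefore
\begin{equation}
    \sum_I \langle \psi | A_I | \psi \rangle^2 \;\geq\; \binom{n}{q/2} \;=\; \Omega(n^{q/2}),
\end{equation}
and dividing by $|\mathcal{S}| = \Theta(n^q)$ proves $\Delta(\mathcal{S}) = \Omega(n^{-q/2})$.

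The main technical obstacle is the asymptotic analysis of the Eberlein polynomial eigenvalues underlying the upper bound. A potentially cleaner alternative is to construct a dual certificate directly inside the Bose--Mesner algebra, guided by the Gaussian-state moment calculation (using Wick's theorem together with $\mathrm{Pf}(\Gamma_I)^2 = \det(\Gamma_I)$ for antisymmetric $\Gamma$) which suggests that the $(q/2)$-th elementary symmetric polynomial of the covariance-matrix eigenvalues governs the correct $n^{q/2}$ scaling.
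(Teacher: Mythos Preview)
The paper does not give its own proof of this lemma; it is quoted verbatim from \cite{anschuetz2024strongly} and used as a black box. There is therefore no in-paper argument to compare your proposal against.

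On the merits of your proposal: the lower bound is complete and correct. Your identification of the anticommutation condition ($|I\cap J|$ odd), the diagonal-vs-off-diagonal dichotomy for $\gamma_I$ acting on $|0\rangle^{\otimes n}$, and the count $\binom{n}{q/2}$ of paired index sets are all right, and together they give $\Delta(\mathcal{S})=\Omega(n^{-q/2})$ cleanly.

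The upper bound, however, is only a sketch. Reducing the Lov\'asz SDP to the Bose--Mesner algebra of the Johnson scheme via $S_{2n}$-invariance is the correct and standard move, and it does collapse the problem to a fixed-size LP in the Eberlein eigenvalues. But you stop at ``an explicit asymptotic analysis \ldots\ should yield $\vartheta(G(\mathcal{S}))=O(n^{q/2})$'' without doing that analysis or exhibiting a dual certificate. That computation \emph{is} the content of the upper bound; everything before it is bookkeeping. Your alternative suggestion of building a dual witness from the Gaussian-state Wick calculation is promising (and closer in spirit to how such bounds are typically proved in this literature), but again you have not carried it out. As written, the proposal establishes only the lower half of the $\Theta$.
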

\section{Maximum eigenvalue of sparse SYK Hamiltonians}
\label{sec:universality}
We first prove the following ``universality'' theorem, which states that the maximum energy eigenstate for the sparse SYK model has the same energy as the maximum energy eigenstate for the SYK model up to constant factors with probability exponentially close to 1. This result holds independently of the sparsification parameter.\footnote{Up to taking $p\geq p_{min}$ as described in \Cref{sec:ferm_ham}.}

\begin{theorem}\label{thm:ssyk-energy-concentration}
    Given the SSYK model, for all sparsification parameters $p=1/n^{a}$ when $a<3$, and any $\epsilon > 0$, there exists some constant $c>0$ such that
    \begin{equation}
        \Pr(|\lambda_{\mathrm{max}}(H_{SSYK}(p)) - \E \lambda_{\mathrm{max}}(H_{SYK})| > \epsilon \sqrt{n}) < \exp(-\Omega(n^c)).
    \end{equation}
\end{theorem}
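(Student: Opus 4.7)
The plan is to split
\[
\lambda_{\max}(H_{SSYK}(p)) - \mathbb{E}\lambda_{\max}(H_{SYK}) = \bigl[\lambda_{\max}(H_{SSYK}(p)) - \mathbb{E}\lambda_{\max}(H_{SSYK}(p))\bigr] + \bigl[\mathbb{E}\lambda_{\max}(H_{SSYK}(p)) - \mathbb{E}\lambda_{\max}(H_{SYK})\bigr]
\]
and bound each piece by $\epsilon\sqrt{n}/2$ with probability exponentially close to $1$.

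For the mean-difference piece I would use the moment method. Expanding
\[
\mathbb{E}\bigl[\mathrm{Tr}(H_{SSYK}^{2k})\bigr] = \binom{2n}{q}^{-k}p^{-k}\sum_{I_1,\ldots,I_{2k}}\mathbb{E}\Big[\textstyle\prod_j J_{I_j}X_{I_j}\Big]\mathrm{Tr}(\widetilde\gamma_{I_1}\cdots\widetilde\gamma_{I_{2k}}),
\]
multi-indices in which every distinct $I$ appears exactly twice (``simple'' pairings) contribute $p^k$ from the $X$-expectation, which cancels the $p^{-k}$ prefactor and reproduces the dense SYK contribution identically. Degenerate multi-indices in which some index is repeated $2s\ge 4$ times pick up Gaussian-moment factors $(2s-1)!!$ and sparsity factors $p^{-(s-1)}$, but each merger simultaneously costs a combinatorial $\binom{2n}{q}^{-1}$ factor. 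Since the hypothesis $a<3$ gives $\binom{2n}{q}\,p \asymp n^{q-a}\gg 1$, the per-merger net is $(\binom{2n}{q}p)^{-1}\ll 1$, and for $k=\Theta(n)$ with a sufficiently small constant the aggregate degenerate contribution is $o(1)$ relative to the simple-pair one, so $\mathbb{E}[\mathrm{Tr}(H_{SSYK}^{2k})] = (1+o(1))\mathbb{E}[\mathrm{Tr}(H_{SYK}^{2k})]$. The upper side then follows from Markov on $\lambda_{\max}^{2k}\leq \mathrm{Tr}(H^{2k})$ combined with the standard estimate $\mathbb{E}[\mathrm{Tr}(H_{SYK}^{2k})]\lesssim 2^n C_k n^k$ (with $C_k$ the $k$th Catalan number) at $k = \Theta(n/\epsilon)$. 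The lower side follows by invoking the explicit high-energy states of \cref{thm:sparse-ho} for $p\geq\Omega(\log n/n)$ and of Herasymenko et al.\ for $p = \Theta(1/n^3)$, with an interpolation to cover intermediate $p$.

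For the fluctuation piece I would condition on $X$ and apply Gaussian concentration to $J$. The map $J\mapsto \lambda_{\max}(H_{SSYK})$ is convex and $L$-Lipschitz in $\ell_2$ with $L^2 = \binom{2n}{q}^{-1}p^{-1}\sum_I X_I$, and a Chernoff bound on the Bernoulli sum gives $\sum_I X_I \leq 2p\binom{2n}{q}$ with exp-probability, so $L = O(1)$ and $\lambda_{\max}$ concentrates around $\mathbb{E}_J[\lambda_{\max}|X]$ at the $\sqrt n$ scale with the desired exponential tail. The remaining fluctuation of the conditional mean $g(X):=\mathbb{E}_J[\lambda_{\max}|X]$ in $X$ I would attempt to control via Talagrand's convex concentration inequality on the Bernoulli product $\{0,1\}^{\binom{2n}{q}}$, or alternatively by bootstrapping from the same moment bound above (applied to $\mathrm{Tr}(H^{2k})$ as a polynomial in $X$ after taking the $J$-expectation).

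The main obstacle is the bookkeeping in the moment calculation at $k=\Theta(n)$ and $p$ close to $p_{\min}$: the per-merger suppression $(\binom{2n}{q}p)^{-1}$ must beat the combinatorial explosion of order $k^{O(k)}$ from pairing counts and Gaussian moments. The strict inequality $a<3$ is exactly what provides the polynomial headroom to choose $k$ linear in $n$ with a small enough constant that the geometric series over merger counts converges, and this same control must then be transferred to the Bernoulli concentration of $g(X)$ needed to close the fluctuation piece.
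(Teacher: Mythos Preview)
Your two-piece split matches the paper's strategy exactly, but both pieces are handled differently in the paper, and your lower-side argument has a genuine gap.

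For the mean-difference, the paper does not do a direct moment expansion. It invokes a black-box universality theorem (Theorem~6 of \cite{anschuetz2024bounds}): once the disorder $\alpha_I = p^{-1/2}X_IJ_I$ satisfies a Lindeberg-type condition---equivalent here to $p\,n^3\to\infty$, i.e.\ $a<3$---the expected maximum eigenvalue of SSYK matches that of dense SYK up to $o(\sqrt n)$. Your moment sketch could plausibly recover the upper tail, but your lower side does not work. The explicit states of \cref{thm:sparse-ho} and of Herasymenko et al.\ certify only $\lambda_{\max}(H_{SSYK}) \geq C\sqrt n$ for some fixed constant $C$, not $\lambda_{\max}(H_{SSYK}) \geq \mathbb{E}\lambda_{\max}(H_{SYK}) - \epsilon\sqrt n$ for \emph{arbitrary} $\epsilon>0$, which is what the theorem demands. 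And the proposed ``interpolation'' across $p$ is not an argument: there is no monotonicity of $\lambda_{\max}(H_{SSYK}(p))$ in $p$ (the $p^{-1/2}$ normalization and the random sparsification pull in opposite directions) and no evident coupling. The paper sidesteps this entirely because the universality lemma plus two-sided concentration handles both tails at once.

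For the fluctuation piece you miss the paper's key tool. Your bound $L^2 \leq \binom{2n}{q}^{-1}p^{-1}\sum_I X_I = O(1)$ uses only $\langle\psi|\gamma_I|\psi\rangle^2\leq 1$ and then forces a separate, unfinished argument for the Bernoulli fluctuation of $g(X)=\mathbb{E}_J[\lambda_{\max}\mid X]$. The paper instead bounds $\sum_I X_I\langle\psi|\gamma_I|\psi\rangle^2 \leq \sum_I \langle\psi|\gamma_I|\psi\rangle^2 \leq O(n^2)$ via the commutation index (\cref{lem:syk_bound_via_lovasz}), obtaining $L = O(n^{-1}p^{-1/2})$ \emph{uniformly in $X$}. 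Gaussian concentration then directly yields a tail $\exp(-\Omega(t^2 n^3 p)) = \exp(-\Omega(n^{3-a}))$, exponential precisely because $a<3$, with no residual $X$-fluctuation to control.
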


To show \cref{thm:ssyk-energy-concentration}, we first show an intermediate lemma which states that the expected maximum eigenvalue of SSYK for any $p>\omega(1/n^3)$ matches the expected maximum eigenvalue of SYK asymptotically.
\begin{lemma}\label{lemma:ssyk-expected-max}
    Given the SSYK model, for all sparsification parameters $p=1/n^{a}$ when $a<3$, and any $\epsilon > 0$,
    \begin{equation}
        \left| \E\left[\lambda_{\mathrm{max}}(H_{SSYK})\right] - \E\left[\lambda_{\mathrm{max}}(H_{SYK})\right] \right| = o(\sqrt{n}).
    \end{equation}
    where the expectation is over the disorder of the Hamiltonians.
\end{lemma}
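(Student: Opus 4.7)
My plan is to use a Lindeberg-style coefficient-swap argument. The key observation is that for any fixed state $\ket{\psi}$, both $\braket{\psi|H_{SYK}|\psi}$ and $\braket{\psi|H_{SSYK}(p)|\psi}$ have mean zero and identical variance $\binom{2n}{4}^{-1}\sum_I\braket{\psi|\gamma_I|\psi}^2$, since the per-coefficient variances of $J_I$ and $J_I X_I/\sqrt{p}$ match. Only higher moments differ: the third absolute moment of the sparse-SYK coefficient blows up only as $O(1/\sqrt{p})$. This is precisely the setup for a Lindeberg-type universality argument comparing $\E\lambda_{\max}$ across the two distributions.

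Concretely, I would first replace $\lambda_{\max}$ by the smooth surrogate $F_\beta(H) \triangleq \beta^{-1}\log\Tr(e^{\beta H})$, which satisfies $\lambda_{\max}(H) \leq F_\beta(H) \leq \lambda_{\max}(H) + n\log 2/\beta$; taking $\beta = \sqrt{n}\log n$ keeps this surrogate error at $o(\sqrt{n})$. Next I would swap, one at a time, each of the $\binom{2n}{4}$ Bernoulli--Gaussian coefficients $Z_I = J_IX_I\binom{2n}{4}^{-1/2}p^{-1/2}$ of $H_{SSYK}$ for an independent Gaussian $W_I$ of matching variance $\binom{2n}{4}^{-1}$. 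A third-order Taylor expansion of $F_\beta$ in the Hermitian direction $\gamma_I$, with the matching first two moments cancelling, yields
\[
\bigl|\,\E F_\beta(H + Z_I\gamma_I) - \E F_\beta(H + W_I\gamma_I)\,\bigr|
\ \lesssim\ \bigl\|\,D^3 F_\beta[\gamma_I,\gamma_I,\gamma_I]\,\bigr\|_{\mathrm{op}}\cdot\bigl(\E|Z_I|^3 + \E|W_I|^3\bigr).
\]
The Duhamel representation of $e^{\beta H}$ gives $\|D^3 F_\beta\|_{\mathrm{op}} = O(\beta^2)$ along unit-norm directions, while $\E|Z_I|^3 + \E|W_I|^3 = O\bigl(\binom{2n}{4}^{-3/2}p^{-1/2}\bigr)$. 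Summing over all $\Theta(n^4)$ coefficients then gives a total swap error of $O(\beta^2 n^{-2}p^{-1/2}) = O(\log^2 n\cdot n^{a/2-1})$, which is $o(\sqrt{n})$ exactly when $a<3$. Since after all swaps the Hamiltonian is distributionally equal to $H_{SYK}$, combining the surrogate and swap errors proves the lemma.

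The main obstacle I expect is the operator-valued third-derivative bound. Unlike the scalar Lindeberg step, each coefficient here multiplies a non-commuting Majorana monomial $\gamma_I$, so one must control the third Fr\'echet derivative of a trace-exponential along that direction; the subtlety is that each Duhamel differentiation introduces a factor of $\beta$, and a naive estimate can pick up additional factors scaling with the ambient dimension. An alternative and arguably cleaner route is the trace-moment method: expand $\E\Tr(H_{SSYK}^{2k})$ via Wick's theorem and observe that pairings with at least one ``collision'' (an index appearing more than twice) pick up a suppression factor $1/(p\binom{2n}{4}) = O(n^{a-4})$ per collision, which is $o(1)$ whenever $a<4$. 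This matches the SSYK moments to the SYK moments at leading order in $n$ for all $k=\omega(\log n)$, and a standard Markov-and-$(2k)$-th-root argument then transfers the moment agreement into matching expected maximal eigenvalues up to $o(\sqrt{n})$.
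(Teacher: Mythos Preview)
Your proposal is correct at the level of a sketch, but takes a genuinely different route from the paper. The paper's proof is a one-liner: it invokes a black-box universality theorem (Theorem~6 of \cite{anschuetz2024bounds}), which says that $\E[\lambda_{\max}]$ is asymptotically unchanged under replacement of the disorder provided the coefficients have matching mean and variance and satisfy a Lindeberg-type truncation condition $\lim_{n\to\infty}\frac{1}{m_n}\sum_I\E\bigl[\alpha_I^2\,;\,|\alpha_I|>\epsilon\sqrt{m_n/n}\bigr]=0$ with $m_n=\binom{2n}{4}$; it then checks that $\alpha_I=p^{-1/2}X_IJ_I$ satisfies this condition precisely when $p=\omega(n^{-3})$.

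What you do instead is essentially reprove that black-box theorem from scratch in this special case, via the free-energy surrogate $F_\beta$ and a coefficient-by-coefficient Lindeberg swap. Your arithmetic is right: matching the first two moments, using $\E|Z_I|^3=O\bigl(\binom{2n}{4}^{-3/2}p^{-1/2}\bigr)$, and the third-derivative bound $\|D^3F_\beta\|=O(\beta^2)$ along unit-norm Hermitian directions gives a total swap error $O(\beta^2 n^{-2}p^{-1/2})$, which with $\beta=\sqrt{n}\log n$ is $o(\sqrt{n})$ exactly when $a<3$. The dimension-free $O(\beta^2)$ bound you worry about is indeed obtainable from the Gibbs-state/Duhamel representation (each derivative lands on a normalized state, so no trace-dimension factor leaks in); this is standard in the matrix Lindeberg literature, so your concern is legitimate but not fatal. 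Your alternative moment-method remark is slightly too optimistic: the per-collision suppression $n^{a-4}$ being $o(1)$ for $a<4$ is correct, but to control $\E\Tr(H^{2k})$ for $k=\Theta(n)$ you must track how many collisions can accumulate against the combinatorial growth, and that is where the sharper threshold $a<3$ reappears.

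The trade-off is clear: the paper's approach is short and modular but outsources all the work to an external reference; yours is self-contained and makes transparent exactly why $a<3$ is the threshold (it is the point where third-moment contributions stop being summable at the scale $\sqrt{n}$), at the cost of having to carry out the operator-calculus bookkeeping yourself.
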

\begin{proof}
SYK has an expected maximum eigenvalue of $\Theta(\sqrt{n})$ \cite{hastings2023optimizing}. From Theorem 6 of \cite{anschuetz2024bounds}, the expected maximum eigenvalue of $H_{SSYK}$ matches the standard SYK expected maximum eigenvalue when $n\rightarrow \infty$ under the condition that the disorder $\alpha_I = p^{-1/2}X_IJ_I$ obeys the following conditions:
\begin{enumerate}
    \item First moment: $\E[\alpha_I] = 0$;
    \item Second moment: $\E[\alpha_I^2] = 1$;
    \item Boundedness: For all $\epsilon > 0$: $\lim_{n \rightarrow \infty} \frac{1}{m_n} \E\left[\sum_{I} \alpha_I^2;\; |\alpha_I| > \epsilon\sqrt{\frac{m_n}{n}}\right] = 0$,
\end{enumerate}
where $m_n = \binom{2n}{4}$. Since $\alpha_I^2 = p^{-1}X_I J_I^2$, the boundedness condition is satisfied when $p > \omega(1/n^{3})$. 
\end{proof}

We next show that this result in expectation extends to a result with probability exponentially close to 1 using concentration bounds. Consider the sparse SYK model which as a reminder takes the form:
\begin{equation}
    H_{SSYK} \triangleq \binom{2n}{4}^{-1/2}\sum_{\substack{I\subset [2n]\\|I|=4}} (p^{-1/2}X_IJ_I) \gamma_I.
\end{equation}
To prove that the maximum eigenvalue of SSYK concentrates with probability exponentially close to 1 around its expectation, we apply Gaussian concentration for Lipschitz functions.
\begin{lemma}[Gaussian concentration inequality for Lipschitz functions; Theorem 2.26 of \cite{wainwright2019high}]
    Let $J_1,...,J_m$ be i.i.d. standard Gaussian variables and $f:\R^m\rightarrow \R$ be an $L$-Lipschitz function. Then for any $t \geq 0$:
    \begin{equation}
        \Pr[|f(X) - \E f(X)| \geq t] \leq 2\exp\left(-\frac{t^2}{2L^2}\right).
    \end{equation}
\end{lemma}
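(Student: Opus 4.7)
The plan is to establish this classical result via the combination of the Gaussian logarithmic Sobolev inequality with the Herbst argument, which delivers the sharp constant $1/(2L^2)$ in the exponent. First I would reduce to the case where $f$ is smooth and $L$-Lipschitz by convolving $f$ with a Gaussian of vanishing variance; this operation preserves the Lipschitz constant, and dominated convergence transfers the tail bound back to the original $f$. For smooth Lipschitz $f$, we have $|\nabla f|\leq L$ pointwise, a fact used repeatedly below.

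The key analytic input is Gross's Gaussian log-Sobolev inequality: for any sufficiently regular $g:\R^m\to\R$ and $X$ a standard $m$-dimensional Gaussian,
\begin{equation}
    \E[g(X)^2\log g(X)^2] - \E[g(X)^2]\log \E[g(X)^2] \leq 2\,\E[|\nabla g(X)|^2].
\end{equation}
I would invoke this as a known result (its standard proof proceeds by tensorization from the one-dimensional Gaussian case, or alternatively via the Ornstein--Uhlenbeck semigroup along the lines of Bakry--\'Emery). Apply it with $g = \exp(\lambda f/2)$ for $\lambda>0$, and set $Z \triangleq f(X) - \E f(X)$ and $\psi(\lambda) \triangleq \log\E[e^{\lambda Z}]$. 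Direct computation of both sides, using $|\nabla f|\leq L$ on the right, produces the differential inequality
\begin{equation}
    \lambda\psi'(\lambda) - \psi(\lambda) \leq \tfrac{1}{2}\lambda^2 L^2,
\end{equation}
which is equivalent to $\tfrac{d}{d\lambda}(\psi(\lambda)/\lambda) \leq L^2/2$. Since $\psi(\lambda)/\lambda \to 0$ as $\lambda\to 0^+$ (because $\E Z = 0$), integrating yields the sub-Gaussian moment generating function bound $\psi(\lambda) \leq \lambda^2 L^2/2$ for all $\lambda > 0$.

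From here the proof is a Chernoff step: $\Pr(Z \geq t) \leq \inf_{\lambda>0} e^{-\lambda t}\E[e^{\lambda Z}] \leq \inf_{\lambda>0}\exp\left(\lambda^2 L^2/2 - \lambda t\right) = \exp(-t^2/(2L^2))$, optimizing at $\lambda = t/L^2$. Running the identical argument on $-f$ (which is also $L$-Lipschitz) delivers the matching lower-tail bound, and a union bound produces the factor of $2$ in the stated two-sided inequality. The main obstacle is obtaining the log-Sobolev inequality with the sharp constant; a softer route through Maurey's Gaussian rotation interpolation---writing $f(X)-f(Y) = \int_{0}^{\pi/2}\langle\nabla f(X_\theta), X_\theta'\rangle\,\mathrm{d}\theta$ for rotated independent copies $X_\theta = X\cos\theta + Y\sin\theta$ and applying Jensen twice---is more elementary but produces only the suboptimal constant $\pi^2/(8L^2)$ in the exponent, so the sharpness of the stated bound forces the argument through the log-Sobolev route.
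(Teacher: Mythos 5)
The paper does not prove this lemma at all---it is imported verbatim as Theorem 2.26 of Wainwright's textbook and used as a black box---so there is no internal proof to compare against. Your derivation is the standard and correct one: mollify to reduce to smooth $f$ with $|\nabla f|\leq L$, apply Gross's log-Sobolev inequality to $g=e^{\lambda f/2}$ to obtain the differential inequality $\lambda\psi'(\lambda)-\psi(\lambda)\leq \tfrac{1}{2}\lambda^2L^2$ for the cumulant generating function, integrate $\tfrac{d}{d\lambda}(\psi(\lambda)/\lambda)\leq L^2/2$ from the boundary value $\psi(\lambda)/\lambda\to 0$ to get $\psi(\lambda)\leq\lambda^2L^2/2$, and finish with Chernoff at $\lambda=t/L^2$ plus the same bound for $-f$. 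Every step checks out, and this Herbst-type argument is exactly how the sharp constant $1/(2L^2)$ is usually obtained. One small correction to your closing aside: the Maurey--Pisier rotation argument yields the moment generating function bound $\E e^{\lambda(f(X)-f(Y))}\leq e^{\pi^2\lambda^2L^2/8}$ and hence a tail exponent of $-2t^2/(\pi^2L^2)$, not $-\pi^2 t^2/(8L^2)$; as written your quoted constant would be \emph{stronger} than the sharp one, which is not what that method gives. This does not affect the validity of your main argument, since you correctly route through log-Sobolev for the sharp bound.
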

Fix some value $m$ denoting the number of non-zero entries of the binomial vector. Let $f$ denote the function mapping $\Vec{J} \mapsto \binom{2n}{4}^{-1/2}p^{-1/2} \cdot \lambda_{\mathrm{max}}\left(\sum_{i=1}^m  J_i \gamma_i\right)$, and let $\ket{\psi}$ be any state. Then
\begin{equation}
    \left|\nabla_{\Vec{J}}\binom{2n}{4}^{-1/2}p^{-1/2}\bra{\psi}\left(\sum_{i=1}^m  J_i \gamma_i\right)\ket{\psi}\right|_2^2 \leq \binom{2n}{4}^{-1}p^{-1}\sum_{i=1}^m \bra{\psi}\gamma_i\ket{\psi}^2 \leq O(n^{-2}p^{-1}),
\end{equation}
where the last inequality is from \cref{lem:syk_bound_via_lovasz} and holds for all $m$.
As this Lipschitz bound holds for the energy over any state, it also holds for the supremum over all states which is the Lipschitz constant $L$ of $\lambda_{\mathrm{max}}(H_{SSYK}(p))$ is $O(n^{-1}p^{-1/2})$ (see, e.g., Proposition 8 of \cite{anschuetz2024bounds}). This gives:
\begin{equation}\label{eqn:ssyk-energy-concentration}
     \Pr\left(|\lambda_{\mathrm{max}}(H_{SSYK}(p)) - \E \lambda_{\mathrm{max}}(H_{SSYK}(p))\right| \geq t\sqrt{n}) \leq 2\exp\left(-\Omega\left(\frac{t^2n^3p}{2}\right)\right).
\end{equation}
Thus we see that the maximum eigenvalue lies within $t\sqrt{n}$ of its expectation with probability exponentially close to 1 for any $t>0$ when $a < 3$.
\begin{remark}
    In \cref{appendix:upper-bound}, we show an improvement can be made to the probability in which the upper bound holds when $2 < a < 3$ compared to  \cref{eqn:ssyk-energy-concentration}. 
\end{remark}

We now complete the proof of \cref{thm:ssyk-energy-concentration}.
\begin{proof}[Proof of \cref{thm:ssyk-energy-concentration}]
    Fix any constant $t > 0$. From \cref{lemma:ssyk-expected-max}, for any constant $t'$ such that $0<t'<t$, there exists an $n_0$ such that for all $n \geq n_0$, 
    \begin{equation}
        \left| \E\left[\lambda_{\mathrm{max}}(H_{SYK})\right] - \E\left[\lambda_{\mathrm{max}}(H_{SSYK})\right] \right| \leq t'\sqrt{n}.
    \end{equation}
    From \cref{eqn:ssyk-energy-concentration},
    \begin{equation}
        \Pr\left[|\lambda_{\mathrm{max}}(H_{SSYK}) - \E \lambda_{\mathrm{max}}(H_{SYK})| \geq (t-t')\sqrt{n}\right]\leq 2\exp\left(-\Omega\left(\frac{(t-t')^2n^3p}{2}\right)\right),
    \end{equation}
    and by triangle inequality
    \begin{align*}
        |\lambda_{\mathrm{max}}(H_{SSYK}) - \E \lambda_{\mathrm{max}}(H_{SYK})| &\leq |\lambda_{\mathrm{max}}(H_{SSYK}) - \E \lambda_{\mathrm{max}}(H_{SSYK})| \\&\qquad + |\E \lambda_{\mathrm{max}}(H_{SSYK}) - \E \lambda_{\mathrm{max}}(H_{SYK})|\\&\leq |\lambda_{\mathrm{max}}(H_{SSYK}) - \E \lambda_{\mathrm{max}}(H_{SSYK})| +t'\sqrt{n},
    \end{align*}
    again for $n\geq n_0$. Finally, we have $\Pr\left[|\lambda_{\mathrm{max}}(H_{SSYK}) - \E \lambda_{\mathrm{max}}(H_{SYK})\right| \geq t\sqrt{n})] \leq \Pr\left[\left|\lambda_{\mathrm{max}}(H_{SSYK}) - \E \lambda_{\mathrm{max}}(H_{SSYK})\right| \geq (t-t')\sqrt{n}\right]$ and 
    \begin{align*}
        \Pr\left[\left|\lambda_{\mathrm{max}}(H_{SSYK}) - \E \lambda_{\mathrm{max}}(H_{SSYK})\right| \geq (t-t')\sqrt{n}\right]&\leq 2\exp\left(-\Omega\left(\frac{(t-t')^2n^3p}{2}\right)\right)\\& \leq \exp\left(-\Omega\left({n^{3-a}}\right)\right).
    \end{align*}
    Thus for large $n$, $|\lambda_{\mathrm{max}}(H_{SSYK}) - \E \lambda_{\mathrm{max}}(H_{SYK})| < t\sqrt{n}$ occurs with probability exponentially close to 1 since $a<3$.
\end{proof}

\begin{remark}
    In the case of $a=3$, i.e. $p=\Theta(1/n^3)$, we note how the existing algorithms proposed by \cite{Herasymenko_2023, Hothem2023} outputting Gaussian states for sparse SYK when $p = \Theta(1/n^3)$ achieve $\Theta(\sqrt{n})$. This shows that with probability exponentially close to 1, the maximum eigenvalue of the sparse SYK model is $\Theta(\sqrt{n})$ when $p = \Theta(1/n^3)$, with the Gaussian state output of the above algorithms as a witness. However the expected maximum eigenvalue may differ from the dense SYK model by a constant factor, as opposed to the scaling in \cref{lemma:ssyk-expected-max}.
\end{remark}
\section{Optimal energy classical ansatz for sparse SYK Hamiltonians}\label{sec:gaussian-upper}
In this section we prove tight upper and lower bounds on the maximum energy Gaussian states can achieve on sparse SYK Hamiltonians when $p\geq \Omega(\log n/n^2)$, by showing that the Gaussian approximation ratio $\alpha_{\mathrm{Gauss}}=\Theta(1/\sqrt{n})$ for all $p\geq \Omega(\log n/n^2)$ and $\alpha_{\mathrm{Gauss}}=O\left(n^{-\frac{1+a}{2}}\log n\right)$ when $p=\Theta(1/n^{2+a})$ for any $0 \leq a \leq 1$. We also give upper bounds on the energy achievable by states with low circuit complexity and conjecture a stronger upper bound with evidence from numerical simulations.

\tikzset{
  tensor/.style = {draw, thick, minimum width=1.5cm, minimum height=1.5cm, align=center},
  leg/.style    = {thick},
  wire/.style   = {thick},
 Gnode/.style = {draw, rectangle, minimum width=0.6cm, minimum height=0.6cm, inner sep=1.6pt, thick, fill=white, font=\footnotesize},
  lbl/.style    = {fill=white, inner sep=1pt, font=\scriptsize},
   group/.style     = {draw, rounded corners=5pt, dashed, inner sep=5pt, thick},
}

\subsection{Gaussian state energy upper bound}\label{subsec:gaussian-energy-upper}
We restate \cref{thm:gaussian-upper-bound} for convenience:
\gaussianthm*

We begin by giving a detailed exposition of the results on SYK given in \cite[Section VI]{hastings2023fieldtheory}. We then extend the method to achieve results for sparse SYK. 

\paragraph{Hastings Construction}
Hastings shows that the energy expectation value for any fixed Gaussian state on $H^k$, where $H$ is a Hamiltonian drawn from SYK model, is bounded by $O(k)^{2k}$ with high probability. To do this, we fix an arbitrary Gaussian state $\rho$ and its covariance matrix $\Gamma$. Its energy on $H^k$ can be written as
\begin{equation}
    \Tr(H^k\rho) = \bigsum_{\substack{S_1, S_2,\dots, S_k\\\forall \ell: |S_\ell|=4}} \left(\prod_{j=1}^k  J_{S_j}\right) \Tr\left(\rho\prod_{j=1}^k \gamma_{S_{j}}\right)
\end{equation}
Let $S_{1:k}=(S_1 , S_2,\dots, S_k)$ be an ordered sequence of $4k$ integers (potentially with repeated elements), and let $\mathcal{P}_{4k}$ denote the set of all partitions of the set $[4k]$ into unordered pairs (i.e. pairings $\pi=\{(\pi_L^1, \pi_R^1), (\pi_L^2, \pi_R^2),\dots, (\pi_L^{2k},\pi_R^{2k})\}$ such that $\pi_L^i < \pi_R^i$ and $\pi_L^1 < \pi_L^2 < \cdots < \pi_L^{2k}$). By Wick's theorem (\cref{prop:wicks-theorem}),
\begin{align*}
    \Tr(H^k\rho) &= \bigsum_{\substack{S_1, S_2,\dots, S_k\\\forall \ell: |S_\ell|=4}} \left(\prod_{j=1}^k J_{S_j}\right) \left(\sum_{\pi\in \mathcal{P}_{4k}}\left(\mathrm{sgn}(\pi)\prod_{j=1}^{2k}\Gamma_{S_{1:k}[\pi_{L}^j], S_{1:k}[\pi_{R}^j]} \right)\right) \\&= \bigsum_{\pi\in \mathcal{P}_{4k}}\mathrm{sgn}(\pi)\sum_{\substack{S_1, S_2,\dots, S_k\\\forall \ell: |S_\ell|=4}} \left(\prod_{j=1}^{k} J_{S_j}\Gamma_{S_{1:k}[\pi_{L}^{2j-1}], S_{1:k}[\pi_{R}^{2j-1}]}\Gamma_{S_{1:k}[\pi_{L}^{2j}], S_{1:k}[\pi_{R}^{2j}]} \right).
\end{align*}
Additionally, the SYK disorder entries $J_I$ can be represented by an associated tensor $\mathbf{J}$, as follows:
\begin{definition}[SYK tensor \cite{hastings2023optimizing}]
    Given an SYK-4 Hamiltonian $H_{SYK}=\sum_{\substack{I\subseteq [2n]\\|I|=4}}J_I\gamma_I$, define its associated 4-index tensor $\mathbf{J}$ such that given $I=\{i_1, i_2, i_3, i_4\}$, $i_1 < i_2 < i_3 < i_4$, and any permutation $\sigma \in S_4$, $\mathbf{J}_{\sigma(i_1),\sigma(i_2),\sigma(i_3),\sigma(i_4)} \triangleq \mathrm{sgn}(\sigma)J_I$. $\mathbf{J}$ is therefore a completely antisymmetric 4-index tensor by construction.
\end{definition}

The key insight of Hastings is that each pairing term in the summation over $\mathcal{P}_{4k}$ can be represented as a single tensor network. Fix a pairing $\pi=\{(\pi_L^1, \pi_R^1), (\pi_L^2, \pi_R^2),\dots, (\pi_L^{2k},\pi_R^{2k})\}$. We first add $k$ copies of the $\mathbf{J}$ tensor, each with 4 legs; label the set of all $4k$ legs with indices $1,2\dots,4k$. For each pair $(\pi_L^j, \pi_R^j)$ in the pairing $\pi$, we connect the legs $J_{\pi_L^j}$ and $J_{\pi_R^j}$ with the 2-leg tensor $\Gamma$ representing the covariance matrix of the Gaussian state. This tensor network contraction has the value 
\begin{equation}
    \mathrm{sgn}(\pi)\sum_{\substack{S_1, S_2,\dots, S_k\\\forall \ell: |S_\ell|=4}} \left(\prod_{j=1}^{k} J_{S_j}\Gamma_{S_{1:k}[\pi_{L}^{2j-1}], S_{1:k}[\pi_{R}^{2j-1}]}\Gamma_{S_{1:k}[\pi_{L}^{2j}], S_{1:k}[\pi_{R}^{2j}]} \right)
\end{equation}
exactly (for a simple example, see \cref{fig:SYK-tensor-network}). 

\begin{figure}[ht]
     \centering
     \begin{subfigure}[c]{0.3\textwidth}
         \centering
        \begin{tikzpicture}
          \node[tensor] (J) at (0,0) {$\mathbf{J}$};
          \coordinate (i1) at ($(J.west)+(-0.6,  0.6)$);
          \coordinate (i2) at ($(J.east)+( 0.6,  0.6)$);
          \coordinate (i3) at ($(J.west)+(-0.6, -0.6)$);
          \coordinate (i4) at ($(J.east)+( 0.6, -0.6)$);
        
          \draw[leg] ($(J.west)+(0, 0.6)$) -- (i1) node[left] {$1$};
          \draw[leg] ($(J.east)+(0, 0.6)$) -- (i2) node[right] {$2$};
          \draw[leg] ($(J.west)+(0,-0.6)$) -- (i3) node[left] {$3$};
          \draw[leg] ($(J.east)+(0,-0.6)$) -- (i4) node[right] {$4$};
        
          \node[Gnode] (G12) at ($(i1)!.5!(i2)+(0, 1)$) {$\Gamma$};
          \node[Gnode] (G34) at ($(i3)!.5!(i4)+(0, -1)$) {$\Gamma$};
        
          \draw[wire] (i1) to[bend left=30] (G12);
          \draw[wire] (i2) to[bend right=30] (G12);
          \draw[wire] (i3) to[bend right=30] (G34);
          \draw[wire] (i4) to[bend left=30] (G34);

        \end{tikzpicture}
         \caption{$\{(1,2), (3,4)\}$}
         \label{fig:12-34}
     \end{subfigure}
     \hfill
     \begin{subfigure}[c]{0.3\textwidth}
         \centering
         \begin{tikzpicture}
          \node[tensor] (J) at (0,0) {$\mathbf{J}$};
          \coordinate (i1) at ($(J.west)+(-0.6,  0.6)$);
          \coordinate (i2) at ($(J.east)+( 0.6,  0.6)$);
          \coordinate (i3) at ($(J.west)+(-0.6, -0.6)$);
          \coordinate (i4) at ($(J.east)+( 0.6, -0.6)$);
        
          \draw[leg] ($(J.west)+(0, 0.6)$) -- (i1) node[left] {$1$};
          \draw[leg] ($(J.east)+(0, 0.6)$) -- (i2) node[right] {$2$};
          \draw[leg] ($(J.west)+(0,-0.6)$) -- (i3) node[left] {$3$};
          \draw[leg] ($(J.east)+(0,-0.6)$) -- (i4) node[right] {$4$};
        
          \node[Gnode] (G13) at ($(J.west)+(-0.5,0)$) {$\Gamma$};
          \node[Gnode] (G24) at ($(J.east)+(0.5,0)$) {$\Gamma$};
        
          \draw[wire] (i1) to[bend right=15] (G13);
          \draw[wire] (i3) to[bend left=15] (G13);
          \draw[wire] (i2) to[bend left=15] (G24);
          \draw[wire] (i4) to[bend right=15]  (G24);
        
        \end{tikzpicture}
         \caption{$\{(1,3), (2,4)\}$}
         \label{fig:13-24}
     \end{subfigure}
     \hfill
     \begin{subfigure}[c]{0.3\textwidth}
         \centering
         \begin{tikzpicture}
          \node[tensor] (J) at (0,0) {$\mathbf{J}$};
          \coordinate (i1) at ($(J.west)+(-0.6,  0.6)$);
          \coordinate (i2) at ($(J.east)+( 0.2,  0.6)$);
          \coordinate (i3) at ($(J.west)+(-0.6, -0.6)$);
          \coordinate (i4) at ($(J.east)+( 0.6, -0.6)$);
        
          \draw[leg] ($(J.west)+(0, 0.6)$) -- (i1) node[left] {$1$};
          \draw[leg] ($(J.east)+(0, 0.6)$) -- (i2) node[right] {$2$};
          \draw[leg] ($(J.west)+(0,-0.6)$) -- (i3) node[left] {$3$};
          \draw[leg] ($(J.east)+(0,-0.6)$) -- (i4) node[right] {$4$};
        
          \node[Gnode] (G14) at ($(i1)!.5!(i2)+(0, 1)$) {$\Gamma$};
          \node[Gnode] (G23) at ($(i3)!.5!(i4)+(0, -1)$) {$\Gamma$};
        
          \draw[wire] (i1) to[bend left=15] (G14);
          \draw[wire] (i4) to[out=30,in=0]  (G14);
          \draw[wire] (i2) to[out=-90,in=0]  (G23);
          \draw[wire] (i3) to[bend right=15] (G23);
        \end{tikzpicture}
         \caption{$\{(1,4), (2,3)\}$}
         \label{fig:14-23}
     \end{subfigure}
        \caption{Tensor networks calculating $\Tr(J_{1234}\gamma_{\{1,2,3,4\}} \rho) = J_{1234}(-\Gamma_{1,2}\Gamma_{3,4}+\Gamma_{1,3}\Gamma_{2,4}-\Gamma_{1,4}\Gamma_{2,3})$ (the signs are due to the antisymmetry of the SYK tensor $J$).}
        \label{fig:SYK-tensor-network}
\end{figure}

To calculate the value of each tensor network contraction, we will flatten each of the degree-2 $\Gamma$ tensors into vectors $\Gamma_{vec} \in \mathbb{C}^{4n^2}$, and each will be labeled either ``left'' (covariant) or ``right'' (contravariant), which can be chosen freely. For each of the $k$ tensors $\mathbf{J}$, we will also consider it as a rank $(p, 4-p)$ tensor with $p$ covariant indices and $4-p$ contravariant indices for any $p=0,1,2,3,4$. The rank $(p, 4-p)$ is determined by the number of left vectors the tensor $\mathbf{J}$ is attached to $(4-p)$ and the number of right vectors it is attached to $(p)$. By multi-linearity, each $\mathbf{J}$ tensor thus performs a linear operation equivalent to a $(2n)^p\times(2n)^{4-p}$ matrix denoted $J^{mat}_{p,4-p}$, where $\left(J^{mat}_{p,4-p}\right)_{(i_1,\dots,i_p),(j_1,\dots,J_{4-p})} \triangleq \mathbf{J}_{i_1,\dots,i_p,j_1,\dots,j_{4-p}}$.

The tensor network contracts to give
\begin{equation}
    \beta_\sigma \triangleq \left\langle\pi_L(\Gamma_{vec}^{\otimes N_L})\right| \bigotimes_{p=0}^{4} (J^{mat}_{p,4-p})^{\otimes k_{p,4-p}}\left|\pi_R(\Gamma_{vec}^{\otimes N_R})\right\rangle,
\end{equation}
where $N_L$ is the number of left vectors, $N_R$ is the number of right vectors, $k_{p,4-p}$ is the number of $\mathbf{J}$ tensors of rank $(p,4-p)$, and $\pi_L$ and $\pi_R$ are some permutations on the contravariant and covariant indices of the $\mathbf{J}$ tensors respectively (for an example, see \cref{fig:SYK-tensor-contraction}). Note that $2(N_L + N_R) = 4(k_{0,4}+ k_{1,3}+k_{2,2}+ k_{3,1}+ k_{4,0})=4k$.

\begin{figure}[ht]
     \centering
        \begin{tikzpicture}[x=1cm,y=1cm]
            \node[tensor] (Jtop) at (0, 2.2) {$J_{1,3}^{mat}$};
            \node[tensor] (Jbot) at (0,-2.2) {$J_{3,1}^{mat}$};
            
            \coordinate (T_L)   at ($(Jtop.west)+(0, 0.00)$);
            \coordinate (T_R1)  at ($(Jtop.east)+(0, 0.6)$);
            \coordinate (T_R2)  at ($(Jtop.east)+(0, 0.00)$);
            \coordinate (T_R3)  at ($(Jtop.east)+(0,-0.6)$);
            
            \coordinate (B_L1)  at ($(Jbot.west)+(0, 0.6)$);
            \coordinate (B_L2)  at ($(Jbot.west)+(0, 0.00)$);
            \coordinate (B_L3)  at ($(Jbot.west)+(0,-0.6)$);
            \coordinate (B_R)   at ($(Jbot.east)+(0, 0.00)$);
            
            \draw[leg] (T_L)  -- ++(-0.35,0) coordinate (T_Le);
            \draw[leg] (T_R1) -- ++( 0.35,0) coordinate (T_R1e);
            \draw[leg] (T_R2) -- ++( 0.35,0) coordinate (T_R2e);
            \draw[leg] (T_R3) -- ++( 0.35,0) coordinate (T_R3e);
            
            \draw[leg] (B_L1) -- ++(-0.35,0) coordinate (B_L1e);
            \draw[leg] (B_L2) -- ++(-0.35,0) coordinate (B_L2e);
            \draw[leg] (B_L3) -- ++(-0.35,0) coordinate (B_L3e);
            \draw[leg] (B_R)  -- ++( 0.35,0) coordinate (B_Re);
            
            \coordinate (xL) at (-3,0.0);
            \node[Gnode] (GLu) at ($(xL)+(0,  1.5)$) {$\Gamma$};   
            \node[Gnode] (GLb) at ($(xL)+(0, -1.5)$) {$\Gamma$};   
            
            \draw[wire] (GLu) to[out=20,  in=180] (T_Le);
            \draw[wire] (GLu) to[out=-20, in=180] (B_L1e);
            
            \draw[wire] (GLb) to[out=20,  in=180] (B_L2e);
            \draw[wire] (GLb) to[out=-20, in=180] (B_L3e);
            
            \node[group, fit=(GLu) (GLb)] (VLbox) {};
            \node[lbl] at ($(VLbox.north)+(0,0.20)$) {$v_L$};
            
            \coordinate (xR) at (3,0.0);
            \node[Gnode] (GRu) at ($(xR)+(0,  1.5)$) {$\Gamma$};   
            \node[Gnode] (GRb) at ($(xR)+(0, -1.5)$) {$\Gamma$};   
            
            \draw[wire] (GRu) to[out=160,in=0]  (T_R1e);
            \draw[wire] (GRu) to[out=200,in=0]  (T_R2e);
            
            \draw[wire] (GRb) to[out=160,in=0]  (T_R3e);
            \draw[wire] (GRb) to[out=200,in=0]  (B_Re);
            
            \node[group, fit=(GRu) (GRb)] (VRbox) {};
            \node[lbl] at ($(VRbox.north)+(0,0.20)$) {$v_R$};
        \end{tikzpicture}

    \caption{Tensor network contraction representing $\left\langle\pi_L(\Gamma_{vec}^{\otimes 2})\right| J^{mat}_{1,3}\otimes J^{mat}_{3,1}\left|\pi_R(\Gamma_{vec}^{\otimes 2})\right\rangle$.}
    \label{fig:SYK-tensor-contraction}
\end{figure}

To bound the value of $\beta_\sigma$, we upper-bound the Euclidean norm of the vectors and the associated operator norms of the random matrices $J^{mat}_{p,4-p}$,
\begin{equation}
    \beta_\sigma \leq \left|\langle\pi_L(\Gamma_{vec}^{\otimes N_L})\right|_{2} \left|\pi_R(\Gamma_{vec}^{\otimes N_R})\right|_{2} \cdot \prod_{p=0}^{4} \norm{(J^{mat}_{p,4-p})^{\otimes k_{p,4-p}}}{\mathrm{op}} = \norm{\Gamma}{\mathrm{F}}^{N_L+N_R} \cdot \prod_{p=0}^{4} \norm{J^{mat}_{p,4-p}}{\mathrm{op}}^{k_{p,4-p}}.
\end{equation}
We have $\norm{\Gamma}{\mathrm{F}} = \sqrt{\Tr(\Gamma\Gamma^{\dagger})} = \sqrt{\Tr(\Gamma\Gamma^{\top})} = \sqrt{\Tr(\mathbb{I})} = \sqrt{2n}$ since $\Gamma$ is a real matrix and $\Gamma\Gamma^\top = \mathbb{I}$ for all pure fermionic Gaussian states. 
For Gaussian states that are not pure $\norm{\Gamma}{\mathrm{F}} \le \sqrt{2n}$ by the same argument.
Additionally, $J^{mat}_{p,4-p}$ is a $(2n)^p\times(2n)^{4-p}$ random matrix with i.i.d. standard Gaussian entries in its nonzero indices. Hastings bounds the operator norm of the random Gaussian matrices (e.g., using Theorem 4.4.3 of \cite{vershynin2018high}), giving $\norm{J^{mat}_{0,4}}{\mathrm{op}} \leq O(1)$, $\norm{J^{mat}_{4,0}}{\mathrm{op}} \leq O(1)$, $\norm{J^{mat}_{1,3}}{\mathrm{op}}\leq O(1/\sqrt{n})$, $\norm{J^{mat}_{3,1}}{\mathrm{op}} \leq O(1/\sqrt{n})$, and $\norm{J^{mat}_{2,2}}{\mathrm{op}}\leq O(1/n)$, all with probability exponentially close to 1. This shows that
\begin{align*}
    \beta_\sigma &\leq O(\sqrt{n})^{N_L+N_R} \cdot O(1)^{k_{0,4}+k_{4,0}} \cdot O(1/\sqrt{n})^{k_{1,3}+k_{3,1}}\cdot O(1/n)^{k_{2,2}} \\&\leq O(n)^{k-k_{2,2}-k_{1,3}/2-k_{3,1}/2}O(1)^k.
\end{align*}
Hastings and Ryan O'Donnell show that there always exists some labeling of $\Gamma$ vectors as left and right such that $k_{2,2}=k$, guaranteeing that $\beta_\sigma \leq O(1)^k$ for any pairing $\pi$ \cite{hastings2023fieldtheory}. 
There are at most $(4k-1)!! \leq (4k)^{2k}$ distinct pairings, and summing over all of them gives $\Tr(H^k\rho) \leq (4k)^{2k} \cdot O(1)^{k} = O(k)^{2k}$.

\paragraph{Sparse SYK Generalization}
We prove the following technical lemma generalizing the results of Hastings, which thereby proves \cref{thm:gaussian-upper-bound} upon setting $k=1$:
\begin{lemma}\label{lemma:k-moment-syk-bound}
    In the sparse SYK Hamiltonian $H_{SSYK}$ with any sparsity parameter $p$, with probability at least $1-\mathrm{poly}(n)$ over the draw of the Hamiltonian, the set of all Gaussian states achieves a maximum energy on $H_{SSYK}^k$ which is upper-bounded by $O(k)^{2k}$ when $p \geq \Omega(\log n/n^2)$, and a maximum energy on $H_{SSYK}^k$ which is upper-bounded by $O\left(\frac{k^2\log n}{n\sqrt{p}}\right)^{k}$ when $p < o(\log n/n^2)$. Equivalently
    \begin{equation}
            \Pr\left(\lambda_{\mathrm{Gauss}}(H_{SSYK}^k) \leq \begin{cases}
O(k)^{2k}  & \text{if } p \geq \Omega(\log n/n^2) \\
O\left(\frac{k^2\log n}{n\sqrt{p}}\right)^{k} & \text{if } p < o(\log n/n^2)\end{cases}\right)
 \geq 1-\mathrm{poly}(n).
    \end{equation}
\end{lemma}
We note that the random variable $p^{-1/2}X_{ij}J_{ij}$ has subgaussian norm $O(p^{-1/2})$, so a simple application of the operator norm bounds used by Hastings (e.g., Theorem 4.4.3 of \cite{vershynin2018high}) would imply that $\norm{J^{X}_{2,2}}{\mathrm{op}} \leq 1/\Omega(np^{1/2})$ with probability $\geq 1-\exp(-\Omega(n))$. However, this gives an extra $\Omega(p^{-1/2})$ factor which we wish to avoid in the large $p$ regime. We instead use the following operator norm tail-bound:
\begin{proposition}[Corollary 3.9 of \cite{Bandeira_2016}]\label{lemma:Bandeira}
    Let $X$ be the $m\times m$ symmetric matrix with $X_{ij}$ = $b_{ij}J_{ij}$,
    where $\{J_{ij}: i\geq j\}$ are i.i.d.\ standard Gaussian random variables and $\{b_{ij}: i\geq j\}$ are fixed scalars. Then for any $0<\epsilon < 1/2$, there exists a constant $c_\epsilon$ such that for any $t\ge 0$:
    \begin{equation}
        \Pr\left[\norm{X}{\mathrm{op}} \geq (1+\epsilon)2\sigma + t\right] \leq me^{-t^2/c_\epsilon \sigma_*^2},
    \end{equation}
    where $\sigma \triangleq \max_i \sqrt{\sum_j b_{ij}^2}$ and $\sigma_* \triangleq \max_{ij} |b_{ij}|$.
\end{proposition}
\begin{proof}[Proof of \cref{lemma:k-moment-syk-bound}]
    From \cite{hastings2023fieldtheory} (and as described above), we have $\lambda_{\mathrm{Gauss}}(H^{k}_{SSYK})$ is upper-bounded by $O(k)^{2k} \cdot O\left(n\norm{J^{X}_{2,2}}{\mathrm{op}}\right)^k$, where $J^{X}_{2,2}$ is a $4n^2\times 4n^2$ symmetric matrix with nonzero entries drawn from $\binom{2n}{4}^{-1/2}p^{-1/2}X_{ij}J_{ij}$ where $\{X_{ij}\}_{i\leq j}$ are i.i.d.\ Bernoulli random variables with $\Pr[X_{ij}=1]=p$ and $\{J_{ij}\}_{i\leq j}$ are standard i.i.d.\ Gaussian random variables. We will bound $\norm{J^{X}_{2,2}}{\mathrm{op}}$ with probability $1-\mathrm{poly}(n)$ over the draw of $H_{SSYK}$.

    We note that $J^{X}_{2,2}$ is symmetric as $\mathbf{J}$ is antisymmetric in each index, so we may apply \cref{lemma:k-moment-syk-bound}. For $J^{X}_{2,2}$, $\sigma_* = \binom{2n}{4}^{-1/2}p^{-1/2}$ and $\sigma$ is upper bounded by the maximum over $4n^2$ random variables each with marginal distribution $\sim \binom{2n}{4}^{-1/2}p^{-1/2} \sqrt{\mathrm{Binom}(4n^2, p)}$. 
    
    Consider the two cases:
    \begin{itemize}
        \item When $p \geq \Omega(\log n/n^2)$, by union bound, with probability $1-n^2\exp(-\Omega(n^2p))$, the binomial random variables satisfy $\mathrm{Binom}(n^2, p) \leq Cn^2p$ in all $4n^2$ rows for some constant $C>0$; this is at least $1-1/\mathrm{poly}(n)$. Conditioned on this ``good event'', $\sigma = \binom{2n}{4}^{-1/2}p^{-1/2} \sqrt{Cn^2p} = O(1/n)$. With $p \geq \Omega(\log n/n^2)$, \cref{lemma:Bandeira} gives (conditioned on the good event)
         \begin{equation}
            \Pr\left[\norm{J^{X}_{2,2}}{\mathrm{op}} \geq C(1+\epsilon)/n + t\right] \leq n^2e^{-\Omega(t^2n^4p)} \leq e^{-\Omega(t^2n^2\log n)}.
        \end{equation}
        By setting $t = C'/n$ for some large enough constant $C'>0$, there exists a large enough constant $C''>0$ such that 
             \begin{equation}
            \Pr\left[\norm{J^{X}_{2,2}}{\mathrm{op}} \geq C''/n\right] \leq e^{-(C')^2\log n} \leq 1/\mathrm{poly}(n).
        \end{equation}

        \item When $p < o(\log n/n^2)$, with probability at least $1-1/\mathrm{poly}(n)$, $\mathrm{Binom}(4n^2, p) \leq O(\log n)$ in all $4n^2$ rows. Conditioned on this good event, $\sigma = \binom{2n}{4}^{-1/2}p^{-1/2} O(\log n) = O(\frac{\log n}{n^2\sqrt{p}})$ and \cref{lemma:Bandeira} gives
       \begin{equation}
           \Pr\left[\norm{J^{X}_{2,2}}{\mathrm{op}} \geq \frac{C(1+\epsilon)\log n}{n^2\sqrt{p}} + t\right] \leq n^2e^{-\Omega(t^2n^4p)}.
       \end{equation}
       Setting $t = \frac{C'\log n}{n^2\sqrt{p}}$ for some large enough constant $C'>0$, there exists a large enough constant $C''>0$ such that 
        \begin{equation}
            \Pr\left[\norm{J^{X}_{2,2}}{\mathrm{op}} \geq \frac{C''\log n}{n^2\sqrt{p}}\right] \leq e^{-\frac{(C')^2pn^4 \log^2 n}{pn^4}} \leq 1/\mathrm{poly}(n)
        \end{equation}
    \end{itemize}

    Thus with high probability, $\lambda_{\mathrm{Gauss}}(H^{k}_{SSYK}) \leq O(k)^{2k} O\left(1\right)^k = O(k)^{2k}$ when $p \geq \Omega(\log n / n^2)$, and $\lambda_{\mathrm{Gauss}}(H^{k}_{SSYK}) \leq O(k)^{2k} O\left(\frac{n\log n}{n^2\sqrt{p}}\right)^k = O\left(\frac{k^2\log n}{n\sqrt{p}}\right)^k$ when $p<o(\log n/n^2)$.
\end{proof}

Due to Hastings \cite{hastings2023fieldtheory}, we also get the following as a corollary from \cref{lemma:k-moment-syk-bound} (see \cref{appendix:entangled-gaussian}):
\begin{corollary}\label{cor:magic}
    In the sparse SYK model with any sparsity parameter $p \geq \Omega(\log n/n^2)$, the following variational ansatzes achieve a maximum energy on $H_{SSYK}$ which is upper-bounded by $o(\sqrt{n})$ with probability $\geq 1-\mathrm{poly}(n)$ over the draw of the Hamiltonian.
    \begin{itemize}
        \item Any superposition of $\exp(o(n^{1/4}))$ many orthogonal Gaussian states (i.e., a low-magic state)
        \item The set of states obtained using $o(n^{1/4})/\log n$ steps of the Lanczos algorithm starting from a Gaussian state
    \end{itemize}
\end{corollary}
\subsection{Gaussian state energy lower bound} \label{sec:gaussian_lower}
Recall the sparse SYK model
\begin{equation}
     H_{SSYK}(p) \triangleq \binom{2n}{4}^{-1/2}p^{-1/2}\sum_{\substack{I\subset[2n]\\|I|=4}} J_I X_I \gamma_I,
\end{equation}
where the $J_I$ are i.i.d.\ standard Gaussian random variables and the $X_I$ are i.i.d.\ Bernoulli random variables with $\mathbb{P}\left(X_I=1\right)=p$. We prove that Gaussian states achieve---with probability exponentially close to 1 over the disorder---$\Omega\left(1\right)$ energy for the sparse SYK model for all values $p$, i.e. $\lambda_{\mathrm{Gauss}}(H_{SSYK}) = \Omega(1)$. In addition, when $\omega(1/n^3)<p\leq O(1/n^{2.5})$, we have  $\lambda_{\mathrm{Gauss}}(H_{SSYK}) = \Omega\left(\frac{1}{pn^{2.5}}\right)$ with probability exponentially close to 1.
\begin{theorem}
    In the sparse SYK model with any sparsity parameter $p$, with probability exponentially close to 1 over the draw of the Hamiltonian, there exists a fermionic Gaussian state $\rho$ with energy $\Omega(1)$ for all $p$ and energy $\Omega\left(\frac{1}{pn^{2.5}}\right)$ when $p=\Theta(1/n^{a})$ with $2.5\leq a<3$. Equivalently
    \begin{equation}
            \Pr\left(\lambda_{\mathrm{Gauss}}(H_{SSYK}) \geq \begin{cases}
\Omega(1)  & \text{if } p \geq \Omega(1/n^{2.5}) \\
\Omega\left(\frac{1}{pn^{2.5}}\right) & \text{if } \omega(1/n^3) < p \leq O(1/n^{2.5})\end{cases}\right)
 \geq 1-\exp(-\Omega(n^c))
    \end{equation}
    for some constant $c>0$.
\end{theorem}
\begin{proof}
Considering the range for $p=\Theta(1/n^{a})$, where $2.5\leq a<3$, we note that the algorithm from \cite{Hothem2023} outputs a Gaussian state with approximation ratio $\Omega(1/k)$, where $k$ is the maximum degree of the interaction graph, i.e. the maximum number of terms any single Majorana operator is in. From \cref{thm:ssyk-energy-concentration}, this corresponds to energy $\Omega(\sqrt{n}/k)$. With sparsification parameter $p$, the degree of each vertex is $\Binomial{O(n^3),p}$, which is upper-bounded by $O(n^3p)$ with probability exponentially close to 1 by standard concentration bounds (see \cref{sec:concentration}). Taking a union bound over all $O(n^4)$ vertices, we have $k \leq O(n^3p)$ with probability exponentially close to 1. Conditioned on this event, the energy of the Gaussian state outputted by the algorithm from \cite{Hothem2023} is $\Omega\left(\frac{\sqrt{n}}{n^3p}\right) = \Omega\left(\frac{1}{pn^{2.5}}\right)$.

Our argument for the $\Omega(1)$ energy lower bound is similar to Remark~7.14 in the arXiv version of \cite{hastings2023optimizing}, though we reproduce the full details here for completeness. Note that the Gaussian state depends on the specific SSYK Hamiltonian. Let $\bm{g}^0$ be the $2n\times 2n$ block diagonal matrix:
\begin{equation}
    \bm{g}^0\triangleq i\bm{\varPi}_{n/2}\otimes\bm{Y},
\end{equation}
where
\begin{equation}
    \bm{Y}\triangleq\begin{pmatrix}
        0 & -i\\
        i & 0
    \end{pmatrix}
\end{equation}
is the Pauli $Y$ matrix and $\bm{\varPi}_{n/2}$ is the $n\times n$ projector with ones along the first $n/2$ entries of the diagonal and zeros everywhere else.

Similarly, let $\bm{g}^1$ be the $2n \times 2n$ matrix with entries:
\begin{equation}
    \bm{g}_{k,l}^1\triangleq\bm{1}\left\{k>n\wedge l>n\right\}\left(-1\right)^{\bm{1}\left\{k>l\right\}}\sum_{i<j\leq n} \bm{g}_{i,j}^0 J_{\left\{i,j,k,l\right\}}X_{\left\{i,j,k,l\right\}},
\end{equation}
where $\bm{1}\left\{\cdot\right\}$ denotes the indicator function. The bottom-right $n\times n$ block of $\bm{g}^1$ defines a Wigner matrix with entries with variance $\Theta\left(np\right)$. Standard results in random matrix theory (see \cite{bai1988necessary} or Theorem~2.1.22 of \cite{Anderson_Guionnet_Zeitouni_2009}) then imply that $\norm{\bm{g}^1}{\mathrm{op}}=\Theta\left(n\sqrt{p}\right)$ w.h.p.

Consider now the antisymmetric matrix
\begin{equation}
    \bm{\varSigma}\triangleq\bm{g}^0+C\bm{g}^1,
\end{equation}
where $C=\Theta\left(\frac{1}{n\sqrt{p}}\right)$ such that $\bm{\varSigma}$ defines the covariance matrix of a valid quantum Gaussian state. Using the block diagonal structure of the covariance matrix, the fact that $\Theta\left(n^3\right)$ terms in the Hamiltonian give a nonzero contribution to the energy, and the variance $p$ of the Bernoulli random variables, we calculate the expected energy achieved by this Gaussian state $\rho_{\varSigma}$:
\begin{equation}
    \mathbb{E}\left[\Tr\left(H_{SSYK}(p)\rho_{\varSigma}\right)\right]=C\binom{2n}{4}^{-1/2}p^{-1/2}\times\Theta\left(n^3p\right)=\Theta\left(1\right).
\end{equation}
This statement is also true with probability exponentially close to 1 over the disorder by an identical argument to \cref{sec:universality}, taking the maximum energy over Gaussian states ($\lambda_{\text{Gauss}}$) instead of the maximum energy over all states ($\lambda_{\text{max}}$).
\end{proof}

\subsection{Low circuit complexity energy upper bound}\label{subsec:circuit-complexity-upper}

In this subsection, we prove the following proposition for general low-circuit complexity states:
\begin{proposition}[Sparse SYK circuit complexity lower bound]\label{prop:nlts-upper-bound}
    With probability exponentially close to 1 over the draw of the Hamiltonian, any disorder-independent set of states $S$ containing a state which achieves energy $t$ on the sparse SYK model with sparsity parameter $p$ must have at least $|S| \geq \exp(\Omega(p n^2 t^2))$ distinct states. 
\end{proposition}
\begin{proof}
For the standard $p=1$ SYK-4 model, $\braket{\psi | H_{SYK} | \psi}$ is a Gaussian with variance $\sigma^2 \leq O(n^{-2})$ due to \cref{lem:syk_bound_via_lovasz}. As a reminder, we define our sparse SYK model as
\begin{equation}
    H_{SSYK} \triangleq \binom{2n}{4}^{-1/2}p^{-1/2}\sum_{B\in S^{2n}} J_I X_I B,
\end{equation}
where $X_I$ are independent Bernoulli random variables with $P(X_I=1)=p$. Note that the number of nonzero terms in the sum above is distributed binomially as $\operatorname{Bin}(\binom{2n}{4}, p)$. Therefore by Chernoff bounds on the Binomial distribution (\cref{eqn:simple-binomial-chernoff}), it holds that
\begin{equation}
    \mathbb{P}\left[ \left|\sum_{\substack{I\subset[2n]\\|I|=4}} X_I - p\binom{2n}{4}\right| \geq \delta p\binom{2n}{4} \right] \leq \exp\left(-\Omega\left(\delta^2p\binom{2n}{4}\right)\right).
\end{equation}
Therefore, when $p\binom{2n}{4} = \omega(1)$, there exists a choice of $\delta = o(1)$ which suffices to guarantee that with high probability the number of terms is $\sum_I X_I = \Theta(pn^4)$. 

Since removing a node from $G(S^{2n})$ cannot increase the Lovász theta function of the graph, denoting $G(S^{2n}_p)$ as a given sparsified graph with parameter $p$, we have the universal bound 
\begin{equation}
    \vartheta(G(S^{2n}_p)) \leq \vartheta(G(S^{2n})) \leq O(n^2)
\end{equation}
for all $p$. By \Cref{lem:bound_via_lovasz}, noting that $\sum_I X_I = \Theta(pn^4)$ with probability $1-\exp\left(-\operatorname{\Omega}\left(pn^4\right)\right)$, and standard Gaussian tail bounds, we then have that for any state $\ket{\psi}$:
\begin{equation}
    \Pr(\braket{\psi | H_{SSYK} | \psi} \geq t) \leq \exp(-\Omega(pn^2t^2)).
\end{equation}
Thus, by the union bound, we must have circuit complexity $N(p) \geq \Omega(pn^2t^2)$ to have $\Omega(1)$ probability of achieving energy $t$ over the set of states in the ansatz.
\end{proof}

As shown in \cref{prop:gaussian-net}, we can construct a net of all Gaussian states with $\epsilon$ additive error in energy using $O(n^2 \log (m/\epsilon))$ gates with respect to Hamiltonian $H$, given that $\norm{H}{\mathrm{op}}\leq m$. When $p>\omega(1/n^3)$, we have $\norm{H_{SSYK}}{\mathrm{op}}\leq O(\sqrt{n})$ with probability exponentially close to 1 over the draw of the Hamiltonian (due to \cref{thm:ssyk-energy-concentration}), implying we can use $O(n^2 \log n)$ gates to construct our $\epsilon$-net for any fixed constant $\epsilon$. From \cref{prop:nlts-upper-bound}, for Gaussian states to achieve energy $t$ we must have $\Omega(pn^2t^2) \leq O(n^2\log n)$, or $t \leq O(\sqrt{\log n/p})$. Thus to achieve any constant approximation to the maximum energy of $t = \Omega(\sqrt{n})$, we cannot have $p \geq \Omega(\log n/n)$.

\paragraph{Circuit complexity conjecture.}

\begin{figure}[t]
    \centering
    \includegraphics[]{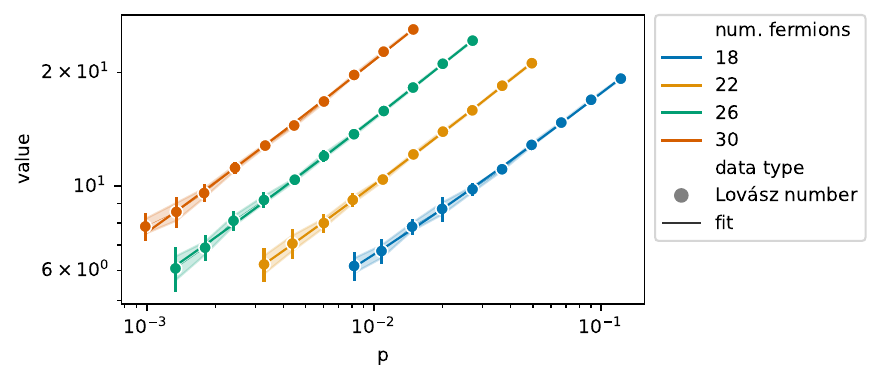}
    \caption{The empirical scaling of the Lovász number of sparsified commutation graphs with respect to the sparsification parameter $p$ conforms well to a scaling of $\Theta(\sqrt{p})$. Here, we plot a comparison of the Lovász number of sparsified commutation graphs of the SYK model with their fit to a square root power law $c_1\sqrt{p}+c_2$ where $c_1$ and $c_2$ are the fit parameters. Error bars are equal to the standard deviation over values drawn from $10$ random sparsifications of the initial graph for each data point.}
    \label{fig:lovasz_numerics}
\end{figure}

\Cref{prop:nlts-upper-bound} states that the circuit complexity of an ansatz achieving energy $t$ on the sparse SYK model with sparsity parameter $p$ must have circuit complexity $\Omega(pn^2t^2)$. We conjecture here that the stronger lower-bound of $\Omega(\sqrt{p}n^2 t^2)$ also holds:

\begin{conjecture}[Sparse SYK circuit complexity conjecture]\label{conj:nlts-upper-bound}
    With probability exponentially close to 1 over the draw of the Hamiltonian, any disorder-independent set of states $S$ containing a state which achieves energy $t$ with high probability on the sparse SYK model with sparsity parameter $p$ must have at least $|S| \geq \exp(\Omega(\sqrt{p} n^2 t^2))$ distinct states.
\end{conjecture}

\Cref{fig:lovasz_numerics} provides numerical evidence for \Cref{conj:nlts-upper-bound} for various small system sizes. There, we plot the value of the Lovász number of sparsified graphs for various values of $p$. The data conforms well to a power law fit governed by the equation $c_1\sqrt{p}+c_2$ where $c_1, c_2$ are the fit parameters.\footnote{The code to reproduce the numerical experiments in \Cref{subsec:circuit-complexity-upper} is available at \url{https://github.com/bkiani/Lovasz_for_SYK}.} By \Cref{lem:bound_via_lovasz}, this scaling would imply that the commutation index $\Delta$ of the sparsified model scales as $\Delta=O(p^{-1/2}n^2)$, resulting in the statement of \Cref{conj:nlts-upper-bound}.

\begin{remark}
    For continuous ansatzes with $d$ parameters (e.g., fermionic Gaussian states for $d=n^2$), we can generally construct an $\epsilon$-net using $d\cdot \text{polylog}(d)$ circuit gates (e.g., see \cref{prop:gaussian-net}). Therefore \cref{prop:nlts-upper-bound} and \cref{conj:nlts-upper-bound} correspond to lower-bounds of $d \gtrsim p n^2 t^2$ and $d\gtrsim \sqrt{p} n^2 t^2$ parameters respectively for any ansatz with a state achieving at least $t$ energy, up to polylogarithmic factors in $n$. 
\end{remark}
\section{Optimal energy quantum algorithm for sparse SYK Hamiltonians}\label{sec:sparse-ho-alg}
In this section we prove \cref{thm:sparse-ho}, which we restate below: \quantumthm*

First, consider $2n$ Majorana operators split into two sets, $\left\{\phi_1, \phi_2,\ldots,\phi_{n_1}\right\}$ of cardinality $n_1$ and $\left\{\chi_1, \chi_2,\ldots,\chi_{n_2}\right\}$ of cardinality $n_2$. We define a new \emph{sparse} two-color SYK-4 Model:
\begin{definition}[Sparse two-color SYK model]
    Let $H^{(2)}_{SSYK}$ be the Hamiltonian defined as:
    \begin{equation}
        H^{(2)}_{SSYK}(p) \triangleq \frac{i}{\sqrt{n_2}} \sum_{j=1}^{n_2}\tau_j(p) \chi_j,
        \qquad \qquad
        \tau_j(p) \triangleq i \binom{n_1}{3}^{-1/2} p^{-1/2} \sum_{S \subset [n_1], |S| = 3} J_{S,j} X_{S,j} \phi_S,
    \end{equation}
    where the $X_{S,j}$ are i.i.d.\ Bernoulli random variables with $\Pr(X_{S,j} = 1) = p$, $J_{S,j}$ are i.i.d. standard Gaussian random variables, $\phi_S \triangleq \prod_{i \in S} \phi_i$ for the ordered set $S$, and $n_1, n_2 = \Theta(n)$. We can see that this directly generalizes the original two-color (dense) SYK-4 model \cite{hastings2023optimizing} by setting $p = 1$. Note that we will write $H^{(2)}_{SSYK}$ and $\tau_j$ where the dependence on $p$ is implicitly assumed.
\end{definition}

In \cref{subsec:reduction} we show that near-optimal states of the sparse two-color SYK model are (with high probability) near-optimal states of the standard sparse SYK model. Then in \cref{subsec:ho_algorithm}, we show that the Hastings--O'Donnell algorithm (\cref{alg:generalized_HO}) achieves $\Omega(\sqrt{n})$ energy on the sparse two-color SYK model.

\subsection{Reduction from sparse SYK-4 model}\label{subsec:reduction}
Consider the sparse SYK-4 model
\begin{equation}\label{eq:sparse_syk_ham_sec_4}
    H_{SSYK}(p) \triangleq \binom{2n}{4}^{-1/2} p^{-1/2} \sum_{\substack{I \subset [2n]\\|I| = 4}} X_I J_I \gamma_I
\end{equation}
where $\{X_I\}_I$ are i.i.d\ Bernoulli random variables with $\Pr(X_I = 1) = p$ and $\{J_I\}_I$ are i.i.d\ standard Gaussians. We prove the following reduction:
\begin{lemma}\label{lemma:two-color-reduction}
    Fix some sparsification parameter $p$. Let $H^{(2)}_{SSYK}$ be a sparse two-color SYK Hamiltonian with $0.1n \leq n_1, n_2 \leq 2n$. For fixed state $\rho$, if $\Tr(H^{(2)}_{SSYK}\rho) \geq C_1\sqrt{n}$ with probability $e^{-\Omega(n)}$ for some constant $C_1$, then $\Tr(H_{SSYK}\rho) \geq C_2\sqrt{n}$ with probability $e^{-\Omega(n)}$ for some constant $C_2$.
\end{lemma}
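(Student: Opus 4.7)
The plan is to embed the two-color model inside the full sparse SYK by identifying $\phi_i = \gamma_i$ for $i \in [n_1]$ and $\chi_j = \gamma_{n_1+j}$ for $j \in [n_2]$, and then to split
\begin{equation*}
H_{SSYK} = H_{3+1} + H_{\mathrm{rest}},
\end{equation*}
where $H_{3+1}$ collects those terms $\gamma_I$ whose index set $I$ has three indices in $[n_1]$ and one in $\{n_1+1,\ldots,n_1+n_2\}$, and $H_{\mathrm{rest}}$ collects every other term (including those involving the $\gamma_i$ with $i > n_1+n_2$, if any). Because $H_{3+1}$ and $H_{\mathrm{rest}}$ depend on disjoint subfamilies of $(X_I, J_I)$, they are independent as random operators, and it suffices to show that $H_{3+1}$ carries the hypothesis while $H_{\mathrm{rest}}$ contributes negligibly to $\Tr(\cdot\,\rho)$.

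First I would transfer the hypothesis from $H^{(2)}_{SSYK}$ to $H_{3+1}$. Using $\gamma_I = \phi_S\chi_{t-n_1}$ for $I = S \sqcup \{t\}$ (with $S\subset[n_1]$ of size three and $n_1 < t \leq n_1+n_2$) and comparing the prefactors $\binom{2n}{4}^{-1/2}$ and $\binom{n_1}{3}^{-1/2}n_2^{-1/2}$ in the two definitions, a short calculation gives
\begin{equation*}
H_{3+1} \stackrel{d}{=} c\, H^{(2)}_{SSYK}, \qquad c \triangleq \sqrt{\frac{\binom{n_1}{3}\, n_2}{\binom{2n}{4}}} = \Theta(1),
\end{equation*}
where the distributional equality absorbs an overall minus sign via the $J \mapsto -J$ symmetry of the standard Gaussians. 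The hypothesis then immediately yields $\Tr(H_{3+1}\rho) \geq c\, C_1\sqrt{n}$ with probability at least $1 - e^{-\Omega(n)}$.

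The main technical step is to control $\Tr(H_{\mathrm{rest}}\rho)$. Conditioning on the sparsity pattern $\{X_I\}$, the random variable $\Tr(H_{\mathrm{rest}}\rho)$ is a centered Gaussian in the remaining $\{J_I\}$ with conditional variance at most
\begin{equation*}
\binom{2n}{4}^{-1} p^{-1} \sum_{|I|=4} \Tr(\gamma_I \rho)^2 = O\!\left(\frac{1}{n^2 p}\right),
\end{equation*}
where the final bound follows from the commutation-index estimate in \cref{lem:syk_bound_via_lovasz}---namely $\sum_{|I|=4}\Tr(\gamma_I\rho)^2 \leq |\mathcal{S}|\,\Delta(\mathcal{S}) = \Theta(n^2)$---extended from pure to mixed $\rho$ by a one-line Jensen's-inequality argument. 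Since this variance bound holds uniformly in $\{X_I\}$, the Gaussian tail gives $|\Tr(H_{\mathrm{rest}}\rho)| \leq \delta \sqrt{n}$ with failure probability $\exp(-\Omega(\delta^2 n^3 p))$, which is $e^{-\Omega(n)}$ throughout the sparsity regime $p \geq \Omega(1/n^2)$ (and in particular the regime $p \geq \Omega(\log n/n)$ targeted by \cref{thm:sparse-ho}).

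Finally, choosing $\delta = cC_1/2$ and union-bounding over the two events yields $\Tr(H_{SSYK}\rho) \geq (cC_1/2)\sqrt{n}$ with probability at least $1 - e^{-\Omega(n)}$, so $C_2 \triangleq cC_1/2$ suffices. I expect the main obstacle to be the variance bound in the third step: the trivial estimate $\Tr(\gamma_I\rho)^2 \leq 1$ only yields $\sum_I \Tr(\gamma_I\rho)^2 \leq \binom{2n}{4} = O(n^4)$, which is too weak by a factor of $n^2$ and gives no useful Gaussian concentration on the $\sqrt{n}$ scale; the crucial input is the Lovász-theta/commutation-index saving of \cref{lem:syk_bound_via_lovasz}.
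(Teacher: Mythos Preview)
Your argument is correct and follows the same overall decomposition $H_{SSYK}=H_{3+1}+H_{\mathrm{rest}}$ as the paper, but you control $\Tr(H_{\mathrm{rest}}\rho)$ by a genuinely different mechanism. The paper first Chernoff-bounds the number $m$ of surviving terms by $O(pn^4)$ and then applies the \emph{trivial} estimate $\Tr(\gamma_I\rho)^2\leq 1$ on those $m$ nonzero terms, obtaining conditional variance $\binom{2n}{4}^{-1}p^{-1}\cdot m=O(1)$; you instead skip the Chernoff step entirely and invoke the commutation-index bound of \cref{lem:syk_bound_via_lovasz} to get conditional variance $O(1/(n^2p))$ uniformly in the sparsity pattern $\{X_I\}$. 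Your route is a little cleaner---one fewer concentration argument, and no conditioning on a binomial event---and gives the sharper failure probability $e^{-\Omega(n^3p)}$, at the price of importing the Lov\'asz-theta machinery and of covering only $p\geq\Omega(1/n^2)$ rather than the paper's $p\geq\Omega(1/n^3)$; both comfortably contain the regime $p\geq\Omega(\log n/n)$ needed for \cref{thm:sparse-ho}. Accordingly, your closing diagnosis that the trivial bound is ``too weak by a factor of $n^2$'' and that the commutation index is the ``crucial input'' is slightly off: combined with the Chernoff step on $m$, the trivial bound is precisely what the paper uses, and it suffices.
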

\begin{proof}
We partition the Majorana operators into two sets, indexed by $A = [n_1]$ and $B = [2n] \setminus A$. Let $n_2 = 2n - n_1$, and assume $n_2 \leq n_1$  without loss of generality. Consider all Majorana operators $\gamma_I$ in the sparse SYK Hamiltonian \cref{eq:sparse_syk_ham_sec_4}; we let $T$ denote the subset of all $\gamma_I$ for which $I=\{i_1 < i_2 < i_3 < i_4\}$ with $i_1, i_2, i_3 \in A$ and $i_4 \in B$. We write:
\begin{equation}
    H_{SSYK}=H_T+H_{\overline{T}},
\end{equation}
where $H_T$ contains only the terms with $\gamma_I\in T$ and $H_{\overline{T}}$ the terms with $\gamma_I\in \overline{T}$. We can rewrite $H_T$ as
\begin{align*}
    H_T &= \binom{2n}{4}^{-1/2} p^{-1/2} \sum_{j \in B}  \left(\sum_{I \subset A, |I| = 3} X_{I,j} J_{I,j} \gamma_I\right) \cdot \gamma_j\\
    &= \frac{-2i}{\sqrt{2n}} \sum_{j \in B} \left(i \binom{2n-1}{3}^{-1/2}p^{-1/2} \sum_{I \subset [n_1], |I| = 3} X_{I,j} J_{I,j} \gamma_I\right) \cdot \gamma_j\\
    &= \frac{2\sqrt{n_2}}{\sqrt{2n}}\cdot\sqrt{\frac{\binom{n_1}{3}}{\binom{2n-1}{3}}}\frac{i}{\sqrt{n_2}} \sum_{j \in B} \tau_j(p) \cdot \gamma_j\\
    &= 2\sqrt{\frac{n_2\binom{n_1}{3}}{2n\binom{2n-1}{3}}} H^{(2)}_{SSYK}(p).
\end{align*}

Since both $n_1, n_2 \in [0.1n, 2n]$, we have $H_T(p) = C_3 H^{(2)}_{SSYK}(p)$, for some absolute constant $C_3$. Thus if some state $\rho$ has energy $\Omega(\sqrt{n})$ (independent of $p$) on Hamiltonian $H^{(2)}_{SSYK}(p)$, it also has energy $\Omega(\sqrt{n})$ on $H_T$.

Now we consider the $H_{\overline{T}}$ portion of the Hamiltonian with associated energy contribution
\begin{equation}
    \Tr(H_{\overline{T}}\rho) = \binom{2n}{4}^{-1/2}p^{-1/2} \sum_{I \in \overline{T}} X_IJ_I \Tr(\gamma_I\rho),
\end{equation}
where $|\Tr(\gamma_I\rho)| \leq 1$ and $\{J_I\}_I$ are i.i.d.\ standard Gaussian random variables. Since the number of non-zero terms in the summation is upper-bounded by $m \sim \Binomial{\binom{2n}{4}, p}$, by a Chernoff bound (\cref{eqn:simple-binomial-chernoff}) we have that $\Pr[m \geq Cpn^4] \leq e^{-\Omega(n)}$. Conditioned on the complement of this event which occurs with probability exponentially close to 1, $\Tr(H_{\overline{T}}\rho)$ is a Gaussian random variable with zero mean and variance $\leq O(1)$. By another Chernoff bound we have $|\Tr(H_{\overline{T}}\rho)| \leq c\sqrt{n}$ with probability exponentially close to 1 for any constant $c$.

Thus if $\Tr(H^{(2)}_{SSYK}\rho) \geq C_1\sqrt{n}$ with high probability, then $\Tr(H_{SSYK}\rho) \geq C_2\sqrt{n}$ with high probability. This gives a reduction from sparse SYK to sparse two-color SYK with at most a constant factor difference in energy scaling.
\end{proof}

\subsection{Hastings--O'Donnell algorithm applied to sparse SYK}\label{subsec:ho_algorithm}
The Hastings--O'Donnell Algorithm \cite{hastings2023optimizing} is a quantum algorithm for finding a constant approximation to the optimal energy state for the two-color SYK Hamiltonian. We show that this algorithm also works on the sparse two-color SYK Hamiltonian when $p \geq \Omega(\log n/ n)$.

Let $\sigma_j$ for $j=1, \dots, n_2$ denote an additional set of $n_2$ Majorana operators satisfying the Clifford algebra relations over degree $2n_2 + n_1$ with both themselves and additionally the Majoranas $\{\chi_j\}_j$ and $\{\phi_j\}_j$. Construct $H'$ to be the quadratic fermionic Hamiltonian
\begin{equation}
    H' \triangleq \frac{i}{\sqrt{n_2}} \sum_{j=1}^{n_2} \sigma_j \chi_j,
\end{equation}
which is optimized by the Gaussian state
\begin{equation}
    \rho_0 \triangleq \frac{1}{2^{n_2 + n_1 / 2}} \prod_{j=1}^{n_2} (\mathbb{I}+i\sigma_j \chi_j).
\end{equation}
Let $\mathrm{Ad}_U(g) \triangleq UgU^{-1}$. Define the transformed (non-Gaussian) state $\rho_\theta$, for a given parameter $\theta \in \mathbb{R}$:
\begin{equation}
    \rho_\theta \triangleq \mathrm{Ad}_{e^{-\theta\zeta}}(\rho_0) = e^{-\theta\zeta}\rho_0e^{+\theta\zeta}, \qquad \zeta \triangleq \sum_{j=1}^{n_2} \tau_j \sigma_j.
\end{equation}
In \cite{hastings2023optimizing} it was shown that this state achieves with high probability a constant approximation ratio for the dense ($p=1$) two-color SYK model for some constant $\theta$. This algorithm is summarized as \cref{alg:generalized_HO} and a visual diagram is shown in \cref{fig:HO_algorithm}.
\begin{remark}
    Technically since $H_{SSYK}^{(2)}$ is only defined on the $\{\chi_j\}_j, \{\phi_j\}_j$ Majorana operators, given our algorithm output we trace out the subsystem defined on the additional ancilla $\{\sigma_j\}_j$ Majoranas. However this does not change our analysis and we ignore it for simplicity.
\end{remark}
\begin{algorithm} 
    \SetKwInOut{Input}{Input}
    \SetKwInOut{Output}{Output}
    
    \Input{$H_{SSYK}^{(2)}$ randomly drawn from sparse two-color SYK distribution with parameter $p = \Omega(\log n/n)$.}
    \medskip
    
    $\rho_0 \gets \frac{1}{2^{n_2 + n_1 / 2}} \prod_{j=1}^{n_2} (\mathbb{I}+i\sigma_j \chi_j)$
    \medskip
    
    $\zeta \gets \sum_{j=1}^{n_2} \tau_j \sigma_j$
    \medskip

    $\theta\gets\text{constant value from \Cref{lemma:ho-energy}}$
    
    $\rho_\theta \gets e^{-\theta\zeta}\rho_0e^{+\theta\zeta}$
    \medskip
    
    \Return{$\rho_\theta$}
    \caption{Hastings--O'Donnell for Sparse SYK}
    \label{alg:generalized_HO}
\end{algorithm}

\begin{figure}[ht]
    \centering
    \usetikzlibrary{arrows.meta,calc,fit,decorations.pathreplacing}
    \tikzset{
      chi/.style={circle, draw, thick, fill=red!10,  minimum size=7.0mm, inner sep=0pt},
      sig/.style={circle, draw, thick, fill=gray!10, minimum size=7.0mm, inner sep=0pt},
      tauOuter/.style={circle, draw, thick, fill=blue!10, minimum size=15.0mm, inner sep=0pt},
      tauInner/.style={circle, draw, thick, fill=blue!3,  minimum size=3.8mm, inner sep=0pt},
      couple/.style={thick},
      pair/.style={thick, dashed},
      dashedbox/.style={draw, thick, dashed, rounded corners},
      inneredge/.style={thick},
      tautext/.style={font=\small, inner sep=1pt, anchor=east},
      linelabel/.style={font=\small, fill=white, inner sep=1pt},
      brace/.style={decorate, decoration={brace, amplitude=6pt}, thick},
      unitarrow/.style={-Latex, line width=1.2pt, >={Latex[length=3.2mm,width=2.6mm]}},
    }

    \begin{tikzpicture}[font=\small, x=1cm, y=1cm]
    
    \def\yOne{4.2}
    \def\yTwo{2.4}
    \def\yThr{0.6}
    \def\yFor{-1.2}
    
    \def\xTau{2.2}
    \def\xChi{4.8}
    \def\xSig{7.4}
    
    \newcommand{\TauNode}[4]{%
      \node[tauOuter] (#1) at (#2,#3) {};
      \node[tautext] (#1lbl) at ($(#1.west)+(-0.20,0)$) {$\tau_{#4}\sim \text{SSYK-3}$};
    
      \coordinate (#1c) at (#1.center);
      \node[tauInner] (#1i1) at ($(#1c)+( 0.36, 0.20)$) {};
      \node[tauInner] (#1i2) at ($(#1c)+(-0.34, 0.20)$) {};
      \node[tauInner] (#1i3) at ($(#1c)+( 0.00,-0.34)$) {};
    
      \draw[inneredge] (#1i1.center) -- (#1i2.center);
      \draw[inneredge] (#1i2.center) -- (#1i3.center);
      \draw[inneredge] (#1i3.center) -- (#1i1.center);
    }
    
    \TauNode{tau1}{\xTau}{\yOne}{1}
    \TauNode{tau2}{\xTau}{\yTwo}{2}
    \TauNode{tau3}{\xTau}{\yThr}{3}
    \TauNode{tau4}{\xTau}{\yFor}{4}
    
    \node[chi] (chi1) at (\xChi,\yOne) {$\chi_{1}$};
    \node[chi] (chi2) at (\xChi,\yTwo) {$\chi_{2}$};
    \node[chi] (chi3) at (\xChi,\yThr) {$\chi_{3}$};
    \node[chi] (chi4) at (\xChi,\yFor) {$\chi_{4}$};
    
    \node[sig] (sig1) at (\xSig,\yOne) {$\sigma_{1}$};
    \node[sig] (sig2) at (\xSig,\yTwo) {$\sigma_{2}$};
    \node[sig] (sig3) at (\xSig,\yThr) {$\sigma_{3}$};
    \node[sig] (sig4) at (\xSig,\yFor) {$\sigma_{4}$};
    
    \draw[couple] (tau1.east) -- (chi1.west);
    \node[linelabel] at ($(tau1.east)!0.5!(chi1.west)$) {$\tau_1\chi_1$};
    
    \draw[couple] (tau2.east) -- (chi2.west);
    \node[linelabel] at ($(tau2.east)!0.5!(chi2.west)$) {$\tau_2\chi_2$};
    
    \draw[couple] (tau3.east) -- (chi3.west);
    \node[linelabel] at ($(tau3.east)!0.5!(chi3.west)$) {$\tau_3\chi_3$};
    
    \draw[couple] (tau4.east) -- (chi4.west);
    \node[linelabel] at ($(tau4.east)!0.5!(chi4.west)$) {$\tau_4\chi_4$};
    
    \draw[pair] (chi1.east) -- (sig1.west);
    \node[linelabel] at ($(chi1.east)!0.5!(sig1.west)$) {$\sigma_1\chi_1$};
    
    \draw[pair] (chi2.east) -- (sig2.west);
    \node[linelabel] at ($(chi2.east)!0.5!(sig2.west)$) {$\sigma_2\chi_2$};
    
    \draw[pair] (chi3.east) -- (sig3.west);
    \node[linelabel] at ($(chi3.east)!0.5!(sig3.west)$) {$\sigma_3\chi_3$};
    
    \draw[pair] (chi4.east) -- (sig4.west);
    \node[linelabel] at ($(chi4.east)!0.5!(sig4.west)$) {$\sigma_4\chi_4$};
    
    \node[inner sep=0pt, fit=(chi1)(chi4)(sig1)(sig4)] (gaussfit) {};
    
    \node[dashedbox, inner xsep=4pt, inner ysep=7pt,
          fit=(tau1lbl)(tau4lbl)(tau1)(tau4)(chi1)(chi4)] (twocolbox) {};
    
    \node[font=\small, align=center, anchor=south] at ($(twocolbox.north)+(0,0)$) {%
    Two-color SSYK\\
    $\displaystyle H^{(2)}_{\mathrm{SSYK}}\propto \sum_{j=1}^{n_2}\tau_j\,\chi_j$
    };
    
    \draw[brace] ($(gaussfit.north east)+(0.2,0)$) -- ($(gaussfit.south east)+(0.2,0)$)
      node[midway, xshift=2cm, align=center] {%
    Gaussian state\\
    $\displaystyle \rho_0 \propto \prod_{j=1}^{n_2}\bigl(\mathbb{I}+ i\sigma_j\chi_j\bigr)$
    };

    \coordinate (Astart) at ($(sig4.south west)+(-0.2,-0.2)$);
    \coordinate (Aend)   at ($(tau4.south east)+( 0.3,-0.2)$);

    \draw[unitarrow, bend left=30]
      (Astart) to
      node[midway, below, fill=white, inner sep=1pt] {Hamiltonian evolution: $\displaystyle H = i\zeta = i\sum_{j=1}^{n_2} \tau_j \sigma_j$}
      (Aend);
    
    \end{tikzpicture}
    \caption{Visualization of the Hastings--O'Donnell algorithm on sparse SYK (\cref{alg:generalized_HO}).}
    \label{fig:HO_algorithm}
\end{figure}

We prove the following lemma, which shows the algorithm also works for the sparse two-color SYK model with sufficiently large $p$:
\begin{lemma}\label{lemma:ho-energy}
    For any $p \geq \Omega(\log n/n)$, there exist positive constants $\theta, C \in \R$ such that $\Tr(\rho_\theta H_{SSYK}^{(2)}) \geq C\sqrt{n}$ with probability at least $1-e^{-\Omega(n)}$ over the draw of $H_{SSYK}^{(2)}$. 
\end{lemma}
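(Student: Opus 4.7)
The plan is to mirror the Hastings--O'Donnell (HO) strategy, working in the Heisenberg picture with the expansion $\Tr(\rho_\theta H^{(2)}_{SSYK}) = \Tr(\rho_0\, e^{\theta\zeta} H^{(2)}_{SSYK}\, e^{-\theta\zeta}) = \sum_{k\geq 0}\frac{\theta^k}{k!}\Tr(\rho_0\, \mathrm{ad}_\zeta^k H^{(2)}_{SSYK})$, where $\mathrm{ad}_\zeta(X) = [\zeta, X]$. The zeroth-order term vanishes since each $\tau_j$ is odd in the $\phi$ Majoranas while $\rho_0$ is maximally mixed on the $\phi$ register, so the goal is to pick a small universal constant $\theta$ such that the linear-in-$\theta$ term produces $\Omega(\sqrt{n})$ energy and the higher-order tail is strictly smaller in absolute value.

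For the linear term, a direct commutator computation together with $\Tr(\rho_0\, \sigma_k \chi_j) = -i\delta_{jk}$ reduces $\Tr(\rho_0\,[\zeta, H^{(2)}_{SSYK}])$ to $-\tfrac{2}{\sqrt{n_2}}\sum_j \mu_j$, where $\mu_j := \binom{n_1}{3}^{-1} p^{-1} \sum_S J_{S,j}^2 X_{S,j}$ is the normalized fermionic trace of $\tau_j^2$. Each $\mu_j$ has mean $1$ and is a sum of independent subexponential random variables, so Bernstein's inequality yields $\mu_j = 1 + o(1)$ with probability $1 - e^{-\Omega(pn^3)}$; a union bound over $j\in[n_2]$ combined with $p\geq \Omega(\log n/n)$ gives $\sum_j \mu_j = (1+o(1))\, n_2$ with probability $1-e^{-\Omega(n)}$, so choosing $\theta$ negative contributes $\Omega(\sqrt{n})$ to the energy.

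For the $k \geq 2$ terms I will follow the HO term-by-term analysis, which controls each $\Tr(\rho_0\, \mathrm{ad}_\zeta^k H^{(2)}_{SSYK})$ by computing normalized traces of iterated commutators as polynomial combinations of $\mu_j$-like diagonal quantities and off-diagonal fluctuations such as $\{\tau_j, \tau_k\}$ and $\tau_j^2 - \mu_j \mathbb{I}$. All of these admit matrix Bernstein concentration estimates analogous to the dense case in the $2^{n_1/2}$-dimensional Weyl--Brauer representation, the only change being the insertion of $p^{-1}$ factors in the coefficients and Bernoulli masks $X_{S,j}$ in the sums. The threshold $p \geq \Omega(\log n/n)$ is precisely the condition that makes each pair of Majoranas $\phi_a, \phi_b$ participate in $\omega(\log n)$ active terms of $\tau_j$ on average (there are $n_1 - 2 \approx n$ such terms in total, and their Bernoulli count must concentrate), so the sparse auxiliary norms recover their dense scaling. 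Under this condition the Taylor tail is $O(\theta^2 \sqrt{n})$, a sufficiently small constant $\theta$ makes the linear term dominate, and a union bound over $j \in [n_2]$ concludes $\Tr(\rho_\theta H^{(2)}_{SSYK}) \geq C\sqrt{n}$ with probability $1 - e^{-\Omega(n)}$. The main obstacle is thus the careful propagation of these sparse matrix-concentration bounds through the higher-commutator analysis: first-order concentration alone would permit $p$ as small as $\log n/n^2$, and the stated $\log n/n$ threshold is driven entirely by the higher-order terms, where the Bernoulli masks must be controlled uniformly over $j, k$ while preserving the dense scaling of each auxiliary operator norm.
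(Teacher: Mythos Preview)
Your treatment of the zeroth- and first-order terms matches the paper's: the single commutator reduces to a normalized sum of chi-squared--type quantities $\mu_j$, and Bernstein (the paper uses Laurent--Massart, equivalently) gives concentration with probability $1-e^{-\Omega(pn^3)}$, comfortably within the regime $p\geq\Omega(\log n/n)$. Up to an immaterial sign convention in $\theta$, this part is essentially identical.

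Where you diverge is in the higher-order tail, and here there is a genuine gap. You describe ``the HO term-by-term analysis, which controls each $\Tr(\rho_0\,\mathrm{ad}_\zeta^k H^{(2)}_{SSYK})$'' via matrix Bernstein on auxiliary operators like $\{\tau_j,\tau_k\}$ and $\tau_j^2-\mu_j\mathbb{I}$. Neither Hastings--O'Donnell nor this paper proceeds that way: both use the BCH integral-remainder form to truncate at second order, reducing the entire tail to the single quantity $\bigl\|[\zeta,[\zeta,H^{(2)}_{SSYK}]]\bigr\|_{\mathrm{op}}$, not a sequence of traces indexed by $k$. The paper then bounds this operator norm by the \emph{moment method}---controlling the $2n$-th moment of a scalarized degree-$3$ polynomial $A(1)$ in the variables $X_{S,j}J_{S,j}$---via a decoupling step and a five-way case split $D_0,\ldots,D_4$ according to coincidences among the three index sets. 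The $p\geq\Omega(\log n/n)$ threshold emerges specifically from the $D_0$ analysis, where one needs $\exp(-\Omega(pn^2))\leq n^{-\Omega(n)}$ to suppress a tail event in which a binomial count over \emph{pairs} of $\phi$-indices exceeds its mean; this is roughly the phenomenon you allude to, but it enters through moment bounds on a cubic polynomial rather than through matrix Bernstein on $\{\tau_j,\tau_k\}$. Your proposed route is therefore not ``analogous to the dense case'' (the dense proof already uses the moment method, not Bernstein), and controlling an infinite family of iterated-commutator traces uniformly in $k$ with high probability would require substantially new work beyond what you have sketched.
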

\begin{proof}
Following the high-level idea from \cite{hastings2023optimizing}, we calculate the energy of $\rho_\theta$ on $H_{SSYK}^{(2)}$.
\begin{equation}
    \Tr\left(\rho_\theta H_{SSYK}^{(2)}\right) = \Tr\left(e^{-\theta\zeta}\rho_0e^{+\theta\zeta} H_{SSYK}^{(2)}\right) = \Tr\left(\rho_0 \mathrm{Ad}_{e^{+\theta\zeta}}(H_{SSYK}^{(2)})\right)
\end{equation}
Using the Baker--Campbell--Hausdorff formula,
\begin{equation}
    \mathrm{Ad}_{e^{+\theta\zeta}}(H_{SSYK}^{(2)}) = H_{SSYK}^{(2)} + \theta[\zeta, H_{SSYK}^{(2)}] + \theta^2\int_{0}^{1} (1-x) \mathrm{Ad}_{e^{+x\theta\zeta}}\left([\zeta, [\zeta, (H_{SSYK}^{(2)})]]\right)\,dx,
\end{equation}
and thus
\begin{align*}
    \Tr\left(\rho_\theta H_{SSYK}^{(2)}\right) &= \Tr\left(H_{SSYK}^{(2)}\right) + \theta\Tr\left(\rho_0 [\zeta, H_{SSYK}^{(2)}]\right) + \\&\qquad\theta^2\int_{0}^{1} (1-x) \Tr\left(\rho_0\mathrm{Ad}_{e^{+x\theta\zeta}}\left([\zeta, [\zeta, (H_{SSYK}^{(2)})]]\right)\right)\,dx\\ &= \theta\Tr\left(\rho_0 [\zeta, H_{SSYK}^{(2)}]\right) + \theta^2\int_{0}^{1} (1-x) \Tr\left(\rho_0[\zeta, [\zeta, (H_{SSYK}^{(2)})]]\right)\,dx.
\end{align*}
We finally get
\begin{equation}\label{eqn:main-ho}
    \Tr\left(\rho_\theta H_{SSYK}^{(2)}\right) \geq \theta\Tr\left(\rho_0 [\zeta, H_{SSYK}^{(2)}]\right) - \theta^2 \norm{[\zeta, [\zeta, (H_{SSYK}^{(2)})]]}{\mathrm{op}};
\end{equation}
therefore we wish to lower bound the right-hand side of \cref{eqn:main-ho} by $\Omega(\sqrt{n})$ for some $\theta$. We show $\Tr(\rho_0 [\zeta, H_{SSYK}^{(2)}]) \geq C_1\sqrt{n}$ for some constant $C_1>0$ in \cref{lemma:single-commutator} and $\norm{[\zeta, [\zeta, (H_{SSYK}^{(2)})]]}{\mathrm{op}} \leq C_2 \sqrt{n}$ for some constant $C_2>0$ in \cref{lemma:double-commutator}, both with probability at least $1-e^{-\Omega(n)}$ over the draw of $H_{SSYK}^{(2)}$. Together this shows that there exists a constant $\theta$ such that $\Tr(\rho_\theta H_{SSYK}^{(2)}) \geq C\sqrt{n}$.
\end{proof}

\paragraph{Single-commutator term.}
The following lemma lower-bounds the single-commutator term in \cref{eqn:main-ho}:
\begin{lemma}\label{lemma:single-commutator}
    For all $p \geq \Omega(1/n^2)$, $\Tr(\rho_0 [\zeta, H_{SSYK}^{(2)}]) \geq C\sqrt{n}$ with probability at least $1-e^{-\Omega(n)}$ over the draw of $H_{SSYK}^{(2)}$. 
\end{lemma}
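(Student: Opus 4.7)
The plan is to compute $\Tr(\rho_0[\zeta, H^{(2)}_{SSYK}])$ in closed form, recognize it as a rescaled sum of $\Theta(n^4)$ independent chi-squared--times--Bernoulli random variables, and apply a Bernstein-type tail bound. The argument is essentially algebraic with a standard concentration estimate at the end.

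First, I would compute the commutator. Since each $\tau_j$ is degree three in the $\phi$ Majoranas and the $\phi,\sigma,\chi$ Majoranas all mutually anticommute, $\sigma_j\tau_k=-\tau_k\sigma_j$ and $\chi_k\tau_j=-\tau_j\chi_k$. A short manipulation gives
\begin{equation}
    [\tau_j\sigma_j,\tau_k\chi_k]=-\{\tau_j,\tau_k\}\,\sigma_j\chi_k,
\end{equation}
and therefore
\begin{equation}
    [\zeta,H^{(2)}_{SSYK}]=-\frac{i}{\sqrt{n_2}}\sum_{j,k=1}^{n_2}\{\tau_j,\tau_k\}\,\sigma_j\chi_k.
\end{equation}

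Next, I would take the trace against $\rho_0$. The state $\rho_0$ factorizes as the maximally entangled state on the $(\sigma,\chi)$ modes tensored with the maximally mixed state on the $\phi$ modes, and in particular $\Tr_{\sigma\chi}\!\left(\rho_0^{(\sigma\chi)}\sigma_j\chi_k\right)=-i\,\delta_{jk}$, collapsing the double sum to its diagonal. On the diagonal $\{\tau_j,\tau_j\}=2\tau_j^2$, and expanding $\tau_j^2$ in Majorana monomials only the diagonal $S=T$ terms survive the $\phi$-trace (with $\phi_S^2=-\mathbb{I}$ for $|S|=3$). Collecting signs gives
\begin{equation}
    \Tr\!\left(\rho_0[\zeta,H^{(2)}_{SSYK}]\right)=-\frac{2}{\sqrt{n_2}}\binom{n_1}{3}^{-1}p^{-1}\sum_{j=1}^{n_2}\sum_{\substack{S\subset[n_1]\\|S|=3}}J_{S,j}^2X_{S,j}.
\end{equation}
The overall sign is a matter of convention; \Cref{lemma:ho-energy} absorbs it into the sign of $\theta$, so it is the magnitude that matters.

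Finally, I would establish concentration. The quantity $Y\triangleq\sum_{S,j}J_{S,j}^2X_{S,j}$ is a sum of $N=n_2\binom{n_1}{3}=\Theta(n^4)$ i.i.d.\ non-negative random variables, each with mean $p$, variance at most $3p$, and bounded subexponential norm (dominated by $J_{S,j}^2\sim\chi_1^2$). By Bernstein's inequality for sums of subexponential variables,
\begin{equation}
    \Pr\!\left[Y\leq\tfrac{1}{2}Np\right]\leq\exp(-\Omega(Np)),
\end{equation}
and under the hypothesis $p\geq\Omega(1/n^2)$ we have $Np=\Omega(n^2)$, giving probability $1-e^{-\Omega(n^2)}$, comfortably beyond the required $1-e^{-\Omega(n)}$. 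On this event
\begin{equation}
    \bigl|\Tr(\rho_0[\zeta,H^{(2)}_{SSYK}])\bigr|\geq\frac{2}{\sqrt{n_2}}\binom{n_1}{3}^{-1}p^{-1}\cdot\frac{Np}{2}=\sqrt{n_2}=\Omega(\sqrt{n}),
\end{equation}
which is the claimed bound. The main subtlety is that $J_{S,j}^2$ is unbounded, so Hoeffding fails and one must invoke Bernstein (or an explicit mgf computation using $\E[e^{-tJ^2X}]=1-p+p/\sqrt{1+2t}$); but with $\Omega(n^2)$ worth of expected mass the slack is enormous, so essentially any reasonable tail bound suffices.
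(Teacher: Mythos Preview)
Your proof is correct and arrives at the same closed form for $\Tr(\rho_0[\zeta,H^{(2)}_{SSYK}])$ as the paper (up to an overall sign, which you correctly observe is absorbed into the choice of $\theta$ in \Cref{lemma:ho-energy}). The algebraic route differs cosmetically---you compute the full commutator $[\tau_j\sigma_j,\tau_k\chi_k]=-\{\tau_j,\tau_k\}\sigma_j\chi_k$ first and then trace, whereas the paper traces step by step---but the content is identical.

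The one genuine difference is in the concentration step. The paper conditions on the Bernoulli variables, so that for each fixed $j$ the inner sum $\sum_S X_{S,j}J_{S,j}^2$ becomes a chi-squared random variable with $\Theta(pn^3)$ degrees of freedom; it then applies the Laurent--Massart lower tail bound per $j$ and union bounds over $j\in[n_2]$, yielding probability $1-e^{-\Omega(pn^3)}=1-e^{-\Omega(n)}$. You instead treat the entire double sum as a single sum of $N=\Theta(n^4)$ i.i.d.\ subexponential variables $J_{S,j}^2X_{S,j}$ and apply Bernstein once, obtaining $1-e^{-\Omega(Np)}=1-e^{-\Omega(n^2)}$. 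Your route is a bit cleaner (no conditioning, no union bound) and yields a nominally stronger tail; the paper's route has the minor advantage of using only the textbook chi-squared bound rather than invoking subexponential Bernstein. Either is perfectly adequate here.
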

\begin{proof}
We begin by directly calculating out the trace term similar to the dense SYK-4 case.
\begin{align*}
    \Tr(\rho_0 [\zeta, H_{SSYK}^{(2)}]) &= \frac{i}{\sqrt{n_2}} \sum_{j=1}^{n_2} \Tr(\rho_0 [\zeta, \tau_j \chi_j]) \\ &= \frac{i}{\sqrt{n_2}} \sum_{j,k=1}^{n_2} \Tr(\rho_0 [\tau_k \sigma_k, \tau_j \chi_j]) \\ &= \frac{i}{\sqrt{n_2}} \sum_{j=1}^{n_2} \Tr(\rho_0 [\tau_j \sigma_j, \tau_j \chi_j]) \\ &= \frac{2i}{\sqrt{n_2}} \sum_{j=1}^{n_2} \Tr(\rho_0 \sigma_j \chi_j \tau_j^2) \\ &= \frac{2}{\sqrt{n_2}}\cdot 2^{-(n_2 + n_1/2)} \sum_{j=1}^{n_2} \Tr(\tau_j^2)
\end{align*}
From here our proof differs from the dense SYK-4 case, and we use our new definition of $\{\tau_j\}_j$ to get
\begin{align*}
   \frac{2}{\sqrt{n_2}}  \cdot 2^{-(n_2 + n_1/2)}\sum_{j=1}^{n_2} \Tr(\tau_j^2) &=  \frac{2}{\sqrt{n_2}} \cdot 2^{-(n_2 + n_1/2)}\sum_{j=1}^{n_2} \binom{n_1}{3}^{-1} p^{-1} \Tr\left((\sum_{\substack{S\subseteq[n_1]\\|S|=3}} X_{S} J_{S, j} \phi_S)^2\right)\\
    &= \frac{2}{\sqrt{n_2}} \binom{n_1}{3}^{-1} p^{-1} \sum_{j=1}^{n_2} \sum_{\substack{S\subseteq[n_1]\\|S|=3}} X_{S} J_{S, j}^2
\end{align*}
Conditioned on the $X_S$, $\sum_S X_{S} J_{S, j}^2$ is a chi-squared random variable with degrees of freedom $d$, where $d \leq pn_1^3$ with high probability over the draw of the Binomial random variables $X_S$. The Laurent--Massart \cite{laurent-massart} upper-bound for chi-squared random variables gives:
\begin{equation}
    \Pr(\chi^2 \leq k - 2\sqrt{kx}) \leq \exp{(-x)},
\end{equation}
where $k$ is the degrees of freedom of $\chi^2$. Plugging in $k = pn_1^3$ and $x = k/16$ gives:
\begin{equation}
    \Pr\left[\sum_{\substack{S\subseteq[n_1]\\|S|=3}} X_{S} J_{S, j}^2 \leq \frac{1}{2}pn_1^3\right] \leq \Pr[d \leq pn_1^3] \Pr\left[\sum_{i = 1}^{pn_1^3} J_{S_i, j}^2 \leq \frac{1}{2}pn_1^3\right] +\exp\left(-\operatorname{\Omega}\left(n\right)\right)\leq \exp{(-\Omega(n))}
\end{equation}
when $p = \Omega(1/n^2)$ and $n_1=\Theta(n)$. Applying a union bound over the $j$, we have:
\begin{equation}
    \Tr(\rho_0 [\zeta, H_{SSYK}^{(2)}(p)]) = \frac{2}{\sqrt{n_2}}\binom{n_1}{3}^{-1}p^{-1}\sum_{j=1}^{n_2}  \sum_{\substack{S\subseteq[n_1]\\|S|=3}} X_{S} J_{S, j}^2 \geq C\sqrt{n}
\end{equation}
with probability exponentially close to $1$ assuming $p = \Omega(1/n^2)$ and $n_2 = \Theta(n)$.
\end{proof}

\paragraph{Double-commutator term.}
The following lemma upper-bounds the double-commutator term in \cref{eqn:main-ho}:
\begin{lemma}\label{lemma:double-commutator}
    For all $p \geq \Omega(\log n/n)$, $\norm{[\zeta, [\zeta, (H_{SSYK}^{(2)})]]}{\mathrm{op}} \leq C_2 \sqrt{n}$ with probability at least $1-e^{-\Omega(n)}$ over the draw of $H_{SSYK}^{(2)}$. 
\end{lemma}
\begin{proof}
We define $J_{S, j}$ as a mean-zero Gaussian with variance 1 if $S$ has all distinct elements, and variance 0 otherwise. A direct calculation shown in \cite{Herasymenko_2023} gives the value of $[\zeta, [\zeta, H_{SSYK}^{(2)}]]$ as
\begin{equation}
    \begin{split}
        A \triangleq \frac{1}{\sqrt{n_2}}\binom{n_1}{3}^{-3/2}p^{-3/2}\sum_{j, k, \ell=1}^{n_2}\sum_{\substack{S, S', S'' \subset [n_1]\\|S|,|S'|,|S''|=3}} &f(S,S',S'',j,k,\ell) (X_{S, j} J_{S, j})(X_{S', k} J_{S', k}) \\
        &\times (J_{S'', \ell} X_{S'', \ell})\Big(\phi_S\sigma_j\phi_{S'}\sigma_k\phi_{S''}\chi_\ell\Big)
    \end{split}
\end{equation}
where $f(S,S',S'',j,k,\ell) \triangleq 1$ if $(|S''\cap S| \text{ is odd})\wedge(|S'\cap (S''\triangle S)|+\delta_{k,\ell} \text{ is odd})$ and 0 otherwise. Thus $[\zeta, [\zeta, H_{SSYK}^{(2)}]]$ is a polynomial over the anticommuting characters $\{\phi_j\}_j, \{\sigma_j\}_j, \{\chi_j\}_j$, which we denote as $A$. We show that it is sufficient to bound the $2n^{th}$-moment of this polynomial when the anticommuting characters are replaced with 1, which we denote as $A(1)$. 

Assume the following lemma (which we prove in \cref{sec:moment-proofs}),
\begin{lemma} \label{lemma:moment_bound}
    Let 
    \begin{align*}
         A(1) \triangleq \frac{1}{\sqrt{n_2}}\binom{n_1}{3}^{-3/2}p^{-3/2}\sum_{j, k, \ell=1}^{n_2}\sum_{\substack{S, S', S'' \subset [n_1]\\|S|,|S'|,|S''|=3}} &f(S,S',S'',j,k,\ell) (X_{S, j} J_{S, j})\\&\times (X_{S', k} J_{S', k}) (X_{S'', \ell} J_{S'', \ell}),
    \end{align*}
    where $f(S,S',S'',j,k,\ell) \triangleq 1$ if $(|S''\cap S| \text{ is odd})\wedge(|S'\cap (S''\triangle S)|+\delta_{k,\ell} \text{ is odd})$ and 0 otherwise. We have the following moment bound:
    \begin{equation} \label{gaussian_bound}
       \norm{A(1)}{2n} \leq C\sqrt{n}.
    \end{equation}  
\end{lemma}
We prove that \cref{lemma:moment_bound} implies \cref{lemma:double-commutator}, following the moment method techniques outlined in \cite{hastings2023optimizing, Herasymenko_2023}. By Jensen's inequality we have
\begin{equation}
    \E\left[\norm{A}{\mathrm{op}}\right]\leq  \E\left[\norm{A}{\mathrm{op}}^{2n}\right]^{1/2n}
\end{equation}
and
\begin{equation}
   \E\left[\norm{A}{\mathrm{op}}^{2n}\right]^{1/2n} \leq \E\left[\Tr\left(A^{2n}\right)\right]^{1/2n}\leq 
   \left((2^{n_2 + n_1/2} \E\left[A(1)^{2n}\right]\right)^{1/2n}
\end{equation}
where the last inequality is due to the fact that all Majoranas have operator norm bounded by 1, therefore the trace of a product of Majoranas is at most their dimension $2^{n_2 + n_1/2} \leq 2^{2n}$. 

Therefore, we get
\begin{equation}
     \E\left[\norm{A}{\mathrm{op}}\right] \leq 2 \norm{A(1)}{2n} \leq 2C\sqrt{n}
\end{equation}
for some constant $C$, assuming \cref{lemma:moment_bound}. By Markov's inequality,
\begin{equation}
    \Pr\left(\norm{A}{\mathrm{op}} \geq 4C\sqrt{n}\right) \leq \frac{\E\left[\norm{A}{\mathrm{op}}^{2n}\right]}{(4C\sqrt{n})^{2n}} \leq \left(\frac{1}{2}\right)^{2n},
\end{equation}
which proves \cref{lemma:double-commutator}.
\end{proof}

\subsection{Proof of main theorem}
We complete the proof of \cref{thm:sparse-ho}:
\begin{proof}[Proof of \cref{thm:sparse-ho}]
    In \cref{lemma:two-color-reduction}, we show that finding a $\Omega(\sqrt{n})$ energy state of sparse SYK-4 reduces to finding a $\Omega(\sqrt{n})$ energy state of two-color sparse SYK-4. We show that the Hastings--O'Donnell quantum algorithm outputs a $\Omega(\sqrt{n})$ energy state of the two-color sparse SYK-4 model when $p \geq \Omega(\log n/n)$ with probability at least $1-e^{-\Omega(n)}$ over the draw of the Hamiltonian in \cref{lemma:ho-energy}. Together, this implies that Hastings--O'Donnell achieves $\Omega(\sqrt{n})$ energy on the sparse SYK model, as desired.
\end{proof}

\section*{Acknowledgments}
This work was primarily done while M.D. was visiting Caltech through the Summer Undergraduate Research Fellowship (SURF), funded in part by the Institute for Quantum Information and Matter at Caltech, an NSF Physics Frontiers Center (PHY-2317110). They would like to thank John Preskill and the Caltech SURF program for their continued and invaluable support. 

M.D. and R.K. acknowledge funding from the U.S. Department of Energy, Office of Science, National Quantum Information Science Research Center, Quantum Systems Accelerator (DE-SCL0000121). E.R.A. was funded in part by the Walter Burke Institute for Theoretical Physics at Caltech.

The authors would also like to thank multiple anonymous reviewers for their helpful comments and suggestions.

\newpage
\printbibliography
\newpage
\appendix
\section{Improved maximum eigenvalue upper bound}\label[appendix]{appendix:upper-bound}
In this section we show a stronger upper bound on the maximum eigenvalue of sparse SYK Hamiltonians. \cite{hastings2023optimizing} showed that the maximum eigenvalue of the SYK-4 model can be upper-bounded by $O(\sqrt{n})$ with probability exponentially close to 1 using a straightforward application of the matrix Chernoff bound, given as follows:
\begin{proposition}[Theorem 1.2 of \cite{Tropp_2011}]
   Consider a finite sequence $\{A_k\}$ of fixed, self-adjoint matrices with dimension $d$, and let $J_k$ be a finite sequence of independent standard
normal random variables. Then, for all $t \geq 0$,
\begin{equation}
    \Pr\left(\lambda_{\mathrm{max}}\left(\sum_k J_k A_k\right) \geq t\right) \leq d \cdot e^{-\frac{t^2}{2\sigma^2}}
\end{equation}
where $\sigma^2 \triangleq \norm{\sum_k A_k^2}{\mathrm{op}}$. 
\end{proposition}

We reproduce the proof of \cite{hastings2023optimizing} for the dense SYK model for completeness. Since $\gamma_I$ in the Weyl--Brauer representation is a $2^{n}$ dimensional matrix and $\norm{\sum_I \gamma_I^2}{\mathrm{op}} = \norm{\sum_I \mathbb{I}}{\mathrm{op}} = \binom{2n}{4}$, we have:
\begin{align*}
    \Pr\left(\lambda_{\mathrm{max}}(H_{SYK}) \geq C\sqrt{n}\right) &\leq  \Pr\left(\lambda_{\mathrm{max}}\left(\sum_I J_I \gamma_I\right) \geq Cn^{5/2}\right) \\ &\leq 2^{n} \cdot e^{-\Omega\left(\frac{n^5}{n^4}\right)}.
\end{align*}
For large enough constant $C$ the right hand side scales as $\exp(-\Omega(n))$.

We generalize this result to the sparse SYK model. Consider a distribution of sparse SYK Hamiltonians with sparsity parameter $p$, and consider the random vector $\Vec{x}$ of size $\binom{2n}{4}$ where each entry is a random Bernoulli variable with $\mathbb{E}\left[\Vec{x}_{i}\right]=p$. From a Chernoff bound (\cref{eqn:simple-binomial-chernoff}):
\begin{equation}
    \Pr\left(|\Vec{x}|_{0} \geq (1+\delta)\binom{2n}{4}p\right) \leq e^{-\delta^2\binom{2n}{4}p/2}
\end{equation}
for all $0\leq\delta\leq 1$. Thus we have that $\Pr\left(|\Vec{x}|_{0} \geq 2\E[|\Vec{x}|_{0}]\right)$ with probability $\exp(-\Omega(n))$. Conditioned on the event $|\Vec{x}|_{0} < 2\E[|\Vec{x}|_{0}]$, we have
\begin{align*}
    \Pr\left(\lambda_{\mathrm{max}}(H_{SSYK}) \geq C\sqrt{n}\right) &\leq  \Pr\left(\lambda_{\mathrm{max}}\left(\sum_{I\subseteq \mathrm{supp}(\Vec{x})} J_I \gamma_I\right) \geq Cn^{5/2}p^{1/2}\right) \\ &\leq 2^{n} \cdot e^{-\Omega\left(\frac{pn^5}{pn^4}\right)}.
\end{align*}
For large enough constant $C$ the right hand side scales as $\exp(-\Omega(n))$. Using a union bound, we can upper-bound the probability of the event that, (1) either the number of non-zero entries in our binomial vector is greater than twice the expected value, or (2) that the maximum eigenvalue is greater than $C\sqrt{n}$ given the number of non-zero entries in our binomial vector is less than twice the expected value, as less than $\exp(-\Omega(n))$. Formally, $\Pr\left(\lambda_{\mathrm{max}}(H_{SSYK}) \geq C\sqrt{n}\right)$ is upper-bounded by
\begin{align*}
     &\Pr\left(\left(\lambda_{\mathrm{max}}(H_{SSYK})\geq C\sqrt{n} \;\Big|\; |\Vec{x}|_{0} < 2\E[|\Vec{x}|_{0}]\right) \vee \left(|\Vec{x}|_{0} \geq 2\E[|\Vec{x}|_{0}\right)\right)\\&\leq \Pr\left(\lambda_{\mathrm{max}}(H_{SSYK})\geq C\sqrt{n} \;\Big|\; |\Vec{x}|_{0} < 2\E[|\Vec{x}|_{0}]\right) + \Pr\left(|\Vec{x}|_{0} \geq 2\E[|\Vec{x}|_{0}\right)
    \\&\leq \exp\left(-\Omega(n)\right) + \exp\left(-\Omega(n)\right) \leq \exp\left(-\Omega(n)\right).
\end{align*}

Thus we see that the maximum eigenvalue of the sparse SYK value is $O(\sqrt{n})$ for all $p$ with probability exponentially close to 1.
\section{Energy upper bounds on low-magic fermionic ansatz}\label[appendix]{appendix:entangled-gaussian}
The tensor network construction of \cite{hastings2023fieldtheory} was originally devised to prove upper bounds on the energy achievable by two different ansatz for the SYK model. However the proofs apply to any Hamiltonians which satisfy the norm bound of $O(k)^{2k}$ given in \cref{lemma:k-moment-syk-bound} when $p\geq\Omega(\log n/n^2)$.

\paragraph{Superpositions of Gaussian states.}

$\ket{\psi}=\sum_{j=1}^k \alpha_j \ket{\psi_j}$, where $\ket{\psi_j}$ are each (pure) fermionic Gaussian states and $\ket{\psi}$ is properly normalized. Hastings \cite{hastings2023fieldtheory} shows that the following is a corollary of \cref{lemma:k-moment-syk-bound}: If $\log\left(\sum_{j=1}^k \alpha_j\right) \leq o(n^{1/4})$, then $\bra{\psi}H\ket{\psi}\leq o(\sqrt{n})$. If the Gaussian states are all orthogonal and $k \leq \exp(o(n^{1/4}))$, we get that $\bra{\psi}H\ket{\psi}\leq o(\sqrt{n})$.

\paragraph{Lanczos algorithm.}

The Lanczos algorithm is an iterative method of computing the top eigenvectors/eigenvalues of Hermitian matrices \cite{lanczos1950iteration,paige1972computational}. It has been successfully applied to finding approximate ground states in quantum many-body physics \cite{weikert1996block,Chen_2011} by computing the vectors $H\ket{\psi}$, $H^2\ket{\psi}$, $\dots, H^k\ket{\psi}$ for a given Hamiltonian $H$, a given a starting state $\ket{\psi}$, and some finite number of iterations $k$. These vectors define a subspace called the Krylov subspace, and the ground state of the effective Hamiltonian $H$ on the Krylov subspace can be computed.

Hastings \cite{hastings2023fieldtheory} shows that \cref{lemma:k-moment-syk-bound} implies the following: if $\ket{\psi}$ is produced from the Lanczos algorithm with $k \leq o(n^{1/4})/\log n$ iterations with a Gaussian state as the initial starting state, then $\bra{\psi}H\ket{\psi}\leq o(\sqrt{n})$.
\section{Moment bound proofs}\label[appendix]{sec:moment-proofs}
We provide the proof of \cref{lemma:moment_bound} here:

\begin{proof}[Proof of \cref{lemma:moment_bound}]
We split the value of $A(1)=D_0 + D_1 + D_2 + D_3 + D_4$, where the terms have index sets satisfying:
\begin{itemize}
    \item $D_0$: $\left(S,j\right)\neq\left(S',k\right)\neq\left(S'',\ell\right)\neq\left(S,j\right)$.
    \item $D_1$: $\left(S,j\right)=\left(S',k\right)=\left(S'',\ell\right)$.
    \item $D_2$: $\left(S',k\right)=\left(S'',\ell\right)\neq\left(S,j\right)$.
    \item $D_3$: $\left(S,j\right)=\left(S'',\ell\right)\neq\left(S',k\right)$.
    \item $D_4$: $\left(S,j\right)=\left(S',k\right)\neq\left(S'',\ell\right)$.
\end{itemize}
We upper bound each term separately with $C\sqrt{n}$. By the triangle inequality, this upper bounds $\norm{A(1)}{2n}$ up to a constant. Throughout, we set $D_{i, \text{norm}} \triangleq D_i / \frac{Cp^{-3/2}}{n^5}$ for all $i=0,1,2,3,4$. 

We also utilize the following decoupling property of polynomials of independent random variables $f(Y_1, Y_2,\ldots,Y_n)$:
\begin{proposition}[Decoupling of random polynomials \cite{decoupled_gaussian}]\label{prop:decoup_lem}
    \begin{equation}
    \Pr\left[\left| \sum_{1 \leq i_1 \neq \ldots \neq i_k \leq n} f_{i_1\ldots i_k}(Y_{i_1},\ldots,Y_{i_k})\right| \geq t  \right] \leq C_k\Pr\left[\left| \sum_{1 \leq i_1 \neq \ldots \neq i_k \leq n} f_{i_1\ldots i_k}(Y_{i_1}^{(1)},\ldots,Y_{i_k}^{(k)})\right| \geq t  \right]
    \end{equation}
where $C_k$ is a constant depending on $k$ and each $\{Y_i^{(1)}\},\dots, \{Y_i^{(k)}\}$ are $k$ independent copies of the set $\{Y_i\}$ of random variables.
\end{proposition}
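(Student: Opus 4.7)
The plan is to exploit the triangle inequality $\|A(1)\|_{2n} \le \sum_{i=0}^{4} \|D_i\|_{2n}$ already set up in the excerpt, and bound each piece $D_i$ separately by $C\sqrt{n}$. The key observation is that each $D_i$ has a different degree of ``diagonal-ness'' in the three-tuple index, and this dictates how many independent factors survive and which moment inequality is natural.

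For the fully off-diagonal piece $D_0$, in which the triples $(S,j), (S',k), (S'',\ell)$ are pairwise distinct, the three Bernoulli-Gaussian factors sit in algebraically independent slots. I would apply the decoupling proposition at degree $k=3$ to replace the three factors by three independent copies $(X^{(1)}_{\cdot}J^{(1)}_{\cdot})$, $(X^{(2)}_{\cdot}J^{(2)}_{\cdot})$, $(X^{(3)}_{\cdot}J^{(3)}_{\cdot})$, at the cost of a universal constant $C_3$. Then expand $\mathbb{E}[\widetilde D_0^{2n}]$, condition on the Bernoulli variables, and perform Gaussian Wick contractions family by family. Each pairing contributes $\mathbb{E}[(XJ)^2] = p$, so the $p^{-3/2}$ normalization (to the $2n$-th power) is exactly balanced by $p^{3n}$ from the $3n$ pairings. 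What remains is a combinatorial count of tuple assignments compatible with both a perfect matching of $6n$ factors and the $f$-constraint on parities of set intersections; this mirrors the dense Hastings--O'Donnell two-color calculation and yields $\|D_0\|_{2n}\le C\sqrt{n}$.

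For the diagonal pieces I would use the Bernoulli identity $X^2=X$ to collapse repeated factors. The piece $D_1$ (all three triples coincident) reduces to $\tfrac{p^{-3/2}}{\sqrt{n_2}\binom{n_1}{3}^{3/2}}\sum_{(S,j)} X_{S,j}J_{S,j}^3$, a sum of independent mean-zero variables (since $\mathbb{E}[J^3]=0$); its $L_{2n}$-moment is controlled by Rosenthal's inequality applied to the sub-Weibull summands $X J^3$, using that there are only $\Theta(n^4)$ terms so the overall scale is vastly smaller than $\sqrt{n}$ when $p\ge\Omega(\log n/n)$. The pieces $D_2, D_3, D_4$ have exactly one coincident pair of triples, and after using $(XJ)^2 = XJ^2$ each summand factors as the product of a sparse chi-square type factor and an independent Bernoulli-Gaussian factor on the remaining free triple. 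Decomposing $XJ^2 = p + (XJ^2 - p)$ yields (a) a ``mean'' term which is a degree-2 polynomial in independent Bernoulli-Gaussians, handled by decoupling at $k=2$ plus Gaussian moments exactly as for $D_0$, and (b) a mean-zero fluctuation term handled identically. Because each of $D_2,D_3,D_4$ has one fewer free index slot than $D_0$, its natural scale is $O(1/\sqrt{n})$ smaller, so the $C\sqrt{n}$ bound follows a fortiori.

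The main obstacle is the combinatorial bookkeeping for $D_0$: one must show that the parity constraints imposed by $f$ (namely $|S''\cap S|$ odd and $|S'\cap(S''\triangle S)|+\delta_{k,\ell}$ odd) do not destroy the dominant ``tree-like'' contraction graphs that power the $\sqrt{n}$ scaling. Equivalently, one must verify that a constant fraction of the Wick-pairings admitting the correct graph topology also admit a valid triple-assignment under $f$, so that the leading-order count is only a constant smaller than in the dense case. Secondarily, the hypothesis $p\ge\Omega(\log n/n)$ must be used in the concentration step for the Bernoulli variables, where for higher-order Wick pairings the moment formula $\mathbb{E}[X^k]=p$ (rather than $p^{k/2}$) introduces extra factors of $p^{-1/2}$ per ``multiple collision'' that must be suppressed; the lower bound on $p$ is exactly what controls these sub-leading corrections at the $2n$-th moment scale.
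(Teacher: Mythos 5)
Your proposal does not address the statement you were asked to prove. The statement is the decoupling inequality itself: that for a multilinear form $\sum_{i_1\neq\cdots\neq i_k} f_{i_1\ldots i_k}(Y_{i_1},\ldots,Y_{i_k})$ in a single family $\{Y_i\}$, the tail probability is dominated (up to a constant $C_k$) by the tail of the \emph{decoupled} sum in which each argument slot is fed an independent copy of the family. What you have written instead is a proof sketch of \cref{lemma:moment_bound} --- the decomposition $A(1)=D_0+\cdots+D_4$, the Wick-contraction count for $D_0$, the collapse of the diagonal pieces via $X^2=X$, and so on. That argument \emph{invokes} the decoupling proposition (you explicitly say ``apply the decoupling proposition at degree $k=3$''), so it cannot serve as a proof of it; the logic is circular with respect to the assigned statement. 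In the paper this proposition is not proved at all but cited from the decoupling literature (de la Pe\~na--Montgomery-Smith type results), and its proof is of an entirely different character: one typically introduces independent uniform random labels assigning each index $i$ to one of the $k$ coordinate classes, conditions on the labels, observes that the conditional expectation of the decoupled sum reproduces a constant multiple of the coupled sum, and then transfers tail bounds via Jensen's inequality together with a Paley--Zygmund or hypercontractivity step to reverse the direction of the comparison. None of that machinery appears in your write-up.

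If the intent was to prove the moment bound lemma rather than the decoupling proposition, your sketch is broadly aligned with the paper's appendix (same $D_0,\ldots,D_4$ split, decoupling for $D_0$, hypercontractivity/Bonami for $D_1$, and a chi-squared-times-Gaussian reduction for $D_2,D_3,D_4$), though the paper handles the Bernoulli variables by conditioning on high-probability events $B_{\text{small}}$ bounding the binomial counts rather than by the $XJ^2=p+(XJ^2-p)$ centering you propose, and it does not need Rosenthal's inequality for $D_1$. But as a proof of \cref{prop:decoup_lem} as stated, the proposal contains no relevant content.
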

Previous works on optimizing SYK \cite{hastings2023optimizing, Herasymenko_2023} used this decoupling property when the $Y_i$ were Gaussian random variables. However, the property more generally applies when the $X_i$ are drawn independently from any distribution~\cite{decoupled_gaussian}.

\subsection{Bounding \texorpdfstring{$D_0$}{D0}}
Using the decoupling proposition \cref{prop:decoup_lem}, we can reduce to bounding the norm of:
\begin{equation}\label{eqn:ho-index-notation}
    \tilde{D}_0 \triangleq \frac{Cp^{-3/2}}{n^5} \sum_{r,s,t,u,w,y,z,a,b,d} (X_{rsta} J_{rsta})^{(1)} (X_{utwb} J_{utwb})^{(2)} (X_{uyzd} J_{uyzd})^{(3)} ,
\end{equation}
where the $(\cdot)^{\left(i\right)}$ superscripts denote independent (``decoupled'') random variables, the $X$ variables are independent $\Bernoulli{p}$ random variables, and the summation uses the index notation from \cite{hastings2023optimizing}: $r,s,t,u,w\in\left[n_1\right]$ and $a,b,d\in\left[n_2\right]$. In the above, the superscripts indicate that the random variables are independent if the superscript index is not equal. This expression is obtained by expanding the $S = (S_1, S_2, S_3)$ sets into individual indices, because $f(S,S',S'',j,k,l) = 1$ if $(|S''\cap S| \text{ is odd})\wedge(|S'\cap (S''\triangle S)|+\delta_{k,l} \text{ is odd})$ forces at least two pairs of the 12 indices to be identical in the non-zero terms of the summation. Thus, due to symmetry, we can arbitrarily fix two pairs of indices to be equivalent ($u$ and $t$ respectively) and sum over all remaining indices, and the set of terms in $D_0$ in our actual expression will be a subset of the summation in \cref{eqn:ho-index-notation}. This will only increase the $2n^{th}$ moment.

Let $B(f(n), p)$ denote independent binomial random variables $\Binomial{f(n), p}$, where $f(n)$ is some integer-valued polynomial of $n$. We have:
\begin{align*}
    \tilde{D}_0 &= \frac{Cp^{-3/2}}{n^5} \sum_{t, u} \sum_{rsa} (X_{rsta} J_{rsta})^{(1)} \sum_{wb} (X_{utwb} J_{utwb})^{(2)} \sum_{yzd} (X_{uyzd} J_{uyzd})^{(3)}  \\
    &= \frac{Cp^{-3/2}}{n^5} \sum_{tu} \sum_{rsa} (X_{rsta} J_{rsta})^{(1)} \sum_{wb} (X_{utwb} J_{utwb})^{(2)} \cdot B^{(3)}_u(C_3 n^3, p)^{1/2} \cdot L_u \\
    &= \frac{Cp^{-3/2}}{n^5} \sum_{tu} \sum_{rsa} (X_{rsta} J_{rsta})^{(1)} \cdot B^{(2)}_{tu}(C_2 n^2, p)^{1/2} \cdot K_{tu} \cdot B^{(3)}_u(C_3 n^3, p)^{1/2} \cdot L_u \\
    &= \frac{Cp^{-3/2}}{n^5} \sum_{tu} B^{(1)}_t(C_1 n^3, p)^{1/2} \cdot H_t \cdot B^{(2)}_{tu}(C_2 n^2, p)^{1/2}  \cdot K_{tu} \cdot B^{(3)}_{u}(C_3 n^3, p)^{1/2} \cdot L_u \\
    &= \frac{Cp^{-3/2}}{n^5} \sum_{tu} \left(B^{(1)}_t(C_1 n^3, p) \cdot B^{(2)}_{tu}(C_2 n^2, p) \cdot B^{(3)}_u(C_3 n^3, p)\right)^{1/2}  H_t K_{tu} L_u,
\end{align*}
where $H_t$, $K_{t,u}$, and $L_u$ are independent standard Gaussians; $B_t^{(1)}, B_{tu}^{(2)}, B_u^{(3)}$ denote independent random binomial variables; and $C_1,C_2,C_3>0$ are constants.

Let $B_{\text{small}}$ denote the event: $\forall t,u \ B^{(1)}_t(C_1 n^3, p) \leq (C_1+c)pn^3 \wedge B^{(3)}_u(C_3 n^3, p) \leq (C_3+c)pn^3$ for some sufficiently large constant $c$, and denote $B_{\text{large}}$ as its complement. We have:
\begin{equation}
    \norm{D_0}{2n} \leq \left(\E\left[\tilde{D}_0^{2n} \mid B_{\text{small}}\right] + \Pr(B_{\text{large}})\E\left[\tilde{D}_0^{2n} \mid B_{\text{large}}\right]\right)^{1/2n}.
\end{equation}

\paragraph{First term.}
We can directly use \cite{hastings2023optimizing} to bound the first term. As a reminder, we set $\tilde{D}_{0, \text{norm}} \triangleq \tilde{D}_0 / \frac{Cp^{-3/2}}{n^5}$. We have:
\begin{align*}
    \E[\tilde{D}^{2n}_{0, \text{norm}} \mid B_{\text{small}}] &\leq \left((C_1+c)pn^3 \cdot (C_3+c)pn^3\right)^{2n} \cdot \E\left[\left(\sum_{tu} B^{(2)}_{tu}(n^2,p)^{1/2} H_t K_{tu} L_u\right)^{2n}\right] \\
    &\leq (Cpn^3)^{2n} \cdot \mathlarger{\sum}_{b=1}^{n^2} \E\left[\left(\sum_{tu}b^{1/2} H_t K_{tu} L_u\right)^{2n}\right]\Pr(\max_{tu}B^{(2)}_{tu} = b) \\
    &\leq (Cpn^3)^{2n}  n^{3n} \cdot\mathlarger{\sum}_{b=1}^{n^2} b^{n} \Pr(\max_{tu}B^{(2)}_{tu} = b),
\end{align*}
where $\norm{\sum_{tu} H_t K_{tu} L_u}{2n}^{2n} \leq O(n^{3/2})$ is a bound due to \cite{hastings2023optimizing}. By the union bound, 
\begin{equation}
    \sum_{b=1}^{n^2} b^{n} \Pr(\max_{tu}B^{(2)}_{tu} = b)\leq Cn^2\sum_{b=1}^{n^2} b^{n} \Pr(B^{(2)}_{tu} = b)
\end{equation}
for some fixed $t,u$; the sum is just the $n^{th}$-moment of a $\Binomial{C_2 n^2, p}$ distribution. Using the general upper bound of the moments of $X \sim \Binomial{m, q}$:
\begin{equation}
    \E[X^c] \leq (mq)^c \exp{\left(\frac{c^2}{2mq}\right)},
\end{equation}
we get:
\begin{align*}
    (Cpn^3)^{2n} n^{3n} \cdot \mathlarger{\sum}_{b=1}^{n^2} b^{n} \Pr(\max_{tu}B^{(2)}_{tu} = b) &\leq (Cpn^3)^{2n}  n^{3n+2} \left(Cn^2p\right)^{n} \exp{\left(\frac{Cn^2}{2n^2p}\right)} \\
    &\leq (Cpn^3)^{2n}  n^{5n+2} p^{n} \exp{(C/2p)}\\
    &\leq (Cn^{11/2} p^{3/2})^{2n} \exp{(C/p)},
\end{align*}
with the last inequality holding for sufficiently large $n$. Since we wish to upper bound $\E[D^{2n}_{0, \text{norm}} \mid B_{\text{small}}]$ by $(C n^{11/2+O(1/n)} p^{3/2})^{2n}$, we must have $\exp{(1/p)} \leq C^{2n}$, or equivalently $p \geq \Omega(1/n)$.

\paragraph{Second term.}
Let $C_i^+\triangleq C_i+c$ for $i=1,2,3$. By a union bound and Chernoff bound on the binomial random variables (\cref{eqn:simple-binomial-chernoff}), $\Pr(B_{\text{large}})$ is bounded by
\begin{align*}
     &\Pr\left(\bigcup_{t,u}\left(B_t(C_1 n^3, p) > C_1^+pn^3\right)\lor \left(B_{tu}(C_2 n^2) > C_2^+pn^2\right) \lor \left(B_u(C_3 n^3, p) > C_3^+pn^3\right)\right)\\
    &\leq \mathlarger{\sum}_{t,u} \Pr\left(B_t(C_1 n^3, p) > C_1^+n^3\right) + \Pr\left(B_{tu}(C_2 n^2, p) > C_2^+pn^2\right) + \Pr\left(B_u(C_3 n^3, p) > C_3^+pn^3\right) \\
    &\leq Cn\exp\left(-\left(\frac{c^2}{c+2}\right)pn^3\right) + Cn^2\exp\left(-\left(\frac{c^2}{c+2}\right)pn^2\right) + Cn\exp\left(-\left(\frac{c^2}{c+2}\right)pn^3\right) \\
    &\leq Cn^2\exp\left(-\Omega(pn^2)\right).
\end{align*}
Thus
\begin{equation}
    \norm{D_0}{2n} \leq \left(\E\left[\tilde{D}_0^{2n} \mid B_{\text{small}}\right] + n^2\exp\left(-\Omega(pn^2)\right)\E\left[\tilde{D}_0^{2n} \mid B_{\text{large}}\right]\right)^{1/2n}
\end{equation}
Considering the worst case where each binomial random variable $X$ appearing in our expressions is $1$, we have that:
\begin{equation}
    \E\left[\tilde{D}_{0, \text{norm}}^{2n} \mid B_{\text{large}}\right] \leq (n^4)^{2n} \left(\sum_{tu} H_t K_{tu} L_u\right)^{2n} \leq n^{11n}.
\end{equation}
Taking into consideration the probability factor,
\begin{equation}
    n^2\exp(-\Omega(pn^2)) \E[\tilde{D}_{0, \text{norm}}^{2n} \mid B_{\text{large}}] \leq \exp(-\Omega(pn^2))\cdot n^{11n+2}.
\end{equation}
Having $p \geq \Omega(\frac{\log n}{n})$ and our constant factor $c$ large enough, we can make $\exp(-\Omega(pn^2)) = \exp(-12n\log n) = n^{-12n}$.

Putting the main term and tail term together, we have
\begin{equation}
    \norm{D_0}{2n} \leq \frac{Cp^{-3/2}}{n^5} \left(C^{2n} n^{11n}p^{3n} + C/n^{n-2}\right)^{1/2n} \leq C\sqrt{n}
\end{equation}
for sufficiently large $n$, as desired.

\subsection{Bounding \texorpdfstring{$D_1$}{D1}}
Recall:
\begin{equation}
    D_1 = C\frac{1}{\sqrt{n}}\binom{n}{3}^{-3/2}p^{-3/2} \sum_{j=1}^{n_2} \sum_{\substack{S\subset[n_1]\\|S|=3}} X_{S, j} J_{S,j}^3.
\end{equation}

To bound this, we can use hypercontractivity (Bonami's lemma \cite{Bonami1970}). $D_1$ is a degree-3 Gaussian polynomial with $B = \Binomial{O(n^4), p}$ terms. Let $B_{\text{small}}$ denote the event $B \leq (1+c)pn^4$ for some constant $c$, and $B_{\text{large}}$ as its complement. We have:
\begin{equation}
    \norm{D_1}{2n} \leq (2n-1)^{3/2} \norm{D_1}{2}
    \leq (2n-1)^{3/2} \left(\E[D_1^2 \mid B_{\text{small}}] + \Pr(B_{\text{large}}) \E[D_1^2 \mid B_{\text{large}}]\right)^{1/2}.
\end{equation}

\paragraph{First term.}
\begin{align*}
    \frac{n\binom{n}{3}^3 p^3}{C}\E[D_{1, \text{norm}}^2 \mid B_{\text{small}}]
    &\leq \E\left[\left(\sum_{i=1}^{(1+c)pn^4} J_i^3\right)^2\right] \\
    &= \E\left[\sum_{i=1}^{(1+c)pn^4} J_{i}^6\right] \\
    &\leq O(pn^4) \\
\end{align*}

\paragraph{Second term.}
By a Chernoff bound (\cref{eqn:simple-binomial-chernoff}) on the binomial random variable $B$,
\begin{align*}
    \Pr(B_{\text{large}}) &= \Pr\left(B > (1+c)pn^4\right) \\
    &\leq \exp\left(-\left(\frac{c^2}{c+2}\right)pn^4\right) \\
    &\leq \exp(-\Omega(pn^4)).
\end{align*}
As $B\leq n^4$ always, we have:
\begin{equation}
    \frac{n\binom{n}{3}^3 p^3}{C}\E\left[D_{1, \text{norm}}^2 \mid B_{\text{large}}\right] \leq \E\left[\left(\sum_{i=1}^{n^4} J_{i}^3\right)^2\right] \leq Cn^4.
\end{equation}
Putting the main term and tail term together, we have:
\begin{equation}
    \norm{D_1}{2n} \leq \frac{Cp^{-3/2}}{n^5} n^{3/2} \left(pn^4 + \exp(-\Omega(pn^4)\right)\cdot Cn^4)^{1/2} \leq O(p^{-1}n^{-3/2}).
\end{equation}
Having $p \geq \Omega(1/n^2)$ gives $\norm{D_1}{2n} \leq O(\sqrt{n})$ as desired.

\subsection{Bounding \texorpdfstring{$D_2, D_3, D_4$}{D2, D3, D4}}
Taking into account the index constraints from $f$, we have:
\begin{equation}
    D_2=C\frac{1}{\sqrt{n}}\binom{n}{3}^{-3/2}p^{-3/2} \sum_{j=1}^{n_2}\sum_{\substack{S,S'\subset[n_1]\\|S|=|S'|=3\\ S\neq S'}} X_{S,j}J_{S,j}^2 X_{S', j} J_{S', j}.
\end{equation}

After decoupling our Gaussian random variables, up to relabelings the norms of $D_2, D_3$ and $D_4$ terms are bounded in the same way when considering the SYK model with $q=4$, so we focus on $D_2$ without loss of generality.

Define $\{X'_{S,j}J'_{S,j}\}$ as a new independent set of random variables where the $X_{S,j}'$ are i.i.d.\ Bernoulli and the $J_{S,j}'$ are i.i.d.\ Gaussian. By Proposition~\ref{prop:decoup_lem}, we need only bound the norm of:
\begin{align*}
    \left\lVert\tilde{D}_2\right\rVert_{2n} &=  \left\lVert C\frac{1}{\sqrt{n}}\binom{n}{3}^{-3/2}p^{-3/2} \sum_{j, S\neq S'} X_{S,j}J_{S,j}^2 X'_{S', j} J'_{S', j}\right\rVert_{2n}\\
    &\leq\left\lVert Cn^{-5}p^{-3/2} \sum_{j=1}^{n_2} \sum_{\substack{S' \subset[n_1]\\|S'|=3}} X'_{S', j} J'_{S', j} \sum_{\substack{S \subset[n_1]\\|S|=3}} X_{S,j} J_{S,j}^2\right\rVert_{2n}\\
    &\leq\left\lVert Cn^{-5}p^{-3/2} \sum_{j=1}^{n_2} \chi_j \Big(\sum_{\substack{S' \subset[n_1]\\|S'|=3}} X'_{S', j} J'_{S', j}\Big)\right\rVert_{2n},
\end{align*}
where $\chi_j$ is a chi-squared random variable with $B_{1,j} = \Binomial{n^3, p}$ degrees of freedom. Conditioned on the $X_{S',j}$, $\sum_{S'} X_{S', j} J_{S', j}$ is also a sum of $B_{2,j}$ standard Gaussians, where $B_{2,j} = \Binomial{n^3, p}$. This culminates in:
\begin{equation}
    \norm{D_2}{2n} \leq Cn^{-5}p^{-3/2} \norm{\sum_{j=1}^{n_2} \chi_j B_{2,j}^{1/2} J''_j}{2n},
\end{equation}
where the $J_j''$ variables are i.i.d.\ standard Gaussians. We wish to show this value is bounded above by $C\sqrt{n}$, or equivalently that:
\begin{equation}
  \norm{\sum_{j=1}^{n_2} \chi_j B_{2,j}^{1/2} J''_j}{2n}^2 \triangleq \norm{\tilde{D}_{2, \text{norm}}}{2n}^2 \leq Cn^{11}p^{3}.
\end{equation}

Let $B_{\text{small}}$ denote the event $\left(\forall j \in [n_2], S, S' \ B_{1,j} \leq (1+c)pn^3 \wedge B_{2,j} \leq (1+c)pn^3\right)$ for some constant $c$, and $B_{\text{large}}$ as its complement. We have:
\begin{equation}
    \norm{D_{2, \text{norm}}}{2n} \leq \left(\E[\tilde{D}_{2, \text{norm}}^{2n} \mid B_{\text{small}}] + \Pr(B_{\text{large}})\E[\tilde{D}_{2, \text{norm}}^{2n} \mid B_{\text{large}}]\right)^{1/2n}.
\end{equation}

\paragraph{First term.}
We have:
\begin{equation}
    \E[\tilde{D}_{2, \text{norm}}^{2n} \mid B_{\text{small}}] = \norm{\sum_{j=1}^{n_2} \chi_j B_{2,j}^{1/2} J''_j}{2n}^{2n} \leq C^{2n}p^{n}n^{3n} \norm{\sum_{j=1}^{n_2} \chi_j J_j''}{2n}^{2n}.
\end{equation}

If we fix $\{\chi_j\}$, the expression $\sum_{j=1}^{n_2} \chi_j J_j'$ is a Gaussian with variance $\nu^2 \triangleq \sum_{j=1}^{n_2} \chi_j^2$ and $2n^{th}$-moment bounded by $C^n\nu^{2n}\cdot (2n)^n$. Thus it is sufficient to show that
\begin{equation}\label{eqn:nu_expectation}
        \E[\nu^{2n}] \leq O(p^2n^{7})^{n}
\end{equation}
to show that $\E[\tilde{D}_{2, \text{norm}}^{2n} \mid B_{\text{small}}] \leq C^{2n}n^{11n}p^{3n}$. The Laurent--Massart \cite{laurent-massart} lower-bound for chi-squared random variables gives:
\begin{equation}
    \Pr(\chi - k \geq 2\sqrt{kx} + 2x) \leq \exp{(-x)},
\end{equation}
where $k$ is the degrees of freedom of $\chi$. Plugging in $k=\left(1+c\right)pn^3$ and $x=cpn^3$ with $c\geq 1$ gives:
\begin{equation}
    \Pr(\chi_j \geq 8cpn^3) \leq \exp{(-cpn^3)}.
\end{equation}
Applying the union bound, we get (for sufficiently large $n$)
\begin{align*}
    \Pr(\nu \geq 8cpn^{7/2})
    &\leq \Pr\left(\bigcup_{j=1}^{n} \chi_j \geq 8cpn^3\right) \\
    &\leq \sum_{j=1}^n \Pr\left(\chi_j \geq 8cpn^3\right) \\
    &\leq n\cdot \exp{(-cpn^3)} \\
    &\leq \exp{\left(-\frac{1}{2}pn^3\right)}^c  \qquad \forall c\geq 1.
\end{align*}
Setting $\nu' \triangleq \nu / (8n^{7/2})$, we have
\begin{equation}
    \Pr(\nu' \geq cp) \leq \exp{\left(-\frac{1}{2}pn^3\right)}^c \qquad \forall c\geq 1.
\end{equation}
Finally to show the correctness of \cref{eqn:nu_expectation},
\begin{equation}
    \E[(\nu')^{2n}] \leq 1 + \sum_{j=1}^\infty 2^{(j+1)2n} \cdot \Pr\left(2^j \leq \nu' \leq 2^{j+1}\right) \leq 1 + \sum_{j=1}^\infty 2^{(j+1)2n} \cdot \exp{\left(-\frac{1}{2} pn^3\right)^{2^j}}.
\end{equation}
Assuming $p \geq C/n^2$, we have:
\begin{align*}
    \E[(\nu')^{2n}] &\leq 1 + \sum_{j=1}^\infty 2^{(j+1)2n} \cdot \exp{\left(-\frac{1}{2}  pn^3\right)}^{2^j} \\
    & \leq O(1),
\end{align*}
proving $\E[v^{2n}] \leq O(p^2n^{7})^n$ for large enough constants. This gives us $E[\tilde{D}_{2, \text{norm}}^{2n} \mid B_{\text{small}}] \leq C^{2n}p^{3n}n^{11n}$.

\paragraph{Second term.}
By a union bound and Chernoff bound (\cref{eqn:simple-binomial-chernoff}) on the binomial random variables,
\begin{align*}
    \Pr(B_{\text{large}}) &= \Pr\left(\bigcup_{j, S, S'} \left(B_{1, j} > (1+c)pn^3\right) \cup \left(B_{2,j} > (1+c)pn^3\right)\right) \\
    &\leq \sum_{j, S, S'} \Pr\left(B_{1, j} > (1+c)pn^3\right) + \Pr\left(B_{2,j} > (1+c)pn^3\right)\\
    &\leq n^7 \exp\left(-\left(\frac{c^2}{c+2}\right)pn^3\right) + n^7\exp\left(-\left(\frac{c^2}{c+2}\right)pn^3\right) \\
    &\leq n^7\exp(-\Omega(pn^3))
\end{align*}
Assume every binomial variable $B$ takes its maximum value. From the dense case~\cite{hastings2023optimizing}, we have:
\begin{equation}
    \E[\tilde{D}_{2, \text{norm}}^{2n} \mid B_{\text{large}}] \leq n^{11n}.
\end{equation}
Adding the probability factor,
\begin{equation}
    n^7\exp(-\Omega(pn^3)) \E[\tilde{D}_{2, \text{norm}}^{2n} \mid B_{\text{large}}] \leq \exp(-\Omega(pn^3))\cdot n^{11n + 7}.
\end{equation}
When $p \geq \Omega(\frac{\log n}{n^2})$ and our constant factor $c$ large enough, we can make the $\exp(-\Omega(pn^3))$ equal to $\exp(-12n\log n) = n^{-12n}$.

Putting the main term and tail term together, we have
\begin{equation}
    \norm{D_2}{2n} \leq \frac{Cp^{-3/2}}{n^5} (C^{2n} p^{3n}n^{11n} + C/n^{n-7})^{1/2n} \leq C\sqrt{n}
\end{equation}
for sufficiently large $n$, as desired.

\end{proof}

\end{document}